\definecolor{darkblue}{rgb}{0.0,0.0,0.3}
\theoremstyle{plain}
\numberwithin{equation}{section} 
\newtheorem{theorem}{Theorem}[section]   
\newtheorem{defin}[theorem]{Definition} 
\newtheorem{rem}[theorem]{Remark}
\newtheorem{eg}[theorem]{Example}
\newtheorem{lemma}[theorem]{Lemma}
\newtheorem{cor}[theorem]{Corollary}
\newenvironment{customthm}[1] 
  {\innercustomthm}
  {\endinnercustomthm}
\newcommand{\tr}{\mathrm{Tr\,}}
\newcommand{\ketbra}[1]{\ensuremath{\left|#1\right\rangle
\hspace{-3pt}\left\langle #1\right|}}
\begin{document}
\title[Useful variants and perturbations of CES and spans of UPB]{Useful variants and perturbations of completely entangled
subspaces and spans of unextendible product bases}
\author{Ritabrata Sengupta}
\address{Department of Mathematical Sciences, Indian Institute of
Science Education \& Research (IISER) Berhampur, Transit campus,
Berhampur 760 010, Ganjam, Odisha, India}
\email[Ritabrata Sengupta]{\href{mailto:rb@iiserbpr.ac.in}{rb@iiserbpr.ac.in}}

\author{Ajit Iqbal Singh}
\address{INSA Emeritus Scientist, Indian National Science Academy, Bahadur Shah Zafar Marg, New Delhi, 110 002, India}
\email[Ajit Iqbal Singh]{\href{mailto:ajitis@gmail.com}{ajitis@gmail.com}}

\begin{abstract}
Finite dimensional entanglement for pure states has been used extensively in quantum information theory. Depending  on the tensor product structure, even set of separable states can show non-intuitive characters. Two situations  are well studied in the literature, namely the unextendible product basis by Bennett et al [Phys. Rev. Lett. 82, 5385, (1999)], and completely entangled subspaces explicitly given by Parthasarathy in  [Proc. Indian Acad. Sci. Math. Sci. 114, 4 (2004)]. More recently Boyer, Liss, and Mor [Phys. Rev. A 95, 032308
(2017)]; Boyer and Mor [Preprints 2023080529, (2023)]; and Liss, Mor, and Winter [Lett. Math. Phys, 114, 86 (2024)] have  studied  spaces which have only finitely many pure product states. We carry this further and consider the problem of perturbing different spaces, such as the orthogonal complement of an unextendible product basis and also Parthasarathy's completely entangled spaces,  by taking linear spans with  specified product vectors. To this end, we develop methods and theory of variations and perturbations of the linear spans of certain unextendible product bases, their orthogonal complements, and also Parthasarathy's completely entangled sub-spaces.  Finally we give examples of perturbations with infinitely many pure product states. 

\smallskip
\vspace{3mm}

\noindent{\bf MSC Classification:} 81P15, 46C05, 46N50, 15A69, 
15A03, 81P16.\\
\vspace{2mm}
\newline{\bf Keywords:} Spans of unextendible product bases; quasi-completely entangled subspaces
(QCES); Parthasarathy space; perturbations of completely entangled subspaces by product vectors as QCES; infinitely many product vectors in double perturbations.
\vspace{5mm}
\newline In fond memory of our great teacher and source of inspiration Prof. K. R. Parthasarathy.
\end{abstract}

\maketitle
\section{Introduction, Basic Notation and Terminology}\label{s1}
Two well-known methods to construct spaces with no non-zero product vector, the so-called completely entangled subspaces $\mathcal{S}$ in finite-level quantum systems, are through orthogonal complements of  unextendible product bases, in short, UPB \cite{B2} and ideas from van der Monde determinants \cite{krp1,bhat,PhysRevA.87.064302,PhysRevA.90.062323}. We present similarities and contrasting situations for them by studying product vectors in the span $\mathcal{U}$ of UPB, and the span $\mathcal{F}$ of van der Monde vectors, and perturbing the corresponding orthogonal complement $\mathcal{S}$ by taking its span with product vectors in $\mathcal{U}$ and  $\mathcal{F}$ respectively. 

\par We first start with some basic notation and terminology. The state space of a quantum system is represented by a complex  separable Hilbert space $\mathcal{H}$. In this paper,  $\mathcal{H}$ is a finite dimensional complex Hilbert space, of dimension $d$ for some $d  \in \mathbb{N}$ usually with $d \ge 2$.  A unit vector $\ket{\psi} $ is a non-zero normalised vector in $\mathcal{H}$.  A \emph{state} acting on  $\mathcal{H} $ is a positive semidefinite operator $\rho$ with $\tr \rho=1$, $\tr X$ represents the trace of the operator $X$ on $\mathcal{H}$. A rank 1 state $\rho$ can be expressed as $\rho = \ketbra{\psi}$ and is called a \emph{pure state}. Hence without loss of generality the unit  vector $\ket{\psi}$ is also called pure state.  For the sake of convenience  we will treat the one dimensional space spanned by $\ket{\psi}$ as a nonzero vector. This understanding will be useful while counting one dimensional sub-spaces generated by (nonzero) product vectors, and referring to such spaces as \emph{product vectors  only}. 
\begin{defin}
A state $\rho$ acting on the tensor product Hilbert space $\mathcal{H} = \mathcal{H}_1 \otimes \mathcal{H}_2$, with dimensions $d_1 = \dim\mathcal{H}_1$ and  $ d_2  = \dim \mathcal{H}_2$, both $\ge 2$,  is said to be a \emph{separable state}, if it can be written as 
\[\rho= \sum_{s=1}^r p_s \sigma_s^{(1)} \otimes \sigma_s^{(2)},\qquad p_s \ge 0 ~\forall s, \quad   \sum_{s=1}^r p_s =1,\]
and $\sigma_s^{(j)}$ are states acting on $\mathcal{H}_j$ for $j =1,\,2$ and for every $s$. 
\par A state $\rho$ which can not be written in the above from is an \emph{entangled state}. This definition can easily be extended in multi-partite settings as well.   
\end{defin}
\par It is easy to check entanglement of a pure bipartite state, as tracing out one of the subsystems will give a mixed state. Moreover, if $\dim(\mathcal{H}_1 \otimes \mathcal{H}_2) =4$ or $6$, the partial transpose criteria is both necessary and sufficient to check separability of any given state. However in higher dimensions such criteria  are either only necessary, or only sufficient, or in case if both necessary and sufficient - they  are not easily computable. In fact, given an arbitrary state in an arbitrary bipartite quantum system the problem of checking whether the state is separable or entangled is in general a {\sf NP-hard} problem \cite{Gurvits:2003:CDC:780542.780545,MR2087945,MR2655512}. Multiple different methods using various mathematical techniques have been discussed in the paper \cite{MR3480684}. One such condition used for checking entanglement is called \emph{range criteria}, which is as follows. 
\begin{customthm}{A}\cite{H1} \label{cth:ra}
Let $\rho$ be a separable state in $\mathcal{H}$. Then the range of $\rho$ is spanned by a
set of product vectors.
\end{customthm}

\subsection{Unextendible product basis (UPB)} \label{ss1.1}
Let $k \ge 2, \, 1 \le j \le k$. Consider Hilbert spaces $\mathcal{H}_j = \mathbb{C}^{d_j}$ with $d_j \ge 2$.  Let  $\mathcal{H}$ be the tensor product $\bigotimes_{j=1}^k \mathcal{H}_j$. Then  $\dim \mathcal{H} =D= d_1\cdots d_k$. Let $N = d_1 + \cdots +d_k -k +1$.  We follow Bennett et al \cite{B2} for an introduction of UPB. 
\begin{defin} \cite{B2} \label{upb1}
An incomplete set of product vectors $\mathcal{B}$ in
the Hilbert space $\mathcal{H}=\bigotimes_{j=1}^k \mathcal{H}_j$ is called unextendible if the
space $\langle \mathcal{B}\rangle^\perp$ does not contain any non-zero product vector. The vectors in the set $\mathcal{B}$ are usually taken as orthonormal and $\mathcal{B}$ is called a UPB.
\end{defin}
\begin{rem} As we will show in \cref{th:2.1} in the next section, the word unextendible in the context of not necessarily mutually orthogonal set $\mathcal{B}$ is not really right because $\mathcal{C}=\mathcal{B} \cup \{\ket{\chi}\}$ in \cref{th:2.1}  extends $\mathcal{B}$ and possesses  the same property, viz., $\mathcal{C}^\perp$ has no non-zero product vector. 
This is confirmed again in the context of any $N$ van der Monde vectors as in \cref{sec3}.
\end{rem}

\par UPB's were constructed to give examples of a set of orthonormal product vectors, which can not be distinguished by LOCC \cite{B1}. It turns out that the normalised projection operator on the complementary subspace of the space spanned by  a UPB forms a bound entangled state. For details one can visit the original paper \cite{B2} where, apart from various examples, the  following important theorem was given. 
\begin{customthm}{B}\cite{B2} \label{cth:a}
Given a (mutually orthonormal) unextendible product basis: $\{\ket{\psi_s}: s = 1, \ldots, d\}$in the Hilbert space $\mathcal{H}=\bigotimes_{j=1}^k \mathcal{H}_j$ of dimension $D=d_1d_2 \cdots d_k$ as above, the state 
\[\rho = \frac{1}{D-d}\left[ I_D - \sum_{s=1}^d \ketbra{\psi_s}\right]\]
(where $I_D$ is the identity operator on $\mathcal{H}$) is an entangled state that is PPT.
\end{customthm}
\noindent The proof heavily depends on the orthogonality of the $\ket{\psi_j}$'s, and the range criteria is used for detecting entangled states \cite{H2}. 

\par In a follow-up article Di Vincenzo et al \cite{T2} constructed further examples of UPB's and generalisations of their earlier examples from  \cite{B2} for higher dimensions and/or multipartite settings. In this they also noticed that given $\mathcal{H}$ as above, there is a connection between the minimal cardinality of UPB $\mathcal{B}$ and the Ramsey number of graph theory. 
Alon and Lov{\'a}sz \cite{MR1840483} showed that  the minimum possible cardinality of such a set  $\mathcal{B}$ is precisely $N =1+  \sum_{j=1}^k (d_j -1)$  for every sequence of integers $d_1, d_2, \ldots, d_k \ge 2$  unless either (i) $k=2$ and $2 \in \{d_1, d_2\}$; or (ii) $N$ is odd and at least one of the $d_j$ is even. In each of these two cases, the minimum cardinality of the corresponding $\mathcal{B}$'s is strictly bigger than $N$. 

\par It should be noted that Theorem \ref{cth:a} gives perhaps so far the only systematic way to generate bound entangled states (which are positive under partial transpose, i.e.,  PPT).  In fact, this also gave  a method to construct indecomposable positive maps which are not completely positive \cite{T1}.  Further work gives that in two qutrit system, i.e. when $k=2$ and $\dim \mathcal{H}_1= \dim \mathcal{H}_2=3$, all the rank 4 PPT entangled states are coming from UPB systems \cite{MR2907637} described in \cite{B2}. Johnston \cite{PhysRevA.87.064302} constructed a subspace $\mathcal{S}$  as large as possible in terms of its dimension, of a bipartite system ($\mathcal{H}=\mathcal{H}_1 \otimes \mathcal{H}_2$ with $\dim \mathcal{H}_1 = d_1$ and $\dim\mathcal{H}_2 =d_2$) such that every state with support in $\mathcal{S}$ is not-positive under partial transpose (in short,  NPT). We emphasise that the UPB's  can not be distinguished by LOCC transformations. This has been used in various quantum tasks as in \cite{PhysRevLett.122.040403,MR3390497}. For latest developments and new constructions, with weaker constrains such as not necessarily orthogonal etc. one can see \cite{saro_rb1,saro_rb2,MR4329848}. 
\subsection{Completely entangled subspace}\label{ss1.2}
\par During this time, an almost complementary concept was proposed first by Wallach \cite{MR1947343} and later by Parthasarathy \cite{krp1} with concrete construction, viz., that of completely entangled subspace, as defined below. 
\begin{defin}\label{mes1}
A non-trivial subspace $\mathcal{S} \subset \mathcal{H}=\mathcal{H}_1 \otimes \cdots \otimes \mathcal{H}_k$ is said to be  completely entangled subspace,  if $\mathcal{S}$ does not contain any nonzero product  vector of the form $\ket{u_1} \otimes \cdots \otimes\ket{u_k}$ where $\ket{u_j}\in \mathcal{H}_j, \, j=1,\ldots, k$. 
\newline At times if no confusion can arise, it will be abbreviated  as \emph{CES}. 
\end{defin}
Let $\mathcal{E}$ be the set of all completely entangled subspaces of $\mathcal{H}$. The set is non-empty simply because the one-dimensional subspaces spanned by an entangled vector, are present in $\mathcal{E}$. Indeed there is a space $\mathcal{S} \in \mathcal{E}$ of maximal dimension $D-N$ as stated below. 
\begin{customthm}{C}\cite{krp1} \label{cth:b}
There is a space $\mathcal{S}_{max} \in \mathcal{E}$ such that 
\begin{equation}\label{eq1} 
\dim \mathcal{S}_{max} =  \max_{\mathcal{S}\in \mathcal{E}} \dim \mathcal{S} =d_1 \cdots d_k -(d_1 + \cdots+ d_k) +k -1.  
\end{equation}
\end{customthm}
One side of the proof follows from the explicit construction of $\mathcal{S}_{max}$ whose dimension is  at least the right-hand side of the \cref{eq1}. The other direction follows from the fact that  a subspace of dimension $\ge D-N+1 $ contains a non-zero product vector. 
\par The paper \cite{krp1} also contains an explicit construction of a basis for bipartite $\mathbb{C}^d \otimes \mathbb{C}^d$ space. The construction was further modified by Bhat \cite{bhat} for a multipartite system. Johnston \cite{PhysRevA.87.064302} later concentrated on constructing a completely entangled subspace of $\mathbb{C}^{d_1} \otimes \mathbb{C}^{d_2}$ of dimension $(d_1-1)(d_2-1)$ for bipartite systems such that every density matrix with range contained in it is NPT. Later in \cite{PhysRevA.90.062323} the present authors along with Arvind,  showed that this construction in  \cite{PhysRevA.87.064302} can also be derived from that of  Parthasarathy \cite{krp1} and Bhat \cite{bhat}. Furthermore, it was shown \cite{PhysRevA.90.062323} that the projection on a completely entangled subspace $\mathcal{S}$ of maximum dimension obtained in \cite{krp1}  in a multipartite quantum system is also not positive under partial transpose, a large number of positive operators with  range in $\mathcal{S}$ also have the same property.
\par Behaviour of the linear span, in short, the span of UPB's considered, their orthogonal complements,  and  $\mathcal{S}_{\max}$  can be quite different. In this paper, the space  $\mathcal{S}_{\max}$ will be called \emph{Parthasarathy space} and will be denoted by $\mathcal{S_P}$ to honour K. R. Parthasarathy for his contributions to the topic. The primary purpose of this paper is to display that along with   a better understanding of the setup and further generalisations. 
\subsection{Quasi completely entangled subspace}\label{ss1.3}

\par  In case of a UPB, the complementary subspace does not contain any non-zero product vector, and hence forms a CES.  In Ref \cite{krp1} the maximal possible dimension  CES has been worked in detail, as mentioned earlier in  \cref{ss1.2}.  In a recent series of papers like Boyer, Liss, and Mor \cite{blm17}, Boyer, and Mor \cite{mor1} and Liss, Mor, and Winter \cite{mor2}, a similar problem is considered with slight change. Effectively, the authors are considering the spaces which are essentially CES but containing finitely many pure product states. We call these subspaces $T$ \emph{Quasi completely entangled subspace},  in short, QCES. We let $t$ be the dimension of $T$ and, following Ref. \cite{mor2}, we call  the number of pure product states in $T$  as its \emph{product index}, and denote it by $\tau$. We shall treat the elements in the one-dimensional subspace generated by a non-zero vector $\ket{\zeta}$ as the vector $\ket{\zeta}$ only, particularly so while counting the number of product vectors in such subspaces. Liss et al  \cite{mor2} asks the question of relationship between $t$ and $\tau$, in particular $\max \tau$ for a given $t$. 

\par Any state $\rho$ with support in a QCES with specified product vectors, say $\mathcal{P}$  has the form $\sum_{\xi} p_\xi \ketbra{\xi}$ with $\ket{\xi}$ varying over $\mathcal{P}$, all $p_\xi \ge0$ with $\sum p_\xi =1$. One can see such instances, for example, see Remark 2.9  of Ref \cite{MR3480684}. Members of $\mathcal{P}$ may or may not be  mutually orthogonal.
 
\subsection{Polynomial representation of  multipartite systems and quantum entanglement} \label{ss1.4}
\par We use the well-known (see, for instance, Ref \cite{MR1947343,MR3635750}) polynomial representation of  multipartite syatems to find more product vectors  in  certain spaces. For a Hilbert space $\mathcal{H}_A$ we represent the basis vectors $\ket{0},\, \ket{1},\, \ket{2}$ etc as $1,\, X, \, X^2$  etc. Similarly for the space $\mathcal {H}_B$ we may represent $ \ket{0} \mapsto 1, \, \ket{1} \mapsto Y, \,\ket{2} \mapsto Y^2 $ etc., for $\mathcal{H}_C$ in terms of $Z$,  and etc. Thus a vector in the space $\mathcal{H}_A$ is represented as a polynomial in $X$ and a vector in   $\mathcal{H}_B$ is represented as a polynomial in $Y$. 
This representation was well used to study multipartite quantum entanglement in  \cite{MR3635750} and later utilised  in \cite{singh23}. 

\section{Examples of quasi-completely entangled spaces related to UPBs}\label{sec2}

\par In this section we discuss in detail spaces related to TILES unextendible product basis and the one in three qubits given in Ref \cite{B2} with relation to variants and perturbations of their spans and the completely entangled spaces given by their orthogonal complements. This will enable us to express the space as a direct sum of two QCES as noted as a corollary at its first occurrence in \cref{cor.2.3}.
\subsection{TILES UPB related sub-spces}\label{ssec2.1}
\par We first start with the UPB introduced in Ref \cite{B2}, known as TILES in $\mathbb{C}^3 \otimes \mathbb{C}^3$. For completeness, let us define the vectors as follows using the short form $\ket{\xi} \ket{\eta}$ for $\ket{\xi} \otimes \ket{\eta}$.
\begin{eqnarray} \label{upbe1}
\ket{\psi_0} &=& \frac{1}{\sqrt{2}} \ket{0} (\ket{0} - \ket{1}), \nonumber\\
\ket{\psi_1} &=& \frac{1}{\sqrt{2}} \ket{2} (\ket{1} - \ket{2}), \nonumber\\
\ket{\psi_2} &=& \frac{1}{\sqrt{2}} (\ket{0} - \ket{1}) \ket{2},  \\
\ket{\psi_3} &=& \frac{1}{\sqrt{2}} (\ket{2} - \ket{1}) \ket{0}, \nonumber\\
\ket{\psi_4} &=& \frac{1}{3} (\ket{0} + \ket{1} + \ket{2}) (\ket{0} + \ket{1} + \ket{2}). \nonumber
\end{eqnarray}
\begin{theorem}\label{th:2.1}
The span $\mathcal{U}$ of TILES UPB is a 5-dimensional quasi completely entangled subspace (QCES) of product index 6.  The sixth product vector is the vector 
\[\ket{\chi} = \frac{\sqrt{2}}{3} (\ket{\psi_0} +  \ket{\psi_1} + \ket{\psi_2} + \ket{\psi_3} )+ \frac{1}{3} \ket{\psi_4}. \]
\end{theorem}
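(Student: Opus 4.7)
I would work in the polynomial representation of \cref{ss1.4}: identify the computational basis $\{\ket{ij}\}_{0\le i,j\le 2}$ with the monomials $X^i Y^j$. Then a non-zero element of $\mathbb{C}^3\otimes\mathbb{C}^3$ is a product vector precisely when its $3\times 3$ coefficient matrix has rank one.

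\textbf{Step 1 (Dimension).} A direct inner-product check on the explicit forms in \cref{upbe1} confirms that $\ket{\psi_0},\ldots,\ket{\psi_4}$ are mutually orthonormal, so $\dim\mathcal{U}=5$ and the five UPB vectors already provide five one-dimensional product subspaces in $\mathcal{U}$.

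\textbf{Step 2 (Coefficient matrix).} Writing a generic $\sum_{s=0}^4 a_s \ket{\psi_s}$ in the computational basis gives a $3\times 3$ matrix $M(a)$ whose $(i,j)$-entry is the coefficient of $\ket{ij}$. The matrix has a convenient sparse pattern: the centre entry is $a_4/3$ alone, while every other entry picks up a contribution from exactly one of the ``outer'' $\ket{\psi_s}$ ($s=0,1,2,3$) together with the uniform $a_4/3$ coming from $\ket{\psi_4}$. A non-zero member of $\mathcal{U}$ is a product vector iff $M(a)$ has rank one, equivalently, all nine of its $2\times 2$ minors vanish.

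\textbf{Step 3 (Solve the rank-one system).} Split into cases on $a_4$. If $a_4=0$, the central entry vanishes, forcing either the middle row or the middle column of $M(a)$ to be zero; a direct case analysis then shows that rank-one requires exactly one of $a_0,\ldots,a_3$ to be non-zero, recovering $\ket{\psi_0},\ldots,\ket{\psi_3}$ in turn. If $a_4\neq 0$, normalise so $a_4/3=1$ and write the rank-one factorisation as $M=pq^\top$ with $p_1=q_1=1$. Matching against the four ``extreme'' rows and columns produces four rational equations in $(a_0,a_1,a_2,a_3)$; elementary elimination yields only the trivial solution (a scalar multiple of $\ket{\psi_4}$) and a single non-trivial one, which, on re-expressing in the $\ket{\psi_s}$-basis and normalising, is $\ket{\chi}$.

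\textbf{Step 4 (Verification).} Substitute the coefficients of $\ket{\chi}$ back into $M$ and exhibit the factorisation explicitly as $pq^\top$ for some $p,q\in\mathbb{C}^3$, thereby displaying $\ket{\chi}$ as a concrete product state.

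\textbf{Main obstacle.} The delicate part is uniqueness in the $a_4\ne 0$ case: one must carefully rule out stray solutions coming from vanishing denominators in the elimination. A helpful cross-check is a degree count: the Segre variety $\mathbb{P}^2\times\mathbb{P}^2\subset\mathbb{P}^8$ has degree six, so a generic $\mathbb{P}^4$ meets it in exactly six points, which confirms that the product index is indeed six and that the enumeration in Step 3 is complete.
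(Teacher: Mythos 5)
Your rank-one coefficient-matrix formulation is the paper's own argument in slightly different clothing: the rows of your $M(a)$ are precisely the vectors $\ket{u},\ket{v},\ket{w}$ the paper extracts, your case split on $a_4$ matches the paper's split on $\alpha_4$, and the four equations you obtain in the $a_4\neq 0$ case are the paper's system \eqref{up-a}--\eqref{up-d}, whose elimination indeed yields only the trivial solution and $\ket{\chi}$. The proposal is therefore correct and essentially identical in approach; only note that the Segre degree-six count is merely a consistency check (your particular $\mathbb{P}^4$ is not a priori generic, and one would still need finiteness of the intersection), so the explicit elimination you sketch must be carried out in full, exactly as the paper does.
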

\begin{proof}
Consider $\bm{\alpha} =( \alpha_0,\alpha_1,\alpha_2,\alpha_3,\alpha_4) $ with $\#\{j: \alpha_j \neq 0\} \ge 2$. Since $\ket{\psi_j}$'s are mutually orthogonal, 
\[\ket{\xi} = \alpha_0\sqrt{2} \ket{\psi_0} + \alpha_1\sqrt{2} \ket{\psi_1} + \alpha_2\sqrt{2} \ket{\psi_2} + \alpha_3\sqrt{2} \ket{\psi_3} + 3\alpha_4\ket{\psi_4} \neq 0.\]
Rearranging the terms, let us write 
\[\ket{\xi} =  \ket{0} \ket{u} + \ket{1} \ket{v} + \ket{2} \ket{w},\]
where 
\begin{eqnarray}\label{upbc1}
\ket{u} &=& (\alpha_0 + \alpha_4) \ket{0} + (-\alpha_0 + \alpha_4) \ket{1} + (\alpha_2 + \alpha_4) \ket{2},\nonumber\\
\ket{v} &=& (-\alpha_3 + \alpha_4) \ket{0} +  \alpha_4 \ket{1} + (-\alpha_2 + \alpha_4) \ket{2},\\
\ket{w} &=& (\alpha_3 + \alpha_4) \ket{0} + (-\alpha_1 + \alpha_4) \ket{1} + (\alpha_1 + \alpha_4) \ket{2} \nonumber.
\end{eqnarray}
Since $\ket{\xi} \neq 0$, at least one of $\ket{u}, \, \ket{v}$, and $\ket{w}$ must be nonzero. Now $\ket{\xi} $ is a product vector, if and only if $\ket{u} = \alpha \ket{z}$, $\ket{v} = \beta \ket{z}$, and $\ket{w} = \gamma \ket{z}$ for some nonzero vector $\ket{z}$, with $\alpha, \, \beta, \, \gamma \in \mathbb{C}$ not all three of them equal to $0$. 

\par {\bf Case $\alpha_4 = 0$}. Then at least two of $ \alpha_0,\, \alpha_1,\, \alpha_2,\, \alpha_3$ are non-zero. Also the above equation \eqref{upbc1} takes the form 
\begin{eqnarray*}
\ket{u} &=& \alpha_0  \ket{0} -\alpha_0  \ket{1} + \alpha_2  \ket{2},\\
\ket{v} &=& -\alpha_3 \ket{0} -\alpha_2  \ket{2},\\
\ket{w} &=& \alpha_3  \ket{0}  -\alpha_1 \ket{1} + \alpha_1  \ket{2} .
\end{eqnarray*}
Now if $\ket{v}=0$ i.e. $\alpha_2 = 0 = \alpha_3$, then $\alpha_0, \, \alpha_1 \neq 0$ and $\ket{\xi}$  is not a product vector. \\
If $\ket{v} \neq 0$ i.e. $\alpha_2, \, \alpha_3$ not both of them are zero, we may take $\ket{z} = \ket{v}$ and $\beta =1$. Then $\ket{\xi}$ is a product vector if and only if
\begin{eqnarray*}
(\alpha_0, -\alpha_0, \alpha_2) &=& \alpha (-\alpha_3, 0, -\alpha_2),\\
(\alpha_3, -\alpha_1, \alpha_1) &=& \gamma (-\alpha_3, 0, -\alpha_2).
\end{eqnarray*}
A simple calculation shows that the above is true if and only if  $\alpha_0=0=\alpha_1$ and  $ \alpha_2$ or $ \alpha_3$ is equal to zero according as $\alpha$ is zero or non-zero. This is contrary to our assumption that at least two of the $\alpha_j$'s should be non-zero. \\

\noindent {\bf Case} $\alpha_4 \neq 0$. Let $\alpha_j = \beta_j \alpha_4$ for $j=0,1,2,3$. Then at least one of the $\beta_j$'s is non-zero. Moreover 
\begin{eqnarray*}
&& \ket{u'} = \frac{1}{\alpha_4} \ket{u} = (1+ \beta_0) \ket{0} + (1- \beta_0) \ket{1}  + (1+ \beta_2) \ket{2}, \\
&& \ket{v'} = \frac{1}{\alpha_4} \ket{v} = (1- \beta_3) \ket{0} +  \ket{1}  + (1- \beta_2) \ket{2}, \\
&& \ket{w'} = \frac{1}{\alpha_4} \ket{w} = (1+ \beta_3) \ket{0} + (1- \beta_1) \ket{1}  + (1+ \beta_1) \ket{2}.
\end{eqnarray*} 
For $\delta \in \mathbb{C}$ both $1 + \delta$ and $1 - \delta$ can not be zero. So $\ket{u'} \neq 0 \neq \ket{v'}, \, \ket{w'} \neq 0$. So $\ket{\xi}$ is a product vector if and only if 
\begin{eqnarray*}
\ket{u'} &=& (1- \beta_0) \ket{v'} \\
\ket{w'} &=& (1- \beta_1) \ket{v'},
\end{eqnarray*}
if and only if $\beta_0 \neq 1 \neq \beta_1$, together with
\begin{eqnarray}
1 + \beta_0 &=& (1 - \beta_0) (1 - \beta_3) \label{up-a}\\
1 + \beta_2 &=& (1 - \beta_0) (1 - \beta_2) \label{up-b}\\
1 + \beta_3 &=& (1 - \beta_1) (1 - \beta_3) \label{up-c}\\
1 + \beta_1 &=& (1 - \beta_1) (1 - \beta_2). \label{up-d}
\end{eqnarray}
Let us consider the system of equations \eqref{up-a} --\eqref{up-d}. Then $\beta_j \neq 1$ for $ j =0,1,2,3$.  \eqref{up-a} gives 
\[1 + \beta_3  = (\beta_3 -1) +2 = \frac{1 -3 \beta_0}{1 - \beta_0}.\]
Substituting in \eqref{up-c}  we get 
\[\frac{1 -3 \beta_0}{1 - \beta_0} = (1 - \beta_1) \frac{1+\beta_0}{1 - \beta_0}.\]
which after simplification gives 
\begin{equation} \label{up-e}
\beta_0 \beta_1 + \beta_1 - 4 \beta_0 =0.
\end{equation}
In a similar way \eqref{up-d} and \eqref{up-b} give 
\begin{equation} \label{up-f}
\beta_0 \beta_1 + \beta_0 - 4 \beta_1 =0.
\end{equation}
\eqref{up-e} and \eqref{up-f} give $\beta_1 = \beta_0$. Substituting in \eqref{up-e} we get $\beta_0^2 - 3\beta_0 =0$ which has solutions $\beta_0 = 3$ or $0$. If $\beta_0 = \beta_1= 0$ then combined with \eqref{up-a} and \eqref{up-d} we get $\beta_3 =0 =\beta_2$. But at least one of the $\beta_j$'s should be nonzero.   So we must have $\beta_0 = \beta_1= 3$. \eqref{up-a} and \eqref{up-b} then gives $\beta_3 = 3 =\beta_2$. On the other hand, $\beta_0=\beta_1=\beta_2 = \beta_3= 3$ satisfies the system. In this case 
\[\ket{\xi} = \alpha_4( (2 \ket{0} - \ket{1} + 2 \ket{2} ) (2 \ket{0} - \ket{1} + 2 \ket{2}),\]
which becomes a unit vector
$\ket{\chi}$, say  by taking $\alpha_4=\frac{1}{9}$.
Note that $\ket{\chi}$  
is a unit vector in the linear span $\mathcal{U}$ of $\ket{\psi_0}, \, \ket{\psi_1}, \, \ket{\psi_2},\, \ket{\psi_3}, \, \ket{\psi_4}$, and $\mathcal{U}$ contains 6 one dimensional subspaces containing product vectors.  
\end{proof}
\begin{rem}
\begin{enumerate}[(i)]
\item Any five vectors from the set  $\mathcal{C} = \mathcal{B} \cup \{\ket{\chi}\}$ span $\mathcal{U}$.
\item For $1 \le p \le 4$, any $p$ vectors from  $\mathcal{C}$ span a $p$ dimensional QCES of  product index $p$.   
\end{enumerate}
\end{rem}
\subsection{Perturbation of $\mathcal{S_U}= \mathcal{U}^\perp$ by vectors from $\mathcal{C}$} \label{ssec2.2}

 \par We find polynomial representation handy and notationally simple to deal with proofs in this paper, particularly for low dimensions. 
  \begin{rem}[Perturbation related to TILES UPB] \label{rem2.2}
  Using the polynomial representation rules,  we  represent the vectors of TILES UPB written in \eqref{upbe1} without using the normalisation factors. 
\begin{enumerate}[(i)]
\item   The corresponding unnormalised vectors take the polynomial form 
 \begin{eqnarray*}\label{upbe2}
F_0(X,Y)  &=& 1(1 -Y), \nonumber\\
F_1(X,Y)  &=& X^2 (Y^2 -Y) = X^2 Y^2 - X^2 Y, \nonumber \\
F_2(X,Y)  &=& (1 -X) Y^2 = Y^2 - X Y^2, \\
F_3(X,Y)  &=& (X^2 -X) 1 = X^2 - X, \nonumber \\
F_4(X,Y)  &=& (1 +X + X^2) (1 +Y +Y^2) \nonumber \\
&=& 1 + X +Y + X^2 + XY + Y^2 + X^2 Y + X Y^2 +X^2 Y^2.\nonumber 
 \end{eqnarray*}
\item  
 Any point $\ket{\eta}$ of $\mathcal{U}$ has polynomial representation 
  \begin{eqnarray*}
 F(X,Y) &=& \alpha_0 (1 - Y) + \alpha_1 (X^2 Y^2 - X^2 Y) + \alpha_2 (Y^2 - X Y^2) \\
 && + \alpha_3 (X^2 - X) + \alpha_4 (1 +X + X^2) (1 +Y +Y^2).
 \end{eqnarray*}
Here the coefficients $\alpha_j$s are the scalars. 
\item Let $\mathcal{S}_\mathcal{U}=\mathcal{U}^\perp$.  So the polynomial representation $\sum_{s,t=0}^2 a_{st} X^s Y^t$, say $G(X,Y)$ of $\ket{\xi} \in \mathcal{S}_\mathcal{U}$  should satisfy 
 \begin{eqnarray*}
 a_{00} - a_{01} &=& 0 \quad \text{ i.e., } \quad a_{01} = a_{00}, \\
 a_{22} - a_{21} &=& 0 \quad \text{ i.e., } \quad a_{22} = a_{21}, \\
 a_{02} - a_{12} &=& 0 \quad \text{ i.e., } \quad a_{02} = a_{12},\\
 a_{20} - a_{10} &=& 0 \quad \text{ i.e., } \quad a_{20} = a_{10}, \\
 \sum_{s,t =0}^2 a_{st} &=&0 \quad \text{ i.e., } \quad a_{11} = -\sum_{\substack{0 \le s, t \le 2\\(s,t) \neq (1,1)} } a_{st}.
 \end{eqnarray*}
So, 
 \begin{eqnarray*}
 G(X,Y) &=& \gamma_0 (1 + Y) + \gamma_1 (X^2 Y^2 + X^2 Y) + \gamma_2 (Y^2 + X Y^2) \\
 && + \gamma_3 (X^2 + X) - 2 (\gamma_0 + \gamma_1 + \gamma_2 + \gamma_3) XY, \quad \text{say.}
 \end{eqnarray*}
 
\item  We would like to identify $\bm{\gamma} = (\gamma_j)_{j=0}^3$ and $\bm{\alpha} = (\alpha_j)_{j=0}^4$ such that 
 $\Sigma(X, Y)  = G(X,Y) + F(X,Y)$ is a non-zero product vector, , i.e., it has the form
\begin{equation}
\Sigma (X,Y)=p(X) q(Y)
\end{equation}
 for some polynomials $p$ and $q$.
 \par This is not possible if $\bm{\alpha}= \bm{0}$ because $\mathcal{S}_\mathcal{U}$ is completely entangled. Also for $\bm{\gamma}=\bm{0}$  we know exactly six product vectors. So, we consider the case for $\bm{\alpha} \neq  \bm{0}$ and $\bm{\gamma} \neq \bm{0}$. 
 \end{enumerate}
 \end{rem}
  \par We begin with the case $\bm{\alpha} = (1,0,0,0,0)$. In other words we consider the subspace  $\mathcal{S}_{\mathcal{U},0}$ spanned by $\mathcal{S}_\mathcal{U}$ and $\ket{\psi_0}$. So 
 \begin{eqnarray*}
 \Sigma(X,Y) &=& 1\left( (1+\gamma_0) +(-1 + \gamma_0) Y + \gamma_2 Y^2 \right) \\
 && +  X \left( \gamma_3 -2 ( \gamma_0 + \gamma_1 + \gamma_2 + \gamma_3) Y + \gamma_2 Y^2 \right) \\
&&  + X^2 \left ( \gamma_3 + \gamma_1 Y + \gamma_1 Y^2 \right)\\
 &=& 1 g_0(Y) + X g_1(Y) + X^2 g_2(Y), \text{ say.}
 \end{eqnarray*}
Now both $1+\gamma_0$ and $-1 + \gamma_0$ can not be zero together. So $g_0(Y) \neq 0$. Hence $\Sigma(X,Y)$ gives a product vector $\ket{\zeta}$  if and only if 
 \begin{eqnarray}\label{upbe3b}
 g_1(Y) &=& \nu_1 g_0(Y), \nonumber\\
  g_2(Y) &=& \nu_2 g_0(Y)
  \end{eqnarray} 
  for some $(\nu_1, \nu_2) \in \mathbb{C}^2$ and in this case 
\[\Sigma (X,Y) = (1 + \nu_1 X + \nu_2 X^2) g_0(Y).\]
\begin{theorem}
Perturbation   $\mathcal{S}_{\mathcal{U}, 0}$  of $\mathcal{S}_{\mathcal{U}}$  by taking its linear span with $\ket{\psi_0}$ is a QCES with product index $6$ with specified product vectors given in the proof.
\end{theorem}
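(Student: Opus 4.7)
The plan is to continue directly from the reduction already set up in Remark~\ref{rem2.2}(iv). Any product vector of $\mathcal{S}_{\mathcal{U},0}$ can be normalised so that $\alpha_0 = 1$ (since $\mathcal{S}_\mathcal{U}$ is completely entangled), and the system~\eqref{upbe3b}, $g_1(Y) = \nu_1 g_0(Y)$ and $g_2(Y) = \nu_2 g_0(Y)$, then captures exactly the condition that $\Sigma(X,Y)$ factor as a product. Because $g_0 \ne 0$ always and each $g_i$ has degree at most two in $Y$, the first step is to match coefficients of $1, Y, Y^2$ in each of the two polynomial identities, producing six scalar equations in the six unknowns $\gamma_0,\gamma_1,\gamma_2,\gamma_3,\nu_1,\nu_2$: the ``constant-term'' pair $\gamma_3 = \nu_1(1+\gamma_0) = \nu_2(1+\gamma_0)$, the ``top-degree'' equations $\gamma_2(1-\nu_1)=0$ and $\gamma_1 = \nu_2\gamma_2$, and the ``middle-coefficient'' relations $-2(\gamma_0+\gamma_1+\gamma_2+\gamma_3) = \nu_1(\gamma_0-1)$ and $\gamma_1 = \nu_2(\gamma_0-1)$.

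Subtracting the two constant-term equations forces the factorisation $(\nu_1-\nu_2)(1+\gamma_0) = 0$, which cleanly bifurcates the analysis into Case~(A) with $\nu_1=\nu_2=:\nu$ and Case~(B) with $\gamma_0=-1$ and $\nu_1 \ne \nu_2$. In Case~(A) I would exploit $\gamma_2(1-\nu)=0$ together with the chain $\gamma_1 = \nu(\gamma_0-1) = \nu\gamma_2$ and sub-case on whether $\nu = 1$; the middle-coefficient equation then pins $\gamma_0$ down in each branch. In Case~(B) the constant-term equations give $\gamma_3=0$ for free, and the remaining relations allow one to solve $\nu_1 = \gamma_1+\gamma_2-1$ and $\nu_2 = -\gamma_1/2$ in terms of $\gamma_1,\gamma_2$; the two ``top-degree'' equations then collapse to the twin disjunctions ``$\gamma_2 = 0$ or $\nu_1 = 1$'' and ``$\gamma_1 = 0$ or $\gamma_2 = -2$''. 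Walking through the four joint possibilities and discarding the one inconsistent with $\nu_1 \ne \nu_2$ yields the Case~(B) solutions.

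I expect Case~(A) and Case~(B) each to furnish three admissible tuples $\bm{\gamma}$, giving six in all, with $\bm{\gamma}=\bm{0}$ recovering $\ket{\psi_0}$ itself as the trivial instance. For each solution I would substitute back into $\Sigma = g_0 + Xg_1 + X^2 g_2$, factor the resulting polynomial as $p(X)q(Y)$, and read off the explicit product vector in Hilbert-space notation. The main obstacle is the case analysis itself: keeping the six branches disjoint, ensuring no branch is overlooked where the two disjunctions of Case~(B) interact, and verifying that the six resulting one-dimensional subspaces are pairwise distinct. Since the polynomial system is zero-dimensional with exactly six solutions, the same enumeration simultaneously shows that $\mathcal{S}_{\mathcal{U},0}$ contains only finitely many product vectors, confirming that it is a QCES of product index precisely six.
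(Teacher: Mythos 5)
Your proposal is correct and takes essentially the same route as the paper: the same reduction via \cref{rem2.2} to the coefficient system \eqref{upbe3c}, followed by an elementary case analysis that produces the same six solution tuples (the trivial $\bm{\gamma}=\bm{0}$ giving $\ket{\psi_0}$, plus the five tuples $(-1,0,0,0)$, $(-1,0,2,0)$, $(-1,4,-2,0)$, $(1,0,0,-1)$, $(\tfrac13,-\tfrac23,-\tfrac23,\tfrac43)$), your bifurcation on $(\nu_1-\nu_2)(1+\gamma_0)=0$ being merely a different bookkeeping of the paper's cases $\nu_2=0$, $\gamma_3=0$, $\gamma_3\neq0$. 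One cosmetic slip: in your Case (B) the discarded joint possibility is the branch $\gamma_2=0$ together with $\gamma_2=-2$, which dies by internal contradiction rather than by violating $\nu_1\neq\nu_2$; the three surviving branches satisfy $\nu_1\neq\nu_2$ automatically, so the count of six and the conclusion are unaffected.
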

\begin{proof}
We continue from \eqref{upbe3b} above, which holds if and only if 
\begin{alignat}{3}\label{upbe3c}
   \gamma_3 &= \nu_1 (1 + \gamma_0),      \qquad      & -2(\gamma_0 + \gamma_1 + \gamma_2 + \gamma_3) &= \nu_1(-1 + \gamma_0), & \qquad\gamma_2 &= \nu_1 \gamma_2; \\
   \gamma_3 &= \nu_2 (1 + \gamma_0),  & \gamma_1 &= \nu_2(-1+\gamma_0), & \gamma_1 &= \nu_2 \gamma_2.
    \notag
\end{alignat}
Now $\nu_1=0$ forces $\gamma_3=0, \, \gamma_2=0, \, \gamma_1=0,\, \gamma_0=0,$ which is not possible. Therefore $\nu_1 \neq 0$.   Now $\nu_2=0$ forces $\gamma_3=0, \, \gamma_1=0$. Because $\nu_1 \neq 0$,  we get $\gamma_0 = -1$. So $-2(-1+\gamma_2) =-2 \nu_1$, i.e. $\nu_1 =-1 + \gamma_2 \neq 0$.  So $\gamma_2 = -\gamma_2 + \gamma_2^2$, i.e. $\gamma_2 =0$ or $2$. Hence $\nu_1 = -1$ or $1$ according to    $\gamma_2 =0$ or $2$. So we have two solutions, 
\begin{alignat*}{2}
   (\nu_1, \nu_2) &= (-1,0),  \qquad       & \bm{ \gamma} &= (-1,0,0,0), \text{ and } \\
   (\nu_1, \nu_2) &= (1,0) ,        & \bm{ \gamma} &= (-1,0,2,0).
\end{alignat*}
Accordingly we get 
\begin{eqnarray}\label{2.13}
   \Sigma(X,Y) &=& (1 -X) (-2Y), \qquad \text{ or }\\
   \Sigma(X,Y) &=& (1 + X) (-2Y + 2Y^2).  \notag
\end{eqnarray}
Now suppose $\nu_2 \neq 0 \neq \nu_1$. From \eqref{upbe3c} we get $(\gamma_1, \gamma_3) \neq (0,0),\, \gamma_1 =0$ if and only if $\gamma_2=0$.
\par  {\bf For the case}  $\gamma_3 =0$  \eqref{upbe3c} gives $\gamma_0=-1, \, \gamma_1 = -2\nu_2 \neq 0$. So $\gamma_2 =-2$ and $\nu_1=1$. Putting the values in the equations \eqref{upbe3c} we get $\nu_2=-2,\, \gamma_0 =-1, \, \gamma_1 =4, \, \gamma_2 =-2,\, \gamma_3 =0$. Thus the polynomial representation takes the form 
\begin{equation}\label{2.14}
\Sigma(X,Y)=  (1 + X - 2X^2) (-2Y - 2Y^2) = -2 (1 + X - 2X^2) (Y + Y^2).
\end{equation}

\par {\bf Case} $\gamma_3 \neq 0$. Then by \eqref{upbe3c} we get $\nu_1 = \nu_2 \neq 0,\, 1 + \gamma_0 \neq 0$.  Two possible subcases may arise. 
\newline If $\gamma_2 =0$ then $\gamma_1 =0,\, \gamma_0 =1$. So by \eqref{upbe3c} $\gamma_3=-1,\, \nu_1 = -\frac{1}{2}$. In this subcase 
\begin{equation}\label{2.15}
\Sigma(X,Y)=  \left(1  -\frac{1}{2} X -\frac{1}{2}X^2\right) 2 = 2 - X - X^2.
\end{equation}
\newline If $\gamma_2 \neq 0$ then \eqref{upbe3c} gives $\nu_1 =1,\, \gamma_3 = 1+ \gamma_0, \, \gamma_1 = \gamma_2 = -1 +\gamma_0$, and $\gamma_1 =-2(\gamma_0 + \gamma_1 + \gamma_2 + \gamma_3)$. Further calculations give $\bm{\gamma} = \left( \frac{1}{3}, -\frac{2}{3}, -\frac{2}{3}, \frac{4}{3} \right)$. Corresponding polynomial form is 
\begin{equation}\label{2.16}
\Sigma(X,Y)=  \left(1  + X +X^2\right) \left( \frac{4}{3} - \frac{2}{3} Y - \frac{2}{3} Y^2 \right).
\end{equation}
Hence we have exactly six (normalised) product vectors in $\mathcal{S}_{\mathcal{U},0}$ as follows,
\begin{alignat*}{2}
\ket{\psi_0}=  \frac{1}{\sqrt{2}} \ket{0} \otimes(\ket{0} - \ket{1}), \quad & \frac{1}{\sqrt{2}}(\ket{0} - \ket{1}) \otimes \ket{1},\\ \frac{1}{2}(\ket{0} + \ket{1}) \otimes (\ket{1} - \ket{2}),\quad &
\frac{1}{2\sqrt{3}}(\ket{0} + \ket{1} - 2 \ket{2}) \otimes (\ket{1} +  \ket{2}),\\
  \frac{1}{\sqrt{6}}(2\ket{0}- \ket{1} - \ket{2}) \otimes \ket{0}, \quad &
\frac{1}{3\sqrt{2}}(\ket{0} + \ket{1} +  \ket{2}) \otimes \left( 2 \ket{0} -  \ket{1} -  \ket{2} \right).
\end{alignat*}
\end{proof}
The orthogonal complement of this space $\mathcal{S}_{\mathcal{U},0}$ is the linear span of $\ket{\psi_1}, \,\ket{\psi_2}, \, \ket{\psi_3}, \, \ket{\psi_4}$ and has exactly these four product states. So we have the following Corollary. 
\begin{cor}\label{cor.2.3}
$\mathcal{H}$ is expressible as direct sum of two QCES, viz., the span of $\{ \ket{\psi_1}, \,\ket{\psi_2}, \, \ket{\psi_3}, \, \ket{\psi_4} \}$, and $\mathcal{S}_{\mathcal{U},0}$. 
\end{cor}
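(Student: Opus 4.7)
The plan is to first establish the direct sum decomposition by dimension count and orthogonality, and then verify each summand qualifies as a QCES by counting its product vectors.

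First I would set $T = \mathrm{span}\{\ket{\psi_1}, \ket{\psi_2}, \ket{\psi_3}, \ket{\psi_4}\}$ and observe that $\dim T = 4$ since the TILES UPB is orthonormal. Because each $\ket{\psi_j}$ for $j=1,\ldots,4$ lies in $\mathcal{U}$, it is orthogonal to $\mathcal{S}_{\mathcal{U}}=\mathcal{U}^\perp$, and by the UPB orthogonality it is also orthogonal to $\ket{\psi_0}$; thus $T \perp \mathcal{S}_{\mathcal{U},0}$. Since $\ket{\psi_0}\perp \mathcal{S}_{\mathcal{U}}$, the dimensions satisfy $\dim \mathcal{S}_{\mathcal{U},0}=\dim \mathcal{S}_{\mathcal{U}} + 1 = 5$, so $\dim T + \dim \mathcal{S}_{\mathcal{U},0}=9=\dim \mathcal{H}$ and $\mathcal{H}=T\oplus \mathcal{S}_{\mathcal{U},0}$ as an orthogonal direct sum.

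By the preceding Theorem, $\mathcal{S}_{\mathcal{U},0}$ is a QCES with product index $6$. It remains to show $T$ is a QCES, namely that the only product vectors it contains are (scalar multiples of) $\ket{\psi_1},\ket{\psi_2},\ket{\psi_3},\ket{\psi_4}$ themselves. I would obtain this as a specialization of the classification carried out in \cref{th:2.1}: a generic element of $T$ corresponds to $\bm{\alpha}=(0,\alpha_1,\alpha_2,\alpha_3,\alpha_4)$. If only one $\alpha_j$ is nonzero, the vector is proportional to $\ket{\psi_j}$, which is product. Assuming at least two are nonzero, I would run the two cases of \cref{th:2.1}. In the case $\alpha_4=0$, the calculation there shows the product condition forces $\alpha_0=\alpha_1=0$ together with at most one of $\alpha_2,\alpha_3$ nonzero, so at most one $\alpha_j$ can be nonzero, a contradiction. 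In the case $\alpha_4\neq 0$, setting $\beta_j=\alpha_j/\alpha_4$ we have $\beta_0=0$; the chain of equations culminating in $\beta_0^2-3\beta_0=0$ and $\beta_1=\beta_0$ forces $\beta_1=0$, and then \eqref{up-a} and \eqref{up-d} force $\beta_2=\beta_3=0$, leaving $\bm\alpha$ supported only on $\alpha_4$, again contradicting our assumption. Hence the product index of $T$ is exactly $4$, and $T$ is a QCES.

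There is no serious obstacle here: the argument is essentially a boundary case of the classification of product vectors in $\mathcal{U}$ already performed in \cref{th:2.1}, with the role of the extra product vector $\ket{\chi}$ (which required $\beta_0=3$) excluded precisely because $\alpha_0=0$ forces $\beta_0=0$. The conceptually interesting point worth recording in the write-up is that the orthogonal complement of the $5$-dimensional QCES $\mathcal{S}_{\mathcal{U},0}$ turns out to be spanned by four of the five UPB vectors, realising $\mathcal{H}$ itself as an orthogonal direct sum of two QCESs of total product index $10$.
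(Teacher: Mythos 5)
Your argument is correct and follows essentially the same route as the paper: identify $\mathcal{S}_{\mathcal{U},0}^\perp$ as the span of $\{\ket{\psi_1},\ket{\psi_2},\ket{\psi_3},\ket{\psi_4}\}$ (orthogonality plus dimension count) and invoke the preceding theorem for $\mathcal{S}_{\mathcal{U},0}$, while the product index $4$ of the complement is exactly the specialization $\alpha_0=0$ of the classification in \cref{th:2.1}, which the paper records in Remark~(ii) following that theorem. Your write-up merely makes explicit the case analysis that the paper leaves implicit, so no gap remains.
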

\begin{rem}
Similar results can be proved for other members of $\mathcal{C}= \mathcal{B} \cup \{\ket{\chi}\}$. 
\end{rem}
\par Computations for linear span of $\mathcal{S}_{\mathcal{U}}$ and $\ket{\psi_j}$, (for $j=1,\,2,\,3$) are similar. But for $\ket{\psi_4}$ it takes a different form.
\begin{theorem}\label{th2.4}
The span $\mathcal{S}_{\mathcal{U},4}$ of $\mathcal{S}_{\mathcal{U}}$ and $\ket{\psi_4}$ is QCES of product index 6 with new product vectors specified in the proof. 
\end{theorem}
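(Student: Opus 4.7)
Proof proposal for Theorem \ref{th2.4}.

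The plan is to mimic the approach of the preceding theorem, but now with the perturbing vector $\ket{\psi_4}$ in place of $\ket{\psi_0}$. Using the polynomial representation of Remark \ref{rem2.2}, a general element of $\mathcal{S}_{\mathcal{U},4}$ has the form $\Sigma(X,Y) = G(X,Y) + \alpha_4 F_4(X,Y)$, where $G$ is the generic polynomial for $\mathcal{S}_{\mathcal{U}}$ parametrised by $\bm{\gamma}=(\gamma_0,\gamma_1,\gamma_2,\gamma_3)$ and $F_4(X,Y)=(1+X+X^2)(1+Y+Y^2)$. Since $\mathcal{S}_{\mathcal{U}}$ is completely entangled, any product vector in $\mathcal{S}_{\mathcal{U},4}$ either equals $\ket{\psi_4}$ (the case $\bm{\gamma}=\bm{0}$, which is automatic) or has $\alpha_4\neq 0$, in which case we may normalise $\alpha_4=1$ and look for non-trivial solutions with $\bm{\gamma}\neq\bm{0}$.

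Collecting powers of $X$, I would write
\[
\Sigma(X,Y)=g_0(Y)+X\,g_1(Y)+X^2 g_2(Y),
\]
with
\begin{align*}
g_0(Y) &= (\gamma_0+1)+(\gamma_0+1)Y+(\gamma_2+1)Y^2,\\
g_1(Y) &= (\gamma_3+1)+\bigl(1-2(\gamma_0+\gamma_1+\gamma_2+\gamma_3)\bigr)Y+(\gamma_2+1)Y^2,\\
g_2(Y) &= (\gamma_3+1)+(\gamma_1+1)Y+(\gamma_1+1)Y^2.
\end{align*}
As before, $\Sigma$ is a product $p(X)q(Y)$ iff, after handling the degenerate case $g_0\equiv 0$ separately, we have $g_1=\nu_1 g_0$ and $g_2=\nu_2 g_0$ for some $\nu_1,\nu_2\in\mathbb{C}$. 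Matching coefficients of $1,Y,Y^2$ then gives two triples of linear equations in the $\gamma_j$'s which, after eliminating $\nu_1,\nu_2$, reduce to an algebraic system. The cases to branch on naturally are (i) $\gamma_0=-1$ (forcing $g_0$ to start with a $Y^2$ term), (ii) $\gamma_2=-1$, (iii) $\nu_1=1$ or $\nu_2=0$, and the symmetric variants — the analysis mirrors the four sub-cases of the preceding theorem (equations \eqref{2.13}--\eqref{2.16}).

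After solving, I would list the normalised product vectors, one of which is $\ket{\psi_4}$ itself (the trivial solution $\bm{\gamma}=\bm{0}$) and the remaining five of which are the \emph{new} product vectors arising from the perturbation. The verification that each candidate $(\bm{\gamma},\nu_1,\nu_2)$ actually gives a product polynomial is then a matter of substitution; writing the resulting $\Sigma(X,Y)$ explicitly as $p(X)q(Y)$ produces the tensor product form $\ket{u}\otimes\ket{v}$ in $\mathbb{C}^3\otimes\mathbb{C}^3$.

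The main obstacle, as in the proof of the preceding theorem, is bookkeeping: ensuring that the case split on vanishing/non-vanishing of $\nu_1,\nu_2,g_0$ and the various $\gamma_j+1$ factors is exhaustive, so that no spurious product vectors are missed and the count $\tau=6$ is exact. A subtle point worth flagging separately is the degenerate case $g_0\equiv 0$ (equivalently $\gamma_0=-1=\gamma_2$): there one must repeat the product-vector analysis on $Xg_1+X^2g_2$ with $X$ factored out, and check that any solution reduces to one already obtained or to $\ket{\psi_4}$. Once this is settled, \cref{cor.2.3} can be re-stated for the span of $\{\ket{\psi_0},\ket{\psi_1},\ket{\psi_2},\ket{\psi_3}\}$ and $\mathcal{S}_{\mathcal{U},4}$, confirming the remark immediately preceding the theorem.
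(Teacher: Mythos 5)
Your setup is correct: the parametrisation $\Sigma=G_{\bm\gamma}+F_4$ (after rightly discarding $\alpha_4=0$ because $\mathcal{S}_{\mathcal{U}}$ is CES), the coefficient functions $g_0,g_1,g_2$, and the product criterion $g_1=\nu_1 g_0$, $g_2=\nu_2 g_0$ with the degenerate branch $g_0\equiv 0$ (i.e.\ $\gamma_0=-1=\gamma_2$) are all right, and solving that system would indeed recover the same six vectors. But as it stands there is a genuine gap: the theorem asserts product index \emph{exactly} $6$ ``with new product vectors specified in the proof,'' and your proposal never carries out the elimination, never exhibits any product vector beyond $\ket{\psi_4}$, and never shows the case split is exhaustive. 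The entire content of the statement — that the solutions are precisely $\ket{\psi_4}$, $\ket{11}$, $\ket{0}(\ket{0}+\ket{1})$, $(\ket{0}+\ket{1})\ket{2}$, $\ket{2}(\ket{1}+\ket{2})$, $(\ket{1}+\ket{2})\ket{0}$ and nothing else — is announced as ``after solving, I would list\dots'' rather than proved; the bookkeeping you flag as the main obstacle is exactly the part that constitutes the proof.

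It is also worth noting that the paper takes a cleaner route than the one you sketch, and your remark that the analysis ``mirrors'' the $\ket{\psi_0}$ case runs against the paper's explicit warning that the $\ket{\psi_4}$ perturbation ``takes a different form.'' The paper first observes that choosing $\bm\gamma=(1,1,1,1)$ in the generic element of $\mathcal{S}_{\mathcal{U}}$ gives $G=F_4-9XY$, so $\ket{11}\in\mathcal{S}_{\mathcal{U},4}$ and the span of $\mathcal{S}_{\mathcal{U}}$ and $F_4$ equals the span of $\mathcal{S}_{\mathcal{U}}$ and the single monomial $XY$. A general element then reads $\gamma_0(1+Y)+\gamma_1(X^2Y^2+X^2Y)+\gamma_2(Y^2+XY^2)+\gamma_3(X^2+X)+\gamma_4XY$, whose slices $h_0,h_1,h_2$ carry no ``$+1$'' shifts; this makes the exhaustive case analysis on $\gamma_0$ and $\gamma_2$ short, and makes two of the six product vectors ($\ket{\psi_4}$ and $\ket{11}$) immediate. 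Your shifted system in $(\gamma_j+1)$ and $1-2(\gamma_0+\gamma_1+\gamma_2+\gamma_3)$ is equivalent under a linear change of coordinates, but considerably messier to exhaust; if you keep your parametrisation, you must actually solve it and verify that exactly the five new one-dimensional product subspaces above appear, or else adopt the paper's reduction to $XY$ first.
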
 
\begin{proof}
\par  Indeed, for $\bm{\gamma} =  (\gamma_0, \gamma_1, \gamma_2, \gamma_3) \in \mathbb{C}^4$, 
\begin{align*}
G(X,Y) =& \gamma_0 (1+Y) + \gamma_1 (X^2Y^2 +X^2 Y) +\gamma_2 (Y^2 + XY^2) \\
&+\gamma_3(X^2 +X) - 2(\gamma_0+\gamma_1+\gamma_2+\gamma_3) XY.
\end{align*}
is non-zero if and only if $\gamma_j \neq 0$ for some $j=0,\,1,\,2,\,3$, say for $j'$, and in that case either one more form $\gamma_j$'s is non-zero or else $- 2(\gamma_0+\gamma_1+\gamma_2+\gamma_3)$ is non-zero, and thus the number of terms in $G(X,Y)$ is greater than or equal to $3$. 
\par Consider $\bm{\gamma} =(1,1,1,1)$. Then 
\[G(X,Y) =F_4(X,Y)-9XY.\]
So $\mathcal{S}_{\mathcal{U}, 4}$ contains two different product vectors at least,  viz., $\ket{\psi_4}$ and $\ket{11}$.
\par A general element of $\mathcal{S}_{\mathcal{U},4}$ has the form 
\[\Sigma(X,Y) = \gamma_0 (1+Y) + \gamma_1 (X^2Y^2 +X^2 Y) +\gamma_2 (Y^2 + XY^2) 
+\gamma_3(X^2 +X) +\gamma_4 XY,\]
with $\bm{\tilde{\gamma}}=(\gamma_0, \gamma_1,\gamma_2,\gamma_3, \gamma_4) \in \mathbb{C}^5$. 
$\ket{\psi_4}$ and $\ket{11}$ are given by $(1,1,1,1,1)$ and $(0,0,0,0,1)$ respectively. We look for more product vectors for the case $\tilde{\bm{\gamma}}$ not a scalar multiple of $(0,0,0,0,1)$ or of  $(1,1,1,1,1)$.  Now
\begin{eqnarray*}
\Sigma(X,Y) &=& 1.(\gamma_0(1+Y) +\gamma_2 Y^2) +X(\gamma_3 + \gamma_4 Y +\gamma_2 Y^2) +X^2 (\gamma_3 +\gamma_1(Y+Y^2))\\
&=& 1. h_0(Y) + Xh_1(Y) +X^2 h_2(Y), \quad \text{ say.}
\end{eqnarray*}
It gives a non-zero product vector if and only if for some $h(Y) \neq 0,\, \bm{\beta}=(\beta_0,\beta_1,\beta_2) \neq(0,0,0),\, h_j(Y)=\beta_j h(Y),\, j=0,\,1,\,2$. Let this be the situation. 
\par Consider the {\bf case} $\gamma_0\neq 0$. Then $h_0(Y) \neq 0$ and we may take $h(Y) = h_0(Y),\, \beta_0=1$. This happens if and only if 
\begin{alignat*}{2}
\gamma_3 &= \gamma_4 =\beta_1\gamma_0, \quad & \gamma_2 &=\beta_1\gamma_2,\\
\gamma_3 &= \beta_2 \gamma_0, & \gamma_1 &= \beta_2 \gamma_0=\beta_2 \gamma_2.
\end{alignat*}
Now if $\beta_1 =0$, then $\gamma_2 =0 = \gamma_3 =\gamma_4$ and therefore $\gamma_1=0,\, \beta_2 =0$ which gives 
\[\Sigma(X,Y) = \gamma_0 (1+Y).\]

\par On the other hand, if $\beta_1 \neq 0$, then $\gamma_3 = \gamma_4 \neq 0$. 
So $\beta_1\neq 0 \neq \beta_2, \, \gamma_1 \neq 0$ and $\gamma_2 = \gamma_0 \neq 0$. Hence  $\beta_1 =1$ and therefore $\gamma_3 =\gamma_4 =\gamma_0 =\gamma_2$ and $\gamma_1 =\gamma_0$. This gives the case $\bm{\gamma}=(1,1,1,1,1)$ which is not so.  

\par   We now come to the {\bf case} $\gamma_0=0$.\\ 
\noindent {\bf Sub-case} $\gamma_2 \neq 0$. Then $h_0(Y) = \gamma_2 Y^2 \neq 0$ and we may take $h(Y)=h_0(Y),\, \beta_0=1$. Then the  conditions relating $h_1$ and $h_2$ hold if and only if $\gamma_3=0=\gamma_4=\gamma_1$ which gives $h_1(Y) =\gamma_2Y^2,\, h_2(Y)=0$. Thus $\bm{\gamma} =(0,0,\gamma_2,0,0)$ and 
\[\Sigma(X,Y)= \gamma_2(1+X)Y^2.\]
{\bf Sub-case} $\gamma_2=0$. Then $h_0(Y) =0$. So $(\gamma_1,\gamma_3,\gamma_4) \neq (0,0,0)$, this has three  sub-cases as fellows. \\
\noindent {\bf Sub-case A:} $\gamma_1 \neq 0$. Then $h_2(Y) \neq 0$ So we may take $h(Y) =h_2(Y),\, \beta_2 =1$. Then $\gamma_4=\gamma_2=\beta_1 \gamma_1,\, \gamma_3 =\beta_1\gamma_3$. So $\gamma_4=0,\, \beta_1=0$ and hence $\gamma_3=0$. The polynomial representation becomes 
\[\Sigma(X,Y) =\gamma_1 X^2 (Y+Y^2).\]
{\bf Sub-case B:} $\gamma_1=0 \neq \gamma_3$. So $h_2(Y)=\gamma_3 \neq 0$ So we may take $h(Y) =h_2(Y),\, \beta_2=1$. Then we have $\gamma_4 =\gamma_2 =\gamma_0 =0$. This gives $\bm{\tilde{\gamma}}=(0,0,0,\gamma_3,0)$. So 
\[\Sigma(X,Y) =\gamma_3(X+X^2).\]
{\bf Sub-case C:} $\gamma_1=0=\gamma_3,\,\gamma_4 \neq 0$. So $\bm{\tilde{\gamma}}=(0,0,0,0,\gamma_4)$. But that has been already covered. 
\par So we only have six product vectors in $\mathcal{S}_{\mathcal{U},4}$ given by
\begin{align*}
\ket{\psi_4} = \frac{1}{3} \left(\ket{0} + \ket{1} + \ket{2}\right) \left(\ket{0} + \ket{1} + \ket{2}\right), \qquad & \ket{11},\\
\frac{1}{\sqrt{2}} \ket{0} (\ket{0} + \ket{1}), \qquad& \frac{1}{\sqrt{2}} (\ket{0} + \ket{1})\ket{2},\\
\frac{1}{\sqrt{2}} \ket{2} (\ket{1} + \ket{2}), \qquad & \frac{1}{\sqrt{2}}  (\ket{1} + \ket{2})\ket{0}.
\end{align*}
\end{proof}
\subsection{Span of UPB in tripartite qubit system}\label{ssec2.3}


\par We consider the UPB $\mathcal{D}$ in $\mathbb{C}^2 \otimes \mathbb{C}^2 \otimes \mathbb{C}^2$ given in Equation (6) of Ref\cite{B2} and write the polynomial representation of this. For readability we first write the UPB which is as follows. 
\begin{alignat}{2}\label{upb3p}
\ket{\phi_1} &= \ket{0,1,+},\nonumber\\
\ket{\phi_2} &= \ket{1,+, 0}, \nonumber\\
\ket{\phi_3} &= \ket{+,0,1},\\
\ket{\phi_4} &= \ket{-,-,-},\nonumber 
\end{alignat} 
$\{\ket{0} ,\, \ket{1}\}$ are the standard basis of $\mathbb{C}^2$ and $ \ket{\pm} =\frac{\ket{0} \pm \ket{1}}{\sqrt{2}}.$ Let $\mathcal{V}$ be the span of all these vectors. We use the similar technique as earlier,  introducing the variable $Z$ to represent vectors of the third system. The polynomial representations of $\ket{\pm}$ in the first system is $\frac{1 \pm X}{\sqrt{2}}$. $X$ in the first system  is replaced by $Y$ and $Z$ in the second and the third system respectively. 
\par Thus by \cref{ss1.4} the polynomial representations of the vectors of \eqref{upb3p}, without normalisation scalars,  are given by 
\begin{alignat}{2}\label{upb3pp}
F_1(X,Y,Z) &= Y(1+Z),  \quad & F_2(X,Y,Z) &= X(1+Y), \nonumber\\
F_3(X,Y,Z) &=(1+X)Z,  	& F_4(X,Y,Z) &=(1-X)(1-Y)(1-Z).
\end{alignat}

Any non-zero linear combination of $\ket{\phi_j},\, 1 \le j \le 4$ other than their own scalar multiples has the form 
\[\ket{\psi} = \alpha \ket{\phi_1} + \beta \ket{\phi_2} + \gamma \ket{\phi_3} + \delta \ket{\phi_4},\]
with at least two of the scalars $\alpha, \beta, \, \gamma,\, \delta$  non-zero. 

\begin{theorem}\label{3qupb}
The linear span  $\mathcal{V}$ of the vectors in \eqref{upb3p} has only entangled vectors other than the scalar multiples of $\ket{\phi_j}$'s.  
\end{theorem}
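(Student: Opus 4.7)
The plan is to translate the statement into one about rank-one tensors via the polynomial representation of \cref{ss1.4}. Any $\ket{\psi}\in\mathcal{V}$ corresponds to a polynomial
\[ F(X,Y,Z)=\alpha F_1(X,Y,Z)+\beta F_2(X,Y,Z)+\gamma F_3(X,Y,Z)+\delta F_4(X,Y,Z) \]
of tridegree at most $(1,1,1)$. Expanding using the explicit expressions in \eqref{upb3pp} gives a $2\times 2\times 2$ coefficient array $T_{abc}$, $a,b,c\in\{0,1\}$, with $T_{000}=\delta$, $T_{111}=-\delta$, and the remaining six entries of affine shape $\pm\alpha\pm\delta$, $\pm\beta\pm\delta$, or $\pm\gamma\pm\delta$. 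Then $\ket{\psi}$ is a product vector precisely when this tensor has rank at most one, that is, when $T_{abc}=u_a v_b w_c$ for some scalars $u_a, v_b, w_c$.

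First I would handle the case $\delta=0$. Then $T_{000}$ and $T_{111}$ both vanish, so at least one of $u_0,v_0,w_0$ and at least one of $u_1,v_1,w_1$ must vanish. Each such choice wipes out an entire slice of $T$, and reading off the constraints from the explicit coefficient formulas eliminates two of $\alpha,\beta,\gamma$; thus $F$ is either zero or a scalar multiple of one of $F_1, F_2, F_3$.

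Next I would handle the case $\delta\neq 0$. Here $T_{000},T_{111}\neq 0$ force all $u_a,v_b,w_c$ to be nonzero, so I may normalise $u_0=v_0=1$ and $w_0=\delta$; this reads off $u_1,v_1,w_1$ uniquely from $T_{100},T_{010},T_{001}$. Demanding that $T_{110},T_{101},T_{011}$ also agree with the corresponding products $u_1v_1w_0$, $u_1v_0w_1$, $u_0v_1w_1$ produces three quadratic relations of symmetric shape $\alpha(\beta-\delta)=2\beta\delta$ and its two cyclic rotations, while the final consistency at $T_{111}$ reads $(\alpha-\delta)(\beta-\delta)(\gamma-\delta)=-\delta^3$. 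The main obstacle is that the three slice-wise relations alone admit the spurious non-trivial solution $\alpha=\beta=\gamma=3\delta$, which is ruled out only by the last $T_{111}$ constraint. Concretely, multiplying the three cyclic relations together gives $\alpha\beta\gamma(\alpha-\delta)(\beta-\delta)(\gamma-\delta)=8\alpha\beta\gamma\delta^3$, and comparing with the $T_{111}$ condition forces $\alpha\beta\gamma=0$ (else $\delta=0$); feeding any one vanishing parameter back into the cyclic relations then cascades to $\alpha=\beta=\gamma=0$. Hence $F=\delta F_4$, and the theorem follows.
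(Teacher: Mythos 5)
Your proposal is correct, and it takes a genuinely different route from the paper. You factor the full $2\times2\times2$ coefficient tensor: the array $T$ attached to $\alpha F_1+\beta F_2+\gamma F_3+\delta F_4$ indeed has $T_{000}=\delta$, $T_{111}=-\delta$ and mixed entries $\beta\mp\delta$, $\alpha\mp\delta$, $\gamma\mp\delta$; your vanishing-slice argument for $\delta=0$ correctly leaves only multiples of $F_1,F_2,F_3$, and for $\delta\neq0$ your normalisation gives exactly the three cyclic quadratics $\alpha(\beta-\delta)=2\beta\delta$, $\beta(\gamma-\delta)=2\gamma\delta$, $\gamma(\alpha-\delta)=2\alpha\delta$ together with $(\alpha-\delta)(\beta-\delta)(\gamma-\delta)=-\delta^3$; multiplying the three and comparing yields $9\alpha\beta\gamma\delta^3=0$, so $\alpha\beta\gamma=0$, and the cascade to $\alpha=\beta=\gamma=0$ is valid, leaving only multiples of $\ket{\phi_4}$. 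The paper instead writes $\Psi=U(Y,Z)+XV(Y,Z)$, shows $U\neq0\neq V$ once at least two coefficients are nonzero, and derives a contradiction from the linear proportionality $V=aU$; that argument is purely linear and proves something slightly stronger, namely that such a $\ket{\psi}$ is not even a product across the $\{1\}$ versus $\{2,3\}$ cut, which is precisely what the theorem immediately following permutes (in $X,Y,Z$) to conclude genuine entanglement. Your rank-one analysis establishes the stated theorem — no full product vectors beyond scalar multiples of the $\ket{\phi_j}$'s — but would not by itself support that strengthening, since it does not exclude vectors that are biseparable without being fully product. A pleasant byproduct of your computation is the near-solution $\alpha=\beta=\gamma=3\delta$, eliminated only by the $T_{111}$ condition: it is the exact analogue of the sixth product vector $\ket{\chi}$ that does appear in the TILES span in \cref{th:2.1}, and it makes transparent why no such extra product vector arises in $\mathcal{V}$.
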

\begin{proof}
The polynomial representation of $\ket{\psi}$ is given by 
\begin{eqnarray*}
\Psi(X,Y,Z) &=& \alpha Y(1+Z) + \beta X(1+Y) +\gamma (1+X)Z + \delta (1-X)(1-Y)(1-Z) \\
&=& [\alpha Y(1+Z) + \gamma Z + \delta (1-Y)(1-Z)] \\ &&+ X [\beta(1 + Y) + \gamma Z -\delta (1-Y)(1-Z)]\\
&=& [\delta +(\alpha - \delta) Y + (\gamma - \delta) Z + (\alpha + \delta) YZ] \\&& + X[ (\beta -\delta) + (\beta + \delta) Y + (\gamma + \delta) Z - \delta YZ] \\
&=& U(Y,Z) + X V(Y,Z), \quad \text{say,}
\end{eqnarray*}
where $U$ and $V$ are defined in an obvious way. Now $U=0$ if and only if $\delta =0 = \alpha =\gamma$. But at least two of the $\alpha, \, \beta, \, \gamma,\, \delta$ should be non-zero. So $U \neq 0$. Similarly $V \neq 0$ as well. 
\par Now we consider the case where both of them are non-zero. $\ket{\psi}$ is bipartite along $\{1\}$ vs $\{2,3\}$ cut, if and only if $V = a U$ for some non-zero scalar $a$. In that case 
\[\Psi(X,Y,Z) = (1 + a X) U(Y,Z).\]
This happens if and only if 
\begin{eqnarray}
\beta - \delta &=& a \delta,\\
\beta + \delta &=& a (\alpha - \delta),\\
\gamma + \delta &=& a (\gamma - \delta),\\
-\delta &=& a  (\alpha + \delta).
\end{eqnarray}
Simple calculations with the above equations give us that $\alpha =0$ which, in turn, gives that $\beta =0$ as well. As a result, the last equation gives us that $(a+1) \delta =0$. However $\delta$ can not be taken as zero, as by assumption at least two of the scalars $\alpha, \beta, \, \gamma,\, \delta$ are non-zero. Thus $a =-1$ which will imply that $\gamma =0$ a contradiction of the previous condition on the number of non-zero scalars. Therefore, the above system of equations is not consistent and hence $\ket{\psi}$ is not a product vector under the bipartite cut $\{1\}$ vs $\{2,3\}$, and a fortiori, not a product vector. 
\end{proof}
\begin{theorem}
The linear span $\mathcal{V}$ of $\{\ket{\phi_j}: 1 \le j \le 4\}$ has only genuinely entangled vectors other than the scalar multiples of $\ket{\phi_j},\, 1 \le j \le 4$. 
\end{theorem}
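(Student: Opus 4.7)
The plan is to reduce the genuinely-entangled statement to \cref{3qupb} by exploiting an obvious cyclic symmetry of the UPB $\mathcal{D}$. A vector in a tripartite system is genuinely entangled precisely when it fails to be a product vector under each of the three bipartite cuts $\{1\}|\{2,3\}$, $\{2\}|\{1,3\}$, and $\{3\}|\{1,2\}$. Since \cref{3qupb} already rules out productness under the cut $\{1\}|\{2,3\}$ for any non-trivial linear combination $\ket{\psi}=\alpha\ket{\phi_1}+\beta\ket{\phi_2}+\gamma\ket{\phi_3}+\delta\ket{\phi_4}$ (i.e., any combination which is not a scalar multiple of a single $\ket{\phi_j}$), it suffices to handle the remaining two cuts.

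The key observation is that the cyclic permutation $\sigma\colon(1,2,3)\mapsto(2,3,1)$ of the three tensor factors permutes the UPB cyclically: it sends $\ket{\phi_1}=\ket{0,1,+}\mapsto\ket{1,+,0}=\ket{\phi_2}$, $\ket{\phi_2}\mapsto\ket{\phi_3}$, $\ket{\phi_3}\mapsto\ket{\phi_1}$, and fixes $\ket{\phi_4}=\ket{-,-,-}$. Consequently $\sigma$ maps the span $\mathcal{V}$ onto itself, and it carries the cut $\{1\}|\{2,3\}$ to the cut $\{2\}|\{1,3\}$. Hence, given any $\ket{\psi}\in\mathcal{V}$ which is not a scalar multiple of one of the $\ket{\phi_j}$, the vector $\sigma(\ket{\psi})$ is again in $\mathcal{V}$ and is not a scalar multiple of any $\ket{\phi_j}$; by \cref{3qupb}, $\sigma(\ket{\psi})$ is not a product under $\{1\}|\{2,3\}$, which is equivalent to saying that $\ket{\psi}$ is not a product under $\{2\}|\{1,3\}$. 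Applying $\sigma^2$ handles the cut $\{3\}|\{1,2\}$ in the same way.

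Concretely, I would spell this out by redoing the polynomial computation of \cref{3qupb} after the substitution induced by $\sigma$: write $\Psi=U'(X,Z)+YV'(X,Z)$ and obtain an analogous system of four linear equations in $\alpha,\beta,\gamma,\delta,a$ whose forced conclusion is again $\alpha=\beta=\gamma=\delta=0$ (contradicting the hypothesis that at least two coefficients are non-zero), and similarly for the $Z$-decomposition $\Psi=U''(X,Y)+ZV''(X,Y)$. The algebra is structurally identical because of the cyclic symmetry, so no new case analysis is required.

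The only subtle point — which I expect to be the genuine obstacle but is in fact a non-issue here — is to verify that the set of coefficient vectors $(\alpha,\beta,\gamma,\delta)$ that give scalar multiples of individual $\ket{\phi_j}$'s is preserved by $\sigma$: since $\sigma$ permutes the $\ket{\phi_j}$'s, this set is indeed $\sigma$-stable, so the exception clause in \cref{3qupb} translates correctly under the symmetry. Combining the three cuts, $\ket{\psi}$ is not product under any bipartite splitting, which is exactly the definition of genuine entanglement.
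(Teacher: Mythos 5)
Your proposal is correct and takes essentially the same route as the paper: the paper's own proof likewise observes that permuting the variables $X,Y,Z$ (equivalently, cyclically permuting the three qubits) permutes $\ket{\phi_1},\ket{\phi_2},\ket{\phi_3}$ and fixes $\ket{\phi_4}$, so the argument of \cref{3qupb} transfers to the other two bipartite cuts. Your write-up merely packages this symmetry as an explicit permutation operator applied to what the proof of \cref{3qupb} establishes for the cut $\{1\}$ vs $\{2,3\}$, which is the same idea.
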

\begin{proof}
The role of $\ket{\phi_j}$'s for for $j=1,\, 2, \,3$ can be permuted by permuting the variables $X,\, Y,\, Z$. So the proof of the \cref{3qupb} can be re-modelled in other bipartite cuts as well. Hence $\ket{\psi}$ is genuinely entangled.  
\end{proof}
\subsection{Perturbation of $\mathcal{S}_\mathcal{V}=\mathcal{V}^\perp$ by a member of  UPB $\mathcal{D}$} 
\par Let $\mathcal{S}_\mathcal{V}$ be $\mathcal{V}^\perp$.
\begin{lemma}\label{lem2.2.3}
Let $\ket{\xi} \in \mathcal{S}_\mathcal{V}$ and $G(X,Y,Z)$ be its polynomial representation. Then $G=G_{a,b,c,d}$ with $(a,b,c,d) \in \mathbb{C}^4$  given by  
\begin{align}
G(X,Y,Z) =& d+2(a+b+c) +(aX +bY +cZ)\nonumber \\
&- (aXY +b YZ +c XZ) +d XYZ.
\end{align}
This gives rise to a linear bijective map between $\mathcal{S}_\mathcal{V}$ and $\mathbb{C}^4$. 
\end{lemma}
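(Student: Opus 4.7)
The plan is to compute $\mathcal{V}^{\perp}$ directly from the orthogonality conditions imposed by the four basis vectors of $\mathcal{V}$, working entirely in the polynomial picture set up in \cref{ss1.4}. Since $\{\ket{\phi_j}\}_{j=1}^{4}$ is orthonormal, one has $\dim \mathcal{V} = 4$, and therefore $\dim \mathcal{S}_{\mathcal{V}} = 8 - 4 = 4$; this is the dimension count that the four-parameter family $G_{a,b,c,d}$ will need to match.

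First I would write an arbitrary $\ket{\xi}\in\mathcal{H}=(\mathbb{C}^{2})^{\otimes 3}$ as
\[
G(X,Y,Z) = \sum_{i,j,k\in\{0,1\}} g_{ijk}\, X^{i}Y^{j}Z^{k},
\]
noting that the inner product in polynomial notation is the usual coefficient-wise form (and the coefficients of $F_{1},\dots,F_{4}$ in \eqref{upb3pp} are all real, so conjugation plays no role). The orthogonality conditions $\langle \phi_{j}\mid\xi\rangle = 0$ for $j=1,2,3,4$ then translate to the four linear equations
\begin{align*}
\langle F_{1}\mid G\rangle &= g_{010} + g_{011} = 0, \\
\langle F_{2}\mid G\rangle &= g_{100} + g_{110} = 0, \\
\langle F_{3}\mid G\rangle &= g_{001} + g_{101} = 0, \\
\langle F_{4}\mid G\rangle &= g_{000} - g_{100} - g_{010} - g_{001} + g_{110} + g_{101} + g_{011} - g_{111} = 0.
\end{align*}

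Next I would solve this system by choosing the coefficients $g_{100}, g_{010}, g_{001}, g_{111}$ as free parameters and renaming them $a,b,c,d$ respectively. The first three equations immediately yield
\[
g_{110} = -a,\qquad g_{011} = -b,\qquad g_{101} = -c,
\]
and substituting these into the fourth equation gives
\[
g_{000} = 2(a+b+c) + d.
\]
Re-assembling the monomials produces exactly
\[
G_{a,b,c,d}(X,Y,Z) = \bigl(d+2(a+b+c)\bigr) + (aX+bY+cZ) - (aXY+bYZ+cXZ) + d\,XYZ,
\]
which is the claimed formula.

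Finally, the map $(a,b,c,d)\mapsto G_{a,b,c,d}$ is obviously linear in $(a,b,c,d)$. Injectivity is immediate because $a,b,c,d$ appear respectively as the coefficients of the monomials $X$, $Y$, $Z$, $XYZ$, so distinct tuples produce distinct polynomials. Surjectivity onto $\mathcal{S}_{\mathcal{V}}$ is exactly the content of the computation above: every solution of the four orthogonality conditions has this form. Combined with the dimension count $\dim \mathcal{S}_{\mathcal{V}} = 4 = \dim \mathbb{C}^{4}$, this shows the map is a linear bijection. No real obstacle is expected; the only delicate point is keeping the bookkeeping of the eight monomials straight when writing down the four orthogonality equations.
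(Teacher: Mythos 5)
Your proposal is correct and follows essentially the same route as the paper: you impose the four orthogonality conditions coming from $F_1,\dots,F_4$ on a general polynomial $G$, solve for the dependent coefficients in terms of the free parameters $a,b,c,d$ (coefficients of $X$, $Y$, $Z$, $XYZ$), and recover exactly $G_{a,b,c,d}$, with linearity and bijectivity read off from the parametrisation (your added dimension count is a harmless extra check). No gaps.
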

\begin{proof}
\par  $\mathcal{S}_\mathcal{V}= \{\ket{\phi_1}, \ket{\phi_2}, \ket{\phi_3}, \ket{\phi_4}\}^\perp$. Let $\ket{\xi} \in \mathcal{S_V}$ have the polinomial representation G(X,Y,Z). Then 
\newline $-$ coefficient of $YZ$ in $G = $ coefficient of $Y$ in $G=b$ (say).
\newline $-$ coefficient of $XY$ in $G = $ coefficient of $X$ in $G=a$ (say). 
\newline $-$ coefficient of $XZ$ in $G = $ coefficient of $Z$ in $G=c$ (say).
\newline Let $d'$ be the constant term. Then  $d' - ~($sum of the coefficients of $X,\, Y$, and $Z) ~+ ~($sum of the coefficients of $XY,\, YZ$, and $XZ) ~ - ~($coefficient of $XYZ)~=0$. So  
\[\text{coefficient of } XYZ =d' - 2(a+b+c).\]
As a consequence  
\[G(X,Y,Z)= d' + (aX + bY +cZ) -(aXY +bYZ +cXZ) +(d'-2(a+b+c)) XYZ.\]
 Putting  $d=d'- 2(a+b+c)$, we have 
\begin{align*}
G(X,Y,Z) &= d+2(a+b+c) +(aX +bY +cZ) - (aXY +b YZ +c XZ) +d XYZ.
\end{align*}
We may denote it by $G_{a,b,c,d}(X,Y,Z)$ with $(a,b,c,d) \in \mathbb{C}^4$. This map $(a,b,c,d) \mapsto G_{a,b,c,d}(X,Y,Z) \mapsto \ket{\xi_{a,b,c,d}}$ is linear and bijective. 
\end{proof}

\begin{lemma} \label{lem.2.2.4}
Let $\mathcal{S}_{\mathcal{V},4}$ be the linear span of $\mathcal{S_V}$ and $\ket{\phi_4}$. Then it coincides with the linear span of  $\mathcal{S_V}$  and $\ket{000}$, and also with that of $\mathcal{S_V}$  and $\ket{111}$.
\end{lemma}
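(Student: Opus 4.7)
The natural strategy is to work entirely in the polynomial representation, reducing each span equality to a single linear-algebra question: can we add a suitable element of $\mathcal{S}_\mathcal{V}$ to $\ket{\phi_4}$ so as to cancel every monomial except $1$ (respectively, every monomial except $XYZ$)? If so, then $\ket{000}$ (respectively $\ket{111}$) lies in the span of $\mathcal{S}_\mathcal{V}$ and $\ket{\phi_4}$, and since $\ket{\phi_4} \notin \mathcal{S}_\mathcal{V}$ (as $\mathcal{S}_\mathcal{V}$ is a CES), a dimension count gives both of the claimed equalities.

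Concretely, I would begin by recording
\[
F_4(X,Y,Z) = (1-X)(1-Y)(1-Z) = 1 - X - Y - Z + XY + YZ + XZ - XYZ,
\]
and recall from \cref{lem2.2.3} that every element $\ket{\xi_{a,b,c,d}} \in \mathcal{S}_\mathcal{V}$ has the polynomial form
\[
G_{a,b,c,d}(X,Y,Z) = [d + 2(a+b+c)] + aX + bY + cZ - aXY - bYZ - cXZ + d\, XYZ,
\]
so the space $\mathcal{S}_{\mathcal{V},4}$ has polynomial representation $\{F_4 + G_{a,b,c,d} : (a,b,c,d) \in \mathbb{C}^4\} + \mathbb{C}\cdot F_4$.

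Next, I would ask whether $\ket{000}$ (i.e.\ the polynomial $1$) is reached: set $F_4 + G_{a,b,c,d} = \lambda$. Matching the coefficients of $X$, $Y$, $Z$ forces $a=b=c=1$; the coefficients of $XY$, $YZ$, $XZ$ then vanish automatically (this is the key structural coincidence); and the $XYZ$ coefficient forces $d=1$, giving $\lambda = 1 + d + 2(a+b+c) = 8$. Hence $\ket{000} \in \mathcal{S}_{\mathcal{V},4}$, and since $\ket{000} \notin \mathcal{S}_\mathcal{V}$ (again by the CES property), the span of $\mathcal{S}_\mathcal{V}$ and $\ket{000}$ equals $\mathcal{S}_{\mathcal{V},4}$ for dimension reasons. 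An entirely analogous matching for $F_4 + G_{a,b,c,d} = \mu\, XYZ$ again gives $a=b=c=1$ automatically, and then the constant-term equation $1 + d + 2(a+b+c) = 0$ forces $d=-7$ with $\mu=-8$, yielding $\ket{111} \in \mathcal{S}_{\mathcal{V},4}$ and the second equality by the same dimension argument.

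There is no real obstacle: the only thing to watch is that $F_4$ has the special feature that its coefficients of $X$ and $XY$ differ only in sign, and similarly for the pairs $(Y,YZ)$ and $(Z,XZ)$, which is precisely what allows $(a,b,c)=(1,1,1)$ to simultaneously kill six monomials at once, leaving only the single free parameter $d$ to dial between the $\ket{000}$ and $\ket{111}$ targets.
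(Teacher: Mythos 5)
Your proposal is correct and follows essentially the same route as the paper: both work in the polynomial representation of \cref{lem2.2.3} and observe that adding $G_{1,1,1,d}$ to $F_4$ kills all mixed monomials, yielding $F_4+G_{1,1,1,1}=8$ (so $\ket{000}\in\mathcal{S}_{\mathcal{V},4}$) and hence the span equalities. The only cosmetic difference is for $\ket{111}$: the paper gets it from $\ket{\xi_{0,0,0,1}}=\ket{000}+\ket{111}\in\mathcal{S}_\mathcal{V}$ together with $\ket{000}\in\mathcal{S}_{\mathcal{V},4}$, while you redo the coefficient matching directly to obtain $F_4+G_{1,1,1,-7}=-8\,XYZ$, which amounts to the same computation.
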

\begin{proof}
Consider $\ket{\eta} = \ket{\phi_4} + \ket{\xi_{a,b,c,d}}$ with $(a,b,c,d) \in \mathbb{C}^4$. Its polynomial representation is 
\begin{align*}
K_{a,b,c,d}(X,Y,Z) &= [d+2(a+b+c) +1]+[(a-1)X +(b-1)Y + (c-1)Z]\\
& - [(a-1)XY +(b-1) YZ + (c-1) XZ] + (d-1) XYZ\\
&= 8 + G_{a-1,b-1,c-1, d-1}(X,Y,Z).
\end{align*}
So $\ket{000} \in \mathcal{S}_{\mathcal{V},4}$. We note that $\ket{\xi_{0,0,0,1}} \in \mathcal{S}_{\mathcal{V},4}$. So $\ket{111} \in\mathcal{S}_{\mathcal{V},4}$.
\end{proof}
\begin{theorem}\label{th2.9}
$\mathcal{S}_{\mathcal{V},4}$ is a quasi-completely entangled space with product index 6 with specified product vectors as given in the proof. 
\end{theorem}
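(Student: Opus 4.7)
The plan is to leverage the polynomial parametrization from \cref{lem2.2.3} and \cref{lem.2.2.4}. Since $\mathcal{S}_{\mathcal{V},4}$ equals the span of $\mathcal{S}_\mathcal{V}$ and $\ket{000}$, and $\ket{000}$ has polynomial representation the constant $1$, each vector in $\mathcal{S}_{\mathcal{V},4}$ admits a unique polynomial form
\[
\Sigma(X,Y,Z) = f + aX + bY + cZ - aXY - bYZ - cXZ + dXYZ,
\]
parametrized linearly by $(a,b,c,d,f) \in \mathbb{C}^5$. The task then reduces to determining all $(a,b,c,d,f)$ for which $\Sigma$ factors as $p(X) q(Y) r(Z)$ with $p, q, r$ non-zero polynomials of degree at most one.

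I would split $\Sigma = P_0(Y,Z) + X P_1(Y,Z)$ where $P_0 = f + bY + cZ - bYZ$ and $P_1 = a - aY - cZ + dYZ$. Any factorization $\Sigma = (p_0 + p_1 X) q(Y) r(Z)$ either has $p_0 = 0$, forcing $P_0 = 0$ and reducing to a factorization of $X \cdot P_1$ as a product of polynomials in $Y$ and $Z$ separately; or has $p_0 \neq 0$, forcing $P_1 = \lambda P_0$ for $\lambda = p_1/p_0$, along with $P_0$ itself factoring as $q(Y) r(Z)$. Comparing coefficients in $P_1 = \lambda P_0$ yields the key relation $(1+\lambda)c = 0$, which naturally bifurcates the second case into $\lambda = -1$ and $c = 0$.

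Carrying out the branches will produce the six product vectors explicitly. The $P_0 = 0$ branch forces $f = b = c = 0$ and reduces to factoring $X(a - aY + dYZ)$, giving $\ket{111}$ (when $a = 0, d \neq 0$) and $\ket{1,-,0}$ (when $a \neq 0, d = 0$). The $\lambda = -1$ branch imposes $a = b = d$, $f = -a$, and a short computation shows $P_0$ factors only when either $c = a$ (yielding $\ket{\phi_4} = \ket{-,-,-}$) or $a = 0$ (yielding $\ket{-,0,1}$). The $c = 0, \lambda = 0$ sub-branch yields $\ket{000}$ and $\ket{0,1,-}$. Finally, the residual sub-branch $c = 0, \lambda \notin \{0,-1\}$ forces $P_0 = f(1 - Y + YZ)$, and a direct coefficient comparison shows $1 - Y + YZ$ does not factor as $q(Y) r(Z)$ (it would require simultaneously $q_0 r_0 = q_1 r_1 = 1$ and $q_0 r_1 = 0$), contributing nothing new. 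The main technical obstacle will be the bookkeeping across degenerate sub-cases where one of $p_0, q_0, r_0$ vanishes, and verifying that each such degenerate choice reproduces a vector already on the list rather than giving a new one. Putting everything together confirms that $\mathcal{S}_{\mathcal{V},4}$ is QCES of product index exactly six, with the specified product vectors
\[
\ket{000},\quad \ket{111},\quad \ket{-,-,-},\quad \ket{1,-,0},\quad \ket{0,1,-},\quad \ket{-,0,1}.
\]
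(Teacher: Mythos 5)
Your proposal is correct and takes essentially the same route as the paper: the same five-parameter polynomial form for elements of $\mathcal{S}_{\mathcal{V},4}$ (via \cref{lem2.2.3} and \cref{lem.2.2.4}), the same split $\Sigma = P_0 + X P_1$ along the $\{1\}$ vs $\{2,3\}$ cut, and coefficient comparisons producing exactly the paper's six product vectors. The only difference is organizational --- you branch on $(1+\lambda)c=0$ where the paper branches on the constant and $XYZ$ coefficients and uses the ``conical'' observation --- and the remaining bookkeeping (e.g.\ ruling out $a\neq 0\neq d$ in the $P_0=0$ branch) is routine.
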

\begin{proof}
It follows from the proof of \cref{lem2.2.3} that a general non-zero element $\ket{\xi}$ of $\mathcal{S}_{\mathcal{V},4}$ has polynomial representation  $F(X,Y,Z)$ of the  form 
\[ \alpha + (aX +bY +cZ) -a(XY + bYZ + cXZ)  + \beta XYZ,\]
with $\bm{0} \neq (\alpha, a,b,c,\beta)\in \mathbb{C}^5$. $\ket{\xi}$ is a product vector if and only if $F(X,Y,Z) = p(X)q(Y)r(Z)$ for some polynomials $p,\,q,\,r$. In this case, the lowest and the highest degree  homogeneous part of $F(X,Y,Z)$ have only one term. We express this condition as, \emph{$F$ is conical} following Ref \cite{MR3635750}. 
\par We clearly have six product vectors which correspond to 
\begin{eqnarray*}
(1,0,0,0,0), \quad &\text {viz. }& \ket{\xi} = \ket{000},\\
(0,0,0,0,1), \quad &\text {viz. }& \ket{\xi} = \ket{111},\\
(1,-1,-1,-1,-1), \quad &\text {viz. }& \ket{\xi} = (\ket{0}  - \ket{1}) \otimes (\ket{0}  - \ket{1}) \otimes (\ket{0}  - \ket{1}),\\
(0,1,0,0,0), \quad &\text {viz. }& \ket{\xi} = \ket{1} \otimes (\ket{0}  - \ket{1}) \otimes \ket{0} ,\\
(0,0,1,0,0), \quad &\text {viz. }& \ket{\xi} = \ket{0} \otimes \ket{1} \otimes (\ket{0}  - \ket{1})  ,\\
(0,0,0,1,0), \quad &\text {viz. }& \ket{\xi} = (\ket{0}  - \ket{1})\otimes \ket{0} \otimes \ket{1} .
\end{eqnarray*}
The normalised versions are obtained by replacing the last four by their suitable scalar multiples. 
It remains to show that there is no more product vector beyond those in the list. We confine our attention to the situation that $\ket{\xi}$ is a product vector. 
\par {\bf Case} $\alpha =0=\beta$ and $(a,b,c) \neq (0,0,0)$.  We have 
\[F(X,Y,Z) = aX +bY + cZ -(aXY + bYZ + cXZ).\]
As noted above, $F$ must be conical. So one and only one of the $a,\, b,\, c$ is not equal to zero, which corresponds to the last three elements of the above list. 
\par {\bf Case} $  \alpha \neq 0 = \beta$. Subcase $(a,b,c) \neq (0,0,0)$. But the product vectors have only conical polynomial representation. So one and only one of the $a,\, b,\, c$ is not equal to zero. Consider $a \neq 0 =b =c$. 
\[ F(X,Y,Z) = \alpha + aX -a XY,\]
which can not be of the form $p(X) q(Y)r(Z)$ for polynomials $p,\,q,$ and $r$. 
\par Similarly we can deal with other sub-cases and also with the {\bf case} $\alpha =0 \neq \beta$.
\par {\bf Case} $\alpha \neq 0 \neq \beta$.  
\begin{eqnarray*}
F(X,Y,Z) &=& (\alpha + bY + cZ -b YZ) \\
&& + X(a-aY -cZ +\beta YZ) \\
&=&U(Y,Z) + X V(Y, Z), \text{  say.}  
\end{eqnarray*}
Because of our assumption $U \neq 0$ and $V \neq 0$. So $\ket{\xi}$ is a product vector only if it is a product vector in the bipartite cut $\{\{1\},\{2,3\}\}$ if and only if $U = \gamma V$ for some scalar $\gamma \neq 0$. Suppose it is so. Then we have
\begin{align}
\alpha &= \gamma a, \tag{I}\\
b &= -\gamma a, \tag{II}\\
c &= -\gamma c, \tag{III}\\
-b &= \gamma \beta \tag{IV}.
\end{align}
We can observe that $a = \beta \neq0, \, b = -\alpha \neq 0$, and hence 
\begin{align*}
V(Y,Z) &= a -a Y -c Z +a YZ \\
&= a(1-Y)+(-c +aY)Z.
\end{align*}
So if $V( Y,Z) =q(Y) r(Z)$ then $c =a \neq 0$ and then $V(Y,Z) = a(1-Y)(1-Z)$. This with (III) gives $\gamma =-1$. So 
\[F(X,Y,Z)=a (1-X)(1-Y)(1-Z),\]
which corresponds to the third one in the list.
\end{proof}
\begin{theorem}\label{2.2.6}
The linear span $\mathcal{S}_{\mathcal{V},1}$ of  $\mathcal{S}_{\mathcal{V}}$ and $\ket{\phi_1}$ is QCES of product index   six  with specified product  vectors as given in the proof. 
\end{theorem}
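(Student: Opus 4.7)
The plan is to mirror the argument of \cref{th2.9}: parameterise the five-dimensional space $\mathcal{S}_{\mathcal{V},1}$ by writing a generic element as $\lambda\ket{\phi_1}+\ket{\xi_{a,b,c,d}}$, translate to polynomials via \cref{lem2.2.3}, and then split along the bipartite cut $\{1\}$ versus $\{2,3\}$. Concretely, introducing the abbreviations $p=b+\lambda$ and $q=\lambda-b$, the polynomial representation becomes
\[F(X,Y,Z)=U(Y,Z)+X\,V(Y,Z),\]
with $U=\alpha+pY+cZ+qYZ$, $V=a-aY-cZ+dYZ$, and $\alpha=d+2(a+c)+(p-q)$. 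Since every tripartite product vector is automatically a product along the $\{1\}$ versus $\{2,3\}$ cut, we are reduced to the three scenarios $V=0$, $U=0$, or $V=\gamma U$ for some $\gamma\in\mathbb{C}$.

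The cases $V=0$ and $U=0$ are quickly disposed of. If $V=0$ then $a=c=d=0$ and $F=(p-q)+pY+qYZ$; matching this against $(r_0+r_1Y)(s_0+s_1Z)$ forces $r_0s_1=0$, and the two resulting branches produce $\ket{\phi_1}=\ket{0,1,+}$ (when $p=q$) and $\ket{0,+,0}$ (when $q=0$). If $U=0$ then $\alpha=p=c=q=0$ and the constraint on $\alpha$ collapses to $d=-2a$, so $F=aX(1-Y-2YZ)$; a direct coefficient comparison shows $1-Y-2YZ$ admits no factorisation as $r(Y)s(Z)$, so this case contributes no new product vector.

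The substantive case is $V=\gamma U$ with $\gamma\neq 0$. Equating coefficients of $U$ and $V$ yields $\alpha=-p$, $a=-\gamma p$, $d=\gamma q$ and $c(1+\gamma)=0$; substituting into the constraint on $\alpha$ gives $2c=(\gamma-1)(2p-q)$. This splits into three branches: $(\gamma,c)=(1,0)$, $(\gamma,c)=(-1,\,q-2p)$, and the degenerate $\gamma\notin\{\pm 1\}$ with $c=0$ and $q=2p$. For each branch I will further require that $U(Y,Z)$ itself factor as $r(Y)s(Z)$, a condition encoded by four scalar equations. A short elimination shows the $\gamma=1$ branch yields $\ket{+,1,1}$ (from $p=0$) and $\ket{+,-,0}$ (from $q=0$); the $\gamma=-1$ branch forces $q=p$ when $p\neq 0$, producing $\ket{-,-,+}$, and produces $\ket{-,+,1}$ when $p=0$; while the degenerate branch reduces to $U=p(-1+Y(1+2Z))$, which never factors unless $p=0$. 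Collecting the contributions of all three scenarios gives exactly the six product vectors
\[\ket{\phi_1}=\ket{0,1,+},\ \ket{0,+,0},\ \ket{+,-,0},\ \ket{+,1,1},\ \ket{-,-,+},\ \ket{-,+,1}.\]

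The main obstacle I anticipate is the case bookkeeping in the $V=\gamma U$ branch: the interplay of $c(1+\gamma)=0$ with $2c=(\gamma-1)(2p-q)$ and with the overall constraint $\alpha=d+2(a+b+c)$ creates several overlapping degenerate sub-branches (various parameters vanishing or coinciding), and each has to be carefully checked in order to guarantee that no spurious seventh product vector is hiding at the boundary of the case distinction.
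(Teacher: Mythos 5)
Your proposal is correct and follows essentially the same route as the paper's proof: write a general element via the polynomial parameterisation of \cref{lem2.2.3}, split it as $U(Y,Z)+XV(Y,Z)$ along the $\{1\}$ versus $\{2,3\}$ cut, impose proportionality of $U$ and $V$, and then demand that the $(Y,Z)$-factor itself split, which I checked yields exactly the paper's six vectors $\ket{0,1,+},\,\ket{-,+,1},\,\ket{-,-,+},\,\ket{0,+,0},\,\ket{+,1,1},\,\ket{+,-,0}$. The only difference is bookkeeping: you keep all five coefficients (branching on $\gamma$ through $c(1+\gamma)=0$) rather than normalising the $\ket{\phi_1}$-coefficient to one and splitting on $c\neq 0$ versus $c=0$ as the paper does, which is a cosmetic reorganisation of the same case analysis.
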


\begin{proof}
It is enough to consider general elements in this space that have the polynomial representation of the form 
\begin{align*}
\Sigma(X,Y,Z) &=Y(1+Z) + d + 2(a+b+c) +  (aX +bY +cZ) \\
&-(aXY+bYZ +cXZ) +dXYZ \\
&= (d +2(a+b+c) +(1+b)Y +cZ +(1-b)YZ)\\
& +X(a -aY -cZ +d YZ) \\
&= U(Y,Z) +X V(Y,Z), \quad \text{ say with  } (a,b,c,d) \neq  \bm{0}.
\end{align*}
\par {\bf Case} $c \neq 0$. Then $U \neq 0 \neq V$. So $\Sigma(X,Y,Z)$ is of the form $p(X) Q(Y,Z)$ if and only if $U = \gamma V$ for some $\gamma \neq 0$. Suppose this does happen. Then  $\gamma$ can only be $-1$ as $cZ$ is present in $U$ and $-cZ$ is present in $V$. Simple calculations show that 
\begin{eqnarray*}
a &=& 1+b,\\
d &=& -1 +b,\\
c &=& -(3b+1) \neq 0.
\end{eqnarray*}
So $V(Y,Z) = (1+b) (1-Y) +((3b+1) +(b-1)Y)Z$, which has the form $q(Y) r(Z)$ if and only if $b=-1$ or $b=0$. Now $b=-1$ gives $a=0, \, c=2, \, d=-2$. Hence $V(Y,Z) =-2 (1+Y) Z$ and 
\[\Sigma (X,Y,Z) =2(1-X)(1+Y) Z.\]
\par On the other hand, $b=0$ gives $a=1,\, c=-1,\, d=-1$. So 
\[V(Y,Z) =1-Y + (1-Y)Z=(1-Y)(1+Z).\]
So $\Sigma(X,Y,Z)= (1-X)(1-Y)(1+Z)$.

\par Next, {\bf case} $c=0$. Then
\begin{align*}
U(Y,Z)&= d+2(a+b) +(1+b)Y+(1-b)YZ\\
V(Y,Z) &= a-aY +dYZ.
\end{align*}
Since $1-b$ and $1+b$ can not be both zero together we have $U(X,Z) \neq 0$. So $\Sigma (X,Y,Z)$ is of the form $P(X)Q(Y,Z)$ if and only if $V(Y,Z) = \gamma U(Y,Z)$ for some scalar $\gamma$. 
\par If $\gamma =0$, then $a =0=d$ and $b \neq 0$, and $U(Y,Z) =2b +(1+b)Y +(1-b)YZ$. This has the form $q(Y)r(Z)$ if and only if  $b=1$. (The other option $b=0$ has been discarded earlier.) And thus $U(Y,Z) =2 +2Y$. Hence 
\[\Sigma(X,Y,Z)= 2 +2Y.\]
\par Suppose $\gamma \neq 0$. Then 
\[\Sigma(Y,Z)= p(X)q(Y)r(Z)\]
only if $V(Y,Z) =q(Y)r(Z)$. Comparing the coefficients we get $a=0$ or $d=0$. Simple calculations give us possible product vectors as 
\[2(1+X)YZ, \quad \text{ and } \quad 2(1+X)(1-Y).\]
So altogether we get six vectors which, using the notations of \cref{upb3p}, are given as follows 
\begin{align*}
\ket{\phi_1} = \ket{0,1,+}, \qquad & \ket{-,+,1},\\
\ket{-,-,+}, \qquad& \ket{0,+,0},\\
\ket{+,1,1}, \qquad & \ket{+,-,0}.
\end{align*}
\end{proof}
\begin{rem}
We can get similar results by using other $\ket{\phi}$'s as well.
\end{rem}

\vspace{6pt}

 \section{Variants and perturbations of K. R. Parthasarathy's completely entangled subspace} \label{sec3}
\par We now develop methods for obtaining quasi-completely entangled perturbations of completely entangled subspaces given in Ref \cite{krp1} which has already been briefly introduced in \cref{ss1.2}. 
\par For the sake of convenience, we give the basics from Parthasarathy \cite{krp1} and a little variation by  Bhat \cite{bhat} in  a bit different notation, see also Ref \cite{PhysRevA.90.062323}. We follow the notation and terminology as in \cref{s1}  above. 
Let $N' = \sum_{j=1}^k d_j -k=N-1$.  Let  $\bm{i}=(i_j)_{j=1}^k,\,  0 \le i_j \le d_{j}-1$ for $1 \le j \le k,$ let $\mathcal{I}$ be  the collection  of all such topples  and for $0 \le n \le N',\, \mathcal{I}_n$, be the subset of $\mathcal{I}$ with $|\bm{i}|= \sum_{j=1}^k  i_j =n$
We note that $\{ \ket{\bm{i}}=e_{\bm{i}} = \bigotimes_{j=1}^k \ket{\bm{i}_j} \in \mathcal{I}\}$ is a basis for $\mathcal{H}$.
\subsection{Completely entangled subspace} \label{ss3.1}
We introduce completely entangled subspace by using \emph{van der Monde determinants} as given by Parthasarathy \cite{krp1}. Let
\begin{eqnarray}\label{eq2.1}
\ket{z_\lambda} &=& \bigotimes_{j=1}^k \sum_{s=0}^{d_j -1} \lambda^s \ket{s}, \quad \lambda \in \mathbb{C},\\
\ket{z_\infty} &=& \bigotimes_{j=1}^k \ket{d_j -1}. \nonumber
\end{eqnarray}
So, $\ket{z_\lambda} = \sum_{n=0}^{N'} \lambda^n\left( \sum_{\bm{i}: |\bm{i}|=n} \ket{\bm{i}} \right)$.  In particular, $\ket{z_0}= \ket{\bm{0}}$. 
\begin{enumerate}[(i)]
\item We call them {\bf van der Monde vectors} because for distinct $N$ complex number $\lambda$'s, the corresponding $\ket{z_\lambda}$'s are linearly independent because of van der Monde determinant as shown by Parthasarathy \cite{krp1}. 
\item Let $\mathcal{F}$ be the linear span of all van der Monde vectors, or, for that matter, any distinct $N$ van der Monde vectors. Then $\mathcal{F}^\perp$ is completely entangled as shown by Parthasarathy in Theorem 1.5 of Ref \cite{krp1}. We call it  {\bf K. R. Parthasarathy's space} or, simply {\bf Parthasarathy space}  and denote it {\bf $\mathcal{S}_P$}.
\end{enumerate}
\begin{rem}\label{rem3.1}
We note some related facts as given in Refs \cite{krp1,bhat,PhysRevA.90.062323} for further use. 
\begin{enumerate}[(i)]
\item  The set of product vectors  in $\mathcal{F}$ is exactly the set of all van der Monde vectors, viz., $\{ \ket{z_\lambda}: \lambda \in \mathbb{C}_\infty = \mathbb{C} \cup \{\infty\}\}$. 
\item The points $\ket{\eta}$  in the space $\mathcal{S}_P$ have the form $\sum_{n=0}^{N'} \sum_{\bm{i}:|\bm{i}|=n} a_{\bm{i}} \ket{\bm{i}} $, with $\sum_{\bm{i}:|\bm{i}|=n} a_{\bm{i}} =0$  for each $n,\, 0 \le n \le N'$.
\item As a consequence, $a_{\bm{0}}=0=a_{\bm{d}'}$ with $\bm{d}'$ having $d_{j }-1$ at $j$-th place and  for $0<n<N'$,  either all $a_{\bm{i}}$ with $|\bm{i}|=n$ are zero, or at least two of them are not zero.

\item For   $\lambda \in \mathbb{C}$ and $\ket{\eta} \in \mathcal{S}_P$, 
\[ \ket{z_\lambda} + \ket{\eta} = \ket{\bm{0}} + \sum_{n=1}^{N'-1} \sum_{|\bm{i}| =n} (\lambda^n + a_{\bm{i}}) \ket{\bm{i}} + \lambda^{N'} \ket{z_\infty}. \]
So 
\[\ket{z_0} + \ket{\eta}= \ket{\bm{0}} + \sum_{n=1}^{N'-1} \sum_{|\bm{i}| =n} a_{\bm{i}} \ket{\bm{i}}.\] 
Also 
\[\ket{z_\infty} + \ket{\eta}= \sum_{n=1}^{N'-1} \sum_{|\bm{i}| =n} a_{\bm{i}} \ket{\bm{i}} + \ket{\bm{d}'}. 
\]
\end{enumerate}
\end{rem}
It will be convenient to note the next obvious result as a lemma for further use. 

\begin{lemma}\label{lem3.1}
For $1 \le j \le k$, any non-zero vector $\ket{\xi_j}  \in \mathcal{H}_j$ has the form $\ket{\xi_j} = \sum_{t=m_j}^{M_j} b_{j t} \ket{t}$, for some $0 \le m_j \le M_j \le d_j -1$ with  $b_{j m_j} \neq 0 \neq b_{j M_j}$.  So 
\[\ket{\xi} = \bigotimes_{j=1}^k \ket{\xi_j} = \bigotimes_{j=1}^k \left( \sum_{t = m_{j}}^{M_j} b_{j t} \ket{t} \right) .\]

Let $\bm{m}=[m_j]_{j=1}^k, \, \bm{M}=[M_j]_{j=1}^k, \, |\bm{m}|=m$, and $|\bm{M}|=M$. Then $0 \le m \le M \le N'$ and $m =M$ if and only if $\bm{m} =\bm{M}$. Then 
\[\ket{\xi} =\prod_{j=1}^k b_j m_j \ket{\bm{m}},\]
if $\bm{m}=\bm{M}$, whereas 
\[\ket{\xi} = \prod_{j=1}^k b_j m_j \ket{\bm{m}} +  \sum_{m < n< M}  \sum_{|\bm{i}| =n}  \prod_{j=1}^k  b_{j i_j} \ket{\bm{i}} + \prod_{j=1}^k b_{j M_j} \ket{\bm{M}}\quad \text{ if } \bm{m} \neq \bm{M}.\]
We follow the convention that empty sums are corresponding  zeros and empty product of scalars is 1.
\end{lemma}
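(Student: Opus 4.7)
The plan is essentially bookkeeping on a tensor product, so I would keep the argument brisk and focus on the ordering of support indices.

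First I would invoke the standard fact that any non-zero $\ket{\xi_j}\in\mathcal{H}_j$ can be written as $\ket{\xi_j}=\sum_{t=0}^{d_j-1} b_{jt}\ket{t}$; defining $m_j=\min\{t:b_{jt}\neq 0\}$ and $M_j=\max\{t:b_{jt}\neq 0\}$ gives $0\le m_j\le M_j\le d_j-1$ with $b_{jm_j}\neq 0\neq b_{jM_j}$, and we may restrict the sum to $m_j\le t\le M_j$. This sets up the representation stated in the lemma.

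Next I would expand the tensor product by multilinearity,
\[
\ket{\xi}=\bigotimes_{j=1}^k\Bigl(\sum_{t=m_j}^{M_j}b_{jt}\ket{t}\Bigr)=\sum_{\bm{i}\in R}\Bigl(\prod_{j=1}^k b_{ji_j}\Bigr)\ket{\bm{i}},
\]
where $R=\{\bm{i}\in\mathcal{I}:m_j\le i_j\le M_j\text{ for all }j\}$. I would then group the sum by level $n=|\bm{i}|$. The key observation (which I would record as a short remark) is that for $\bm{i}\in R$ we have $m=|\bm{m}|\le|\bm{i}|\le|\bm{M}|=M$, with the lower bound attained iff $i_j=m_j$ for every $j$, i.e.\ $\bm{i}=\bm{m}$, and the upper bound attained iff $\bm{i}=\bm{M}$; this immediately yields $0\le m\le M\le N'$ together with the equivalence $m=M\iff\bm{m}=\bm{M}$.

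With these uniqueness statements in hand, the case $\bm{m}=\bm{M}$ collapses the sum to the single term $\prod_j b_{jm_j}\ket{\bm{m}}$, giving the first claimed formula. In the case $\bm{m}\neq\bm{M}$ I would separate out the unique extremal contributions at levels $n=m$ and $n=M$ and collect the remaining terms by intermediate level, producing
\[
\ket{\xi}=\prod_{j=1}^k b_{jm_j}\ket{\bm{m}}+\sum_{m<n<M}\sum_{|\bm{i}|=n}\prod_{j=1}^k b_{ji_j}\ket{\bm{i}}+\prod_{j=1}^k b_{jM_j}\ket{\bm{M}},
\]
with the convention that the middle double sum is empty (and hence zero) when $M=m+1$. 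There is no real obstacle here: the only thing to be careful about is the uniqueness of the extremal index tuples inside the rectangle $R$, which is what forces the first and last coefficients to be honest products rather than sums and thus non-zero, matching the structure needed later when comparing with vectors in $\mathcal{S}_P$ via \cref{rem3.1}.
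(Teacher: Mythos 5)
Your proposal is correct and is exactly the straightforward multilinear-expansion-and-grouping-by-level argument the paper has in mind; indeed the paper records this lemma as an ``obvious result'' without writing out a proof, and your bookkeeping (minimal/maximal support indices, the rectangle of index tuples, uniqueness of the extremal tuples $\bm{m}$ and $\bm{M}$, and the empty-sum convention when $M=m+1$) supplies precisely the verification that is being taken for granted. Nothing is missing.
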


These expressions assume shorter forms  if we put  $\bm{b}_j$ for  $({b}_{j t})_{0 \le t \le d_j - 1}$, and then 
$\bm{b} = [\bm{b}_j]_{j=1}^k =[b_{jt}]$, and similar obvious forms.

\begin{defin}  
We let $\ket{\xi}_j$ etc.  to be as in \cref{lem3.1} above. 
If $b_j m_j = 1$ for $1 \le j \le k$, then we say that $\ket{\xi}$ is in \emph{neat form}  or $\ket{\xi}$ is a \emph{neat product  vector}.
\end{defin}

\par We  note that neat product  vectors give a unique representation of a non-zero  product vector treated as the one-dimensional spaces introduced  in \cref{s1}. Moreover,   van der Monde vectors  are neat product vectors. 
\subsection{Perturbation of $\mathcal{S}_P$ by a van der Monde vector.} 
We begin with a theorem  concerning $\ket{z_0}$ and $\ket{z_\infty}$ and then  go on to develop some \emph{Consistency conditions} for other van der Monde vectors.   We will find it convenient to look for neat product vectors   in these perturbed spaces.

\begin{theorem} \label{th:3.2}
Span $\mathcal{S}_{P, 0}$ of $\mathcal{S}_{P }$ and $\ket{z_0}$  contains no non-zero  product vector other than $\ket{z_0}$. Same is true for $\mathcal{S}_{P, \infty}$ with  $\ket{z_0}$ replaced by $\ket{z_\infty}$.
\end{theorem}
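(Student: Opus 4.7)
The plan is to reduce the product-vector condition to a factorization identity for a single-variable polynomial. A general element of $\mathcal{S}_{P,0}$ has the form $\ket{\xi} = \alpha \ket{z_0} + \ket{\eta}$ with $\alpha \in \mathbb{C}$ and $\ket{\eta} \in \mathcal{S}_P$. If $\alpha = 0$ then $\ket{\xi} \in \mathcal{S}_P$, which is completely entangled, so $\ket{\xi}$ cannot be a nonzero product vector. Hence we may assume $\alpha \neq 0$ and, after rescaling, that $\alpha = 1$. By \cref{rem3.1}(iii)--(iv) with $\lambda = 0$, the element $\ket{\xi}$ then has coefficient $1$ at $\ket{\bm{0}}$, coefficient $0$ at $\ket{\bm{d}'}$, and for each $n$ with $1 \le n \le N' - 1$ the sum of its coefficients over $\bm{i}$ with $|\bm{i}| = n$ equals $0$.

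Suppose $\ket{\xi}$ is a nonzero product vector. Writing $\ket{\xi} = \bigotimes_{j=1}^k \ket{\xi_j}$ with $\ket{\xi_j} = \sum_{s=0}^{d_j - 1} b_{js} \ket{s}$ (extending the $b_{j s}$ of \cref{lem3.1} by zero outside $[m_j, M_j]$), the coefficient of $\ket{\bm{i}}$ in $\ket{\xi}$ is $\prod_{j=1}^k b_{j i_j}$. The key observation is that the diagonal polynomial $Q(t) := \prod_{j=1}^k P_j(t)$, where $P_j(t) = \sum_{s=0}^{d_j-1} b_{js} t^s$, satisfies
\[
Q(t) \;=\; \sum_{n=0}^{N'} \biggl( \sum_{|\bm{i}| = n} \prod_{j=1}^k b_{j i_j} \biggr) t^n,
\]
so the coefficient of $t^n$ in $Q(t)$ is precisely the total-degree-$n$ slice sum. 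Combined with the three constraints above, this forces $Q(t) = 1$. Since $\mathbb{C}[t]$ is an integral domain, each $P_j$ must be a nonzero constant, so $b_{js} = 0$ for $s \ge 1$; hence each $\ket{\xi_j}$ is a scalar multiple of $\ket{0}$ and $\ket{\xi}$ is a scalar multiple of $\ket{\bm{0}} = \ket{z_0}$, as claimed.

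The argument for $\mathcal{S}_{P,\infty}$ is parallel: in the normalized case one now has coefficient $0$ at $\ket{\bm{0}}$, coefficient $1$ at $\ket{\bm{d}'}$, and vanishing intermediate slice sums, so $Q(t) = t^{N'}$. Using $\deg P_j \le d_j - 1$ together with $\sum_j (d_j - 1) = N'$ and unique factorization in $\mathbb{C}[t]$, each $P_j$ must be of the form $c_j t^{d_j - 1}$, so $\ket{\xi}$ is a scalar multiple of $\ket{z_\infty}$. The main place to be careful is the bookkeeping in the first step: one must verify that the constraints coming from $\ket{\eta} \in \mathcal{S}_P$ indeed translate into the vanishing of all intermediate slice sums of $\ket{\xi}$; once that identification is made, the polynomial factorization argument is immediate and uniform across the two cases.
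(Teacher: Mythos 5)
Your argument is correct, but it runs along a different track than the paper's. The paper's proof of \cref{th:3.2} is the direct combination of \cref{lem3.1} with \cref{rem3.1}: by \cref{lem3.1} a nonzero product vector has exactly one nonzero coefficient in its top slice $\{\bm{i}: |\bm{i}|=M\}$ (namely at $\bm{M}$), whereas an element $\alpha\ket{z_0}+\ket{\eta}$ with $\alpha\neq 0$ has vanishing slice sums for every $n\geq 1$, so by \cref{rem3.1}(iii) each such slice carries either no nonzero coefficient or at least two; this forces $M=0$, i.e.\ the vector is a multiple of $\ket{\bm{0}}=\ket{z_0}$, and the dual argument at the bottom slice handles $\ket{z_\infty}$. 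You instead package \emph{all} the slice sums as the coefficients of the univariate polynomial $Q(t)=\prod_j P_j(t)$ --- which is nothing but the inner product $\langle z_{\bar{t}}\,|\,\xi\rangle$ with the van der Monde vectors --- and conclude from $Q=1$ (resp.\ $Q=t^{N'}$) via additivity of degrees and unique factorization in $\mathbb{C}[t]$. Your bookkeeping step (coefficient $1$ at $\ket{\bm{0}}$, $0$ at $\ket{\bm{d}'}$, zero intermediate slice sums, and the identification of the $t^n$-coefficient of $Q$ with the degree-$n$ slice sum) is verified correctly, and the degree count $\sum_j(d_j-1)=N'$ in the $\ket{z_\infty}$ case closes that half, so there is no gap. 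What the two routes buy: the paper's is lighter, using only the extremal slice and a counting observation; yours is uniform across the two cases, is essentially the $\lambda=0$ and $\lambda=\infty$ degenerations of the consistency-condition machinery the paper develops in \cref{rem3.2}, and is close in spirit to Parthasarathy's original van der Monde argument, so it generalizes more readily to the perturbations $\mathcal{S}_{P,\lambda}$ treated later in the section.
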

\begin{proof}
This follows directly after  combining \cref{lem3.1} with the observations in \cref{rem3.1} above. 
\end{proof}
\begin{eg}{$\mathbb{C}^3 \otimes \mathbb{C}^3$ system}
Let $\mathcal{H}_1=\mathcal{H}_2$ have dimension 3.. Then 
\[ \ket{z_0} = \ket{\bm{0}}, \quad
 \ket{z_\infty} = \ket{\bm{2}}. \]
If $0\neq \lambda \in \mathbb{C}$ then   $\ket{z_\lambda}$ and a non-zero element $\ket{\eta}\in \mathcal{S}_P$ have the form 
\begin{eqnarray*}
 \ket{z_\lambda} &=& \ket{00}  + \lambda( \ket{01} + \ket{10}) + \lambda^2 ( \ket{02} + \ket{11} + \ket{20}) +  \lambda^3 ( \ket{12} + \ket{21}) + \lambda^4\ket{22}),\\
\ket{\eta} &=& a(\ket{01} - \ket{10})  + (b \ket{02} -(b+c) \ket{11} + c \ket{20}) +d(\ket{12} - \ket{21}),
\end{eqnarray*}
with $a,\, b, \, c, \,d$ as scalars, not all zero.
\end{eg}
\begin{rem}[{\bf Consistency Conditions}]\label{rem3.2}
We now develop  some useful consistency conditions.   We will find it convenient to look for neat product vectors  in these perturbed spaces.
\begin{enumerate}[(i)]
\item Let $ \lambda \in \mathbb{C} \backslash \{0\}$. Suppose $ \ket{z_\lambda} + \ket{\eta} = \ket{\xi} $ is a product vector as in \cref{lem3.1} above for some  $0 \neq \ket{\eta} \in \mathcal{S}_P$ as in \cref{rem3.1}.  Then 
\begin{enumerate}[(a)]
\item  $\bm{m}= \bm{0}, \, \bm{M} = \bm{d}',\, b_{\bm{0}}=  \prod_{j=1}^k b_{j \bm{0}} =1, \, b_{\bm{d}'}= \prod_{j=1}^k b_{j ,d_j -1} =\lambda^{N'}$, \label{rem3.2i}
\item for $1 \le n \le N' -1, \, |\bm{i}| =n, \, \lambda^n + a_{\bm{i}} =b_{\bm{i}}= \prod_{j=1}^k b_{j, i_j}$, and 
\[\sum_{\bm{i}:|\bm{i}|=n} b_{\bm{i}}=\sum_{\bm{i}:|\bm{i}| =n}  \prod_{j=1}^k b_{j, i_j} = \lambda^n \# \mathcal{I}_n \neq 0.\] \label{rem3.2ii}
\item For some  $1 \le n \le N' - 1$, $\bm{i}$ with $|\bm{i}|=n,\,  b_{\bm{i}}\neq  \lambda^n$.  \label{rem3.2iii}
\end{enumerate} 
On the other hand, if these conditions are satisfied  
 then we have a product vector in $\mathcal{S}_{P,\lambda}$ other than $\ket{z_\lambda}$. 
 We may try to figure out solutions $\bm{b}$. 
\par Hence we call the above conditions  (a) to (c) as \emph{consistency conditions} for $\ket{z_\lambda} + \ket{\eta}$ to be a product vector for some $0 \neq \ket{\eta} \in \mathcal{S}_P$. 

\item We next note that  $\bm{b} =[b_{jt}]= [\lambda^t c_{jt}]$
is a solution for $\lambda \in \mathbb{C} \backslash \{0\}$, if and only if $\bm{c} = (c_{j t})_{j=1}^k$, 
is a solution for $\lambda =1$. So it is enough to consider the case $\lambda =1$. 
\item We display the Consistency conditions for $\lambda=1$ for practical purposes.
\begin{enumerate}
\item We may take    $c_{j0} =1,\, 1 \le j \le k$ to proceed.  In other words, we look  for product vectors in their neat forms. 
\item $\prod_{j=1}^k c_{j ,d_j-1} =1$. 
\item  $\sum_{\bm{i}: |\bm{i}| = n} \prod_{j=1}^k c_{j, i_j} = \# \mathcal{I}_n, \, 1 \le n \le N'-1.$
\item For some  $1 \le n \le N' - 1$, some $\bm{i}$ with $|\bm{i}|=n,\,  c_{\bm{i}}\neq  1$.
\end{enumerate}
\item We have $N'$ scalars $(c_{j, p})_{\substack{1 \le j \le k\\ 1 \le p \le d_j -1}},$ to be determined from effectively  $N'$ consistency conditions. To give an idea, we note them  in the following specific cases available for suitably large  $N' >n$ and dimensions $d_j$'s. 
\begin{enumerate}
\item[Case $n=1$:] $ \sum_{j=1}^k c_{j,1} =k$. 
\item[Case $n=2$:]  $ \sum_{1 \le j \neq j'  \le k} c_{j1}c_{j'1} + \sum_{1 \le j \le k} c_{j2} = \binom{k}{2} + k= \frac{k(k+1)}{2}$ if for all $j$, $d_j>2$.
\end{enumerate}
We may recall here that empty sums are zero and empty products are 1.
\begin{enumerate}
\item[Case $n=3$:]  Furthermore,  $\sum_{1 \le j \neq j' \neq j'' \neq j\le k } c_{j1}c_{j'1}c_{j''1} + \sum_{1 \le j \neq j'  \le k} c_{j2}c_{j'1} + \sum_{j=1}^k c_{j3} = \binom{k}{3} + 2 \binom{k}{2} +k,$ 
 if $k\ge 3$ and $d_j \ge 4$ for all $j$, and so on. 
\end{enumerate}
\item For $k=2=d_1=d_2$, consistency conditions have  no solution. So the span $\mathcal{S}_{P, \lambda}$ of $\ket{z_\lambda}$ and $\mathcal{S}_P$ is  for $0 \neq \lambda \in \mathbb{C}$,   is QCES with product index 1.
\item We have not been able to prove or disprove consistency conditions in general. However, the situation is clear for some special categories and specific examples of two quitrits and three qubits to allow comparison with results in \cref{sec2}  above. 
\end{enumerate}
\end{rem}
\subsection{Example of multipartite qubit  systems}
We consider the case of all qubits, i.e., for $1 \le j \le k, \,d_j=2$. We have the following generalisation of \cref{rem3.2} (v) for the case $k=2$. For $k=3$ this gives a contrast with  \cref{th2.9} and \cref{2.2.6}.
\begin{theorem}\label{th:3.6} 
Let $d_j=2$ for $1 \le j \le k$ and $0 \neq \lambda \in \mathbb{C}$. Then the span $\mathcal{S}_{P,\lambda}$ of $\ket{z_\lambda}$ and $\mathcal{S}_P$ is QCES with product index $1$. 
\end{theorem}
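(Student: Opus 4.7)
The plan is to use the Consistency Conditions of \cref{rem3.2} directly. First I would reduce to the case $\lambda = 1$ via the scaling isomorphism in \cref{rem3.2}(ii), and look for neat product vectors. Since $d_j = 2$ for all $j$, each qubit contributes exactly one unknown: writing $c_j := c_{j,1}$, a neat product vector in $\mathcal{S}_{P,1}$ other than $\ket{z_1}$ corresponds to a tuple $(c_1, \ldots, c_k) \in \mathbb{C}^k$ satisfying the consistency conditions plus the non-triviality condition (iii)(d).

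In the all-qubit case $N' = k$, and each index vector $\bm{i}$ with $|\bm{i}| = n$ corresponds to a subset $S \subseteq \{1,\ldots,k\}$ of size $n$, with $\prod_{j=1}^k c_{j, i_j} = \prod_{j \in S} c_j$. The consistency conditions (iii)(b) and (iii)(c) of \cref{rem3.2} thus become exactly
\[
e_n(c_1, \ldots, c_k) = \binom{k}{n}, \qquad 1 \le n \le k,
\]
where $e_n$ denotes the $n$-th elementary symmetric polynomial (the case $n = k$ comes from the endpoint condition $\prod_{j=1}^k c_j = 1$).

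The key step is the observation that these $k$ equations completely determine the $c_j$'s via Vieta. Indeed,
\[
\prod_{j=1}^k (t + c_j) = \sum_{n=0}^k e_n(c_1,\ldots,c_k)\, t^{k-n} = \sum_{n=0}^k \binom{k}{n} t^{k-n} = (t+1)^k,
\]
so by unique factorisation in $\mathbb{C}[t]$ we must have $c_j = 1$ for every $j$. But this gives back $\ket{z_1}$ itself, violating the non-triviality requirement (iii)(d). Hence no product vector other than $\ket{z_\lambda}$ lies in $\mathcal{S}_{P,\lambda}$, while $\ket{z_\lambda}$ itself is clearly a product vector in this span, so the product index is exactly $1$.

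I do not anticipate a serious obstacle: the Consistency Conditions are already set up, and the reduction to elementary symmetric polynomials is forced by the bit-vector structure of $\{0,1\}^k$. The only small check is that the endpoint condition $b_{\bm{d}'} = \lambda^{N'}$ from \cref{rem3.2}(i)(a), after the scaling of \cref{rem3.2}(ii), is precisely the $n = k$ instance $e_k(c) = 1 = \binom{k}{k}$, so that the full system is the one displayed above; and of course that $\mathcal{S}_P$ itself contains no product vectors, so every product vector in $\mathcal{S}_{P,\lambda}$ genuinely arises from a $\ket{z_\lambda} + \ket{\eta}$ decomposition.
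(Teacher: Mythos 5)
Your proposal is correct and follows essentially the same route as the paper: reduce to $\lambda=1$ via the scaling in \cref{rem3.2}(ii), observe that the consistency conditions become $e_n(c_1,\dots,c_k)=\binom{k}{n}$ for $1\le n\le k$, and conclude via Vieta that all $c_j$ are roots of $(x-1)^k$, hence equal to $1$, forcing the product vector to be $\ket{z_1}$ itself. The only difference is presentational: you spell out the generating-polynomial identity $\prod_j(t+c_j)=(t+1)^k$ explicitly, which the paper states more tersely.
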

\begin{proof} 
In view of \cref{rem3.2}(ii), it is enough to consider the case of $\lambda=1$. We first note that $N'=k$,     for $\bm{i} \in \mathcal{I}$ and  $1 \le j \le k$,   $i_j =0$ or $1$, and $  \#\mathcal{I}_n = \binom{k}{n}$ for $1 \le n \le k-1$.   
So using consistency relations in \cref{rem3.2}, $c_{j1}$'s are roots of $(x-1)^k=0$ and, therefore,  are all $1$. Thus $\ket{\zeta} \in \mathcal{S}_{\mathcal{P},1}$ is a product vector if and only if $\ket{\zeta} = \ket{z_1}$. Hence we do not get any new product vectors. 
\end{proof}

\subsection{Bipartite set-up}

\par  Let $k=2,\, 2 \le  d_1 , d_2 \le d$, say. It is enough to consider the case $d_1 \le d_2 =d$. 
So $2 \le N' \le2d-2$. Consistency  Conditions  take manageable simple forms. We first consider the case $2=d_1 < d_2$. We obtain the following Theorem.

\begin{theorem}\label{th3.5}
Let $d_1=2 < d =d_2$. Then the span $\mathcal{S}_{P,1}$ of $\ket{z_1}$ and $\mathcal{S}_P$ is QCES of product index $d$ if $d$ is odd, but $d-1$ for $d$ even, with specified product vectors. 
\end{theorem}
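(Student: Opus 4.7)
The plan is to apply the Consistency Conditions of \cref{rem3.2} in the bipartite setting with $k=2$, $d_1=2$, $d_2=d$, where $N'=d$ and $\# \mathcal{I}_n = 2$ for every $1 \le n \le d-1$ since $\mathcal{I}_n = \{(0,n),(1,n-1)\}$. By \cref{rem3.2}(ii) it suffices to treat $\lambda=1$, and we look for neat product vectors. Setting $c_{1,0}=c_{2,0}=1$, the only free parameter from the first factor is $a := c_{1,1}$, and we write $c_n := c_{2,n}$ for $0 \le n \le d-1$, so $c_0=1$. The product vector sought has the form $(\ket{0}+a\ket{1}) \otimes \sum_{n=0}^{d-1} c_n \ket{n}$.

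Next, the Consistency Condition at level $n$ (with $1 \le n \le d-1$) reduces to the linear recurrence
\[
c_n + a\, c_{n-1} = 2, \qquad c_0 = 1,
\]
and the endpoint product condition \cref{rem3.2}(iii)(b) becomes $a\, c_{d-1} = 1$. For $a \neq -1$ the recurrence has the explicit solution
\[
c_n = \frac{(a-1)(-a)^n + 2}{1+a}.
\]
Substituting into $a\, c_{d-1} = 1$ and simplifying yields
\[
(-1)^{d-1} a^d (a-1) = -(a-1),
\]
so either $a = 1$ (which returns all $c_n=1$, giving $\ket{z_1}$ itself) or else $(-a)^d = 1$, i.e., $-a$ is a $d$-th root of unity.

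The case $a=-1$ must be handled separately: the recurrence gives $c_n = 1+2n$, whence $a\, c_{d-1} = -(2d-1) \neq 1$ for any $d \ge 2$, so $a=-1$ yields no product vector. Hence the valid values of $a$ distinct from $1$ are exactly $a = -\omega$ for $\omega$ a $d$-th root of unity with $\omega \neq 1$ (excluding $a=-1$); the value $\omega = -1$, i.e.\ $a = 1$, occurs inside this list only when $d$ is even, and already corresponds to $\ket{z_1}$. Counting: for $d$ odd, $-1$ is not a $d$-th root of unity, so excluding only $\omega=1$ gives $d-1$ new neat product vectors beyond $\ket{z_1}$, yielding product index $d$; for $d$ even, excluding both $\omega=1$ and $\omega=-1$ gives $d-2$ new ones, yielding product index $d-1$. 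The explicit specified product vectors are
\[
\bigl(\ket{0} - \omega \ket{1}\bigr) \otimes \sum_{n=0}^{d-1} \frac{(-\omega-1)(\omega)^n + 2}{1-\omega}\, \ket{n},
\]
together with $\ket{z_1}$, for $\omega$ ranging over the admissible $d$-th roots of unity.

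The main obstacle is not the recursion itself but the careful case analysis at the boundary values $a=\pm 1$, where the closed-form solution degenerates; once those are disposed of, the parity dichotomy in the conclusion is entirely explained by whether $-1$ lies among the $d$-th roots of unity, which is precisely the point where $\ket{z_1}$ is re-encountered as a spurious ``new'' solution that must not be double-counted.
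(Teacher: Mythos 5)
Your proposal is correct and follows essentially the same route as the paper: reduce to $\lambda=1$ via the consistency conditions in neat form, obtain the recurrence $c_n + a\,c_{n-1}=2$ with endpoint $a\,c_{d-1}=1$, conclude $(-a)^d=1$ with the degenerate cases $a=\pm 1$ handled separately, and count by parity of $d$. The only cosmetic difference is that you solve the recurrence in closed form (homogeneous plus particular solution) where the paper telescopes after multiplying by powers of $-\alpha$, and your explicit $c_n$ agrees with the paper's $c_r=\frac{1}{1+\alpha}\left[2+(-1)^r\alpha^r(\alpha-1)\right]$.
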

\begin{proof}
We have $N'=d$.  Also for $1 \le n \le d-1,\, \mathcal{I}_n= \{(0,n),(1,n-1)\}$. The consistence conditions in \cref{rem3.2} take simpler forms, more so for neat form,  which only we consider, 
and thus  have $c_{10}=c_{20}=1$.  
For notational convenience, let $\alpha =c_{11}$ and for $0 \le p \le d-1, \, c_p=c_{2p}$. Then

 {\bf Case} $n=N'=d$. Here  $\alpha c_{d-1} =1$. So $\alpha \neq 0, c_{d-1} =\frac{1}{\alpha}$.

{\bf Case} $1 \le n \le N' -1 =d-1$. Note that  $\alpha c_{n-1} + c_n=2.$ We first note that $\alpha=1$ leads to $c_n=1$ for all $1 \le n \le d-1$, which does not satisfy \cref{rem3.2} (d). So we discard it.  We multiply the equation for $r$  by  $(-\alpha)^{d-1-r} ,\,  1 \le r \le d -1$ and obtain the system below.  
\begin{eqnarray*}
\alpha (-\alpha)^{d-2} c_0 + (-\alpha)^{d-2} c_1 &=& 2  (-\alpha)^{d-2},\\
\alpha (-\alpha)^{d-3} c_1 + (-\alpha)^{d-3} c_2 &=& 2  (-\alpha)^{d-3},\\
\vdots && \vdots\\
\alpha (-\alpha)^{d-1-r} c_{r-1} + (-\alpha)^{d-1-r} c_r &=& 2  (-\alpha)^{d-1-r},\\
\vdots && \vdots\\
\alpha (-\alpha) c_{d-3} + (-\alpha) c_{d-2} &=& 2  (-\alpha),\\
\alpha  c_{d-2} +  c_{d-1} &=& 2 .
\end{eqnarray*} 
So after adding all and cancelling, we get 
\[ \alpha (-\alpha)^{d-2} + \frac{1}{\alpha} =2[(-\alpha)^{d-2} + \cdots +(-\alpha)^{d-1-r} +\cdots+ 1].\]
Hence 
\[(-\alpha)^d +1 = 2 \alpha[(-\alpha)^{d-2} + \cdots +(-\alpha)^{d-1-r} +\cdots+ 1].\]
This equation is not consistent for $\alpha =-1$ as $ 2 \neq -2 (d-1)$. So $\alpha \neq -1$. For $\alpha \neq -1$ we have  
\[ (-\alpha)^{d} + 1 = 2 \alpha \frac{1- (-\alpha)^{d-1}}{1+\alpha}.\]
Simplifying and solving the equation we get either $\alpha =1$ or $(-\alpha)^d =1$. We have already set aside the case of $\alpha=1$. Hence the solution is   
\begin{eqnarray*}
\alpha_t &=& -e^{\frac{2 \pi t \imath}{d}}, \qquad 1 \le t \le d-1, \; t \neq \frac{d}{2}.
\end{eqnarray*}
So the number of relevant solution is $d-1$ if $d$ is odd, and $d-2$ if $d$ is even.  
\newline Let $ 1 \le r \le d-1$.
Adding the system up to $r$ we get 
\[ \alpha (-\alpha)^{d-2}  + (-\alpha)^{d-1-r} c_r = 2 [ (-\alpha)^{d-2} + \cdots + (-\alpha)^{d-1 -r}].\]
After simplifying we get 
 \[c_r =\frac{1}{1+\alpha} [2 + (-1)^r \alpha^r (\alpha -1)].\]
We have  new neat product vectors corresponding to $\alpha_t = - \exp \left[\frac{2\pi \imath t}{d} \right]$ for $1 \le t \le d-1$, $t \neq \frac{d}{2}$, and  their number is $(d-1)$ if $d$ is odd and $d-2$ if  $d$ is even. Hence  $\mathcal{S}_{P, 1}$ contains all together (including $\ket{z_1}$) exactly $d$ specified product vectors  if $d$ is odd, and $d-1$ specified  product vectors if $d$ is even.  
\end{proof}
\begin{theorem}
Let $d_1=2 < d =d_2$. For any $\lambda \in \mathbb{C}\backslash \{0\}$ the span $\mathcal{S}_{P, \lambda}$ of $\ket{z_\lambda}$ and $\mathcal{S}_P$ contains $d$ specified product vectors (including $\ket{z_\lambda}$) if $d$ is odd and $d-1$ specified product vectors (including $\ket{z_\lambda}$) if $d$ is even.
\end{theorem}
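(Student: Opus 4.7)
The plan is to reduce the general case $\lambda \in \mathbb{C}\setminus\{0\}$ directly to the $\lambda=1$ case already settled in \cref{th3.5}, using the diagonal rescaling $T_\lambda:\ket{t}\mapsto\lambda^t\ket{t}$ applied inside each tensor factor of $\mathcal{H}_1\otimes\mathcal{H}_2$. This is a linear automorphism of $\mathcal{H}$ that sends product vectors to product vectors and preserves the neat form condition, since $T_\lambda\ket{0}=\ket{0}$ in each factor. A direct check using \eqref{eq2.1} gives $T_\lambda\ket{z_1}=\ket{z_\lambda}$, and by \cref{rem3.1}(ii) the Parthasarathy space $\mathcal{S}_P$ is mapped to itself: on a basis element $\ket{\bm{i}}$ we have $T_\lambda\ket{\bm{i}}=\lambda^{|\bm{i}|}\ket{\bm{i}}$, so the defining relations $\sum_{|\bm{i}|=n}a_{\bm{i}}=0$ are preserved up to the nonzero overall scalar $\lambda^n$. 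Consequently $T_\lambda$ restricts to a linear bijection $\mathcal{S}_{P,1}\to\mathcal{S}_{P,\lambda}$ carrying product vectors to product vectors.

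First I would invoke \cref{th3.5} to obtain the complete list of specified neat product vectors in $\mathcal{S}_{P,1}$: there are $d$ of them if $d$ is odd and $d-1$ if $d$ is even, with $\ket{z_1}$ included and the remaining ones parameterised by $\alpha_t=-e^{2\pi\imath t/d}$ for $1\le t\le d-1$, $t\neq d/2$, together with the explicit coefficients $c_r$ computed in the proof of \cref{th3.5}. Next I would push each of these forward under $T_\lambda$ to produce the corresponding list in $\mathcal{S}_{P,\lambda}$, with $\ket{z_1}$ going to $\ket{z_\lambda}$. The other product vectors acquire coefficients $b_{jt}=\lambda^t c_{jt}$, exactly as recorded in \cref{rem3.2}(ii), and since the map is invertible for $\lambda\ne 0$ they remain neat and pairwise distinct.

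The converse (that no extra product vectors appear in $\mathcal{S}_{P,\lambda}$) is equally immediate: any product vector there pulls back under $T_\lambda^{-1}=T_{1/\lambda}$ to a product vector in $\mathcal{S}_{P,1}$, and \cref{th3.5} already enumerates those. Hence the count is exact: $d$ specified product vectors including $\ket{z_\lambda}$ when $d$ is odd, and $d-1$ when $d$ is even. The only content beyond \cref{th3.5} itself is the scaling symmetry of the consistency conditions codified in \cref{rem3.2}(ii); there is no substantive obstacle.
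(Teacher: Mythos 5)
Your proposal is correct and follows essentially the same route as the paper: the paper's proof is precisely the reduction to \cref{th3.5} via the rescaling $b_{jt}=\lambda^t c_{jt}$ of \cref{rem3.2}(ii), which is the coordinate form of your local operator $T_\lambda=T_\lambda^{(1)}\otimes T_\lambda^{(2)}$. Your packaging of that substitution as an invertible product operator carrying $\ket{z_1}\mapsto\ket{z_\lambda}$, preserving $\mathcal{S}_P$, and mapping product vectors bijectively to product vectors is just a cleaner, more explicit statement of the same argument, and the count transfers exactly as you say.
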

\begin{proof}
The proof follows by modifying the proof of \cref{th3.5} above using  \cref{rem3.2}, particularly, part (ii).
\end{proof}
\begin{rem}\label{rem3.3}
 Let $k =2, \, d_1 =d_2 =d \ge 3$. Then $N'=2d-2 \ge 4.$ 
\begin{enumerate}[(i)]
\item Consistency conditions in \cref{rem3.2} take somple form because 
\[ \mathcal{I}_n  = \{ (p, n-p): 0 \le p \le n\}, \quad \text{ for  } 1 \le n \le d-1,\]
and 
\[ \mathcal{I}_n  = \{ (p, n-p): n-d+1 \le p \le d-1\}, \quad \text{ for  } d \le n \le 2d-3.\]
As a consequence the cardinality 
\[ \#\mathcal{I}_n = \begin{cases}
n+1 & \text{ for } 1 \le n \le d-1, \text{ and }\\
2d -n -1 & \text{ for } d \le n \le 2d-3.
\end{cases}\]
The construction can be seen in \cref{fig:1}.

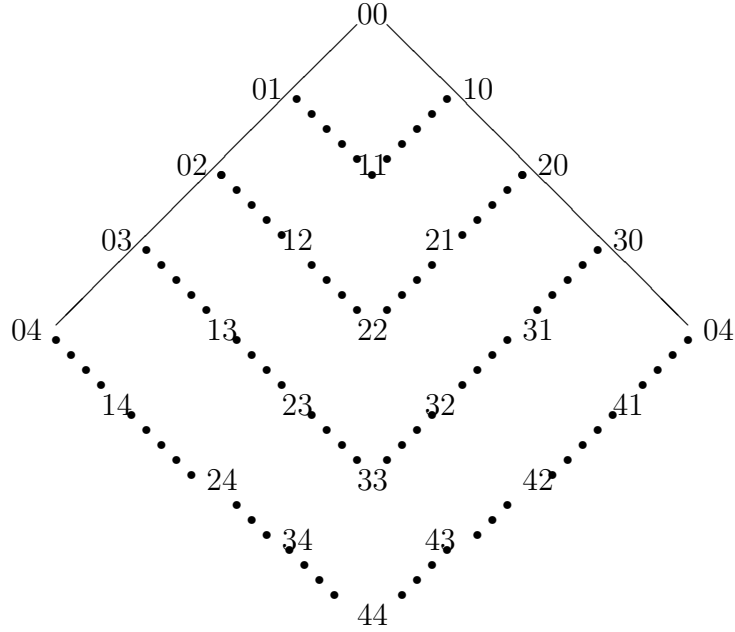
\begin{figure}[!h]
\begin{center}
\setlength{\unitlength}{2mm}
\begin{picture}(40,40)(-20,-20)
\put(1,20){\line(1,-1){20}}
\put(-1,20){\line(-1,-1){20}}
\put(-1,20){$00$}

\multiput(-10,10)(10,0){3}{\circle*{0.5}}

\multiput(-5,15)(1,-1){5}{\circle*{0.5}}\put(-8,15){$01$}
\multiput(5,15)(-1,-1){5}{\circle*{0.5}}\put(6,15){$10$} 
\put(-1,10){$11$}
\multiput(-10,10)(1,-1){5}{\circle*{0.5}}\put(-13,10){$02$}
\multiput(10,10)(-1,-1){5}{\circle*{0.5}}\put(11,10){$20$}
\multiput(-4,4)(1,-1){4}{\circle*{0.5}}\put(-6,5){$12$}
\multiput(4,4)(-1,-1){4}{\circle*{0.5}}\put(3.5,5){$21$}
\put(-1,-1){$22$}
\multiput(-15,5)(1,-1){5}{\circle*{0.5}}\put(-18,5){$03$}
\multiput(15,5)(-1,-1){5}{\circle*{0.5}}\put(16,5){$30$}
\put(-11,-1){$13$}
\put(10,-1){$31$}
\multiput(-9,-1)(1,-1){4}{\circle*{0.5}}
\multiput(9,-1)(-1,-1){4}{\circle*{0.5}}
\multiput(-4,-6)(1,-1){4}{\circle*{0.5}}\put(-6,-6){$23$}
\multiput(4,-6)(-1,-1){4}{\circle*{0.5}}\put(3.5,-6){$32$}
\put(-1,-11){$33$}
\put(-24,-1){$04$}
\put(22,-1){$04$}
\put(-18,-6){$14$}
\put(16,-6){$41$}
\put(-11,-11){$24$}
\put(10,-11){$42$}
\put(-6,-15){$34$}
\put(3.5,-15){$43$}
\multiput(-21,-1)(1,-1){4}{\circle*{0.5}}
\multiput(21,-1)(-1,-1){4}{\circle*{0.5}}
\multiput(-16,-6)(1,-1){5}{\circle*{0.5}} \multiput(16,-6)(-1,-1){5}{\circle*{0.5}}
\multiput(-9,-12)(1,-1){3}{\circle*{0.5}}\multiput(9,-12)(-1,-1){3}{\circle*{0.5}}
\multiput(-5.5,-15)(1,-1){4}{\circle*{0.5}}
\multiput(5,-15)(-1,-1){4}{\circle*{0.5}}
\put(-1,-20){$44$}
\end{picture}
\end{center}
\caption{Description of $\mathcal{I}_n$'s for  $ 2 \le d \le 5$. }\label{fig:1} 
\end{figure}
\item We let $a_s =c_{1s},\, b_t = c_{2t}$, for $0 \le s,\, t \le d-1$. Then we can write $a_s b_t$ in the place  $s t$ keeping in mind that $a_0=1=b_0$.  
 A pictorial representation is given in \cref{fig:2}. For $0 \le n \le N'$, such products for $\mathcal{I}_n$ are in  the $n$-th row in \cref{fig:2}. The consistency conditions are expressed in \cref{fig:2}  by specifying the sum of $n$th row as the number of entries in that row. We note this as follows
 \begin{itemize}
 \item $a_0=1=b_0$, in other words, we   concentrate on product vectors in their neat forms.
 \item For $1 \le n \le d-1, \,\sum_{\substack{0 \le s,t \le d-1 \\ s+t =n}} a_s b_t =n+1.$
 \item For $ d \le n \le 2d -3, \, \sum_{\substack{0 \le s,t \le d-1 \\ s+t =n}} a_s b_t =2d-n-1.$
 \item $a_{d-1}b_{d-1} =1$.
 \item At least one of the entries occurring in the left hand side above is not equal to 1. Note, as a consequence, at least one more in the same row is not equal to $1$. 
 \end{itemize}

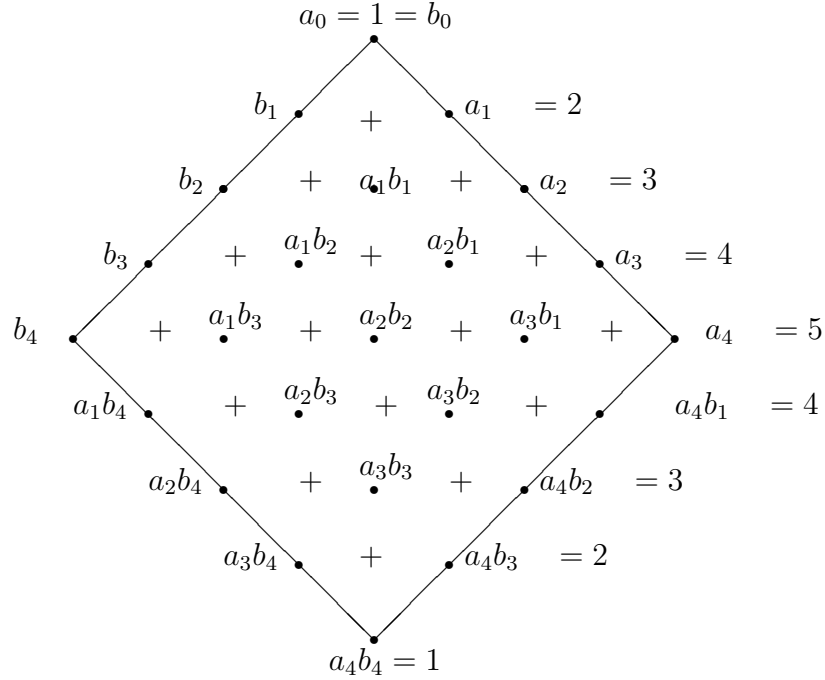
\begin{figure}[!h]
\begin{center}
\setlength{\unitlength}{2mm}
\begin{picture}(40,50)(-20,-25)
\put(0,20){\line(1,-1){20}}
\put(0,20){\line(-1,-1){20}}
\put(-5,21){$a_0=1=b_0$}

\put(0,20){\circle*{0.5}}
\put(0,-20){\circle*{0.5}}
\put(-20,0){\line(1,-1){20}}
\put(20,0){\line(-1,-1){20}}
\put(-5,15){\circle*{0.5}}\put(-8,15){$b_1$}
\put(5,15){\circle*{0.5}}\put(6,15){$a_1 \quad =2$} 
\put(-1,14){$+$}

\put(-10,10){\circle*{0.5}}	\put(-13,10){$b_2$}	\put(-1,10){$a_1b_1$} 
\put(10,10){\circle*{0.5}}	\put(11,10){$a_2 \quad =3$} 	\put(-5,10){$+$}\put(5,10){$+$}
\multiput(-15,5)(10,0){4}{\circle*{0.5}}\put(-18,5){$b_3$}

\put(16,5){$a_3 \quad =4$}

\put(-6,6){$a_1b_2$}

\put(3.5,6){$a_2b_1$}
\put(-10,5){$+$} \put(10,5){$+$} \put(-1,5){$+$}
\put(-24,0){$b_4$} 
\multiput(-20,0)(10,0){5}{\circle*{0.5}}
\put(22,0){$a_4 \quad =5$} 

\put(-11,1){$a_1b_3$}

\put(-1,1){$a_2b_2$}

\put(9,1){$a_3b_1$}
\multiput(-15,0)(10,0){4}{$+$}
\multiput(-10,10)(10,0){3}{\circle*{0.5}}

\multiput(-15,-5)(10,0){4}{\circle*{0.5}}
\put(-20,-5){$a_1b_4$}
\put(-6,-4){$a_2b_3$}
\put(3.5,-4){$a_3b_2$}
\put(20,-5){$a_4b_1 \quad =4$}
\multiput(-10,-5)(10,0){3}{$+$} 
\multiput(-10,-10)(10,0){3}{\circle*{0.5}}
\put(-15,-10){$a_2b_4$}
\put(11,-10){$a_4b_2 \quad =3$}
\put(-1,-9){$a_3b_3$}
\multiput(-5,-10)(10,0){2}{$+$}
\put(-5,-15){\circle*{0.5}}\put(-10,-15){$a_3b_4$}
\put(5,-15){\circle*{0.5}}\put(6,-15){$a_4b_3 \quad =2$} 
\put(-1,-15){$+$}

\put(-3,-22){$a_4b_4=1$}
\put(-20,-25){At least one of the terms is not equal to 1.}
\end{picture}
\end{center}
\caption{Consistency conditions diamond for $d=5$.  }\label{fig:2} 
\end{figure}
\noindent We note the consistency conditions for $d=3$ and $d=4$.
\item {\bf Case} of $d=3$. 
\begin{enumerate}[(a)]
\item $a_0 =1 =b_0$.
\begin{align*}
n &=1 :& &b_1 +a_1 =2,\\
n &=2 :& &b_2 +a_1 b_1 +a_2  =3,\\
n &=3 :& &a_1b_2 +a_2 b_1=2,\\
 && & a_2 b_2 =1.
\end{align*}
At least one of the terms in the rows $n=1, \, 2, \, 3$ is not equal to $1$. 
\item Suppose conditions given in (a) are all satisfied. If $a_1=1$, then $b_1=1$ and so $a_2 +b_2 =2$ and altogether $a_2=b_2=1$. Indeed we can show that setting any of $b_1, \, a_2,\,  b_2$ is $1$ forces all $a$'s and $b$'s to be $1$. As this is not permitted, we may declare that none of the $a_1,\, b_1,\, a_2,\, b_2$ is $1$. 
\end{enumerate}
\item {\bf Case} $d=4$. 
\begin{enumerate}[(a)]
\item $a_0 =1 =b_0$. 
\begin{align*}
n &=1 :& &b_1 +a_1 =2,\\
n &=2 :& &b_2 +a_1 b_1 +a_2  =3,\\
n &=3 :& &b_3+ a_1b_2 +a_2 b_1+a_3=4,\\
n &=4:& & a_1b_3+ a_2 b_2 +a_3b_1=3,\\
n &=5:& &  a_2 b_3 +a_3b_2=2,\\
&&& a_3 b_3 =1.
\end{align*}
At least one of the terms of the left hand side is not equal to 1. 
\item Suppose conditions in (a) above are all satisfied. Suppose $a_1=1$. Then we have $b_1=1$. So $b_2 +a_2 =2$. Further $b_3 +b_2 +a_2 +a_3 =4$ giving us $b_3 + a_3 =2$. Next condition gives us $a_2b_2 =1$. Hence the last line give $a_3 =1=b_3$ and also $a_2 =b_2=1$, which is not permitted. So $a_1 \neq 1$. Similarly we can show that none of $b_1, \, b_3,\, a_3$ is equal to 1.
\end{enumerate}

\item 
Let $ 1 \le s, \,t \le d-1.$ Put
\begin{align*}
\alpha_s &= \frac{1}{2} (a_s + b_s), &\beta_s &= \frac{1}{2} (a_s - b_s), \quad \text{ which gives } \\
a_s &= \alpha_s + \beta_s, & b_s &= \alpha_s - \beta_s.
\end{align*} 
Simple calculations show  that 
\[a_s b_t + b_s a_t =2( \alpha_s \alpha_t - \beta_s \beta_t). \]
In particular $a_sb_s = \alpha_s^2 - \beta_s^2$. 
\end{enumerate}
\end{rem}
Thus, we can have an alternative system of consistency conditions. We utilise it in the proof of our next theorem.
\begin{theorem}\label{th3.6}
For the space $\mathbb{C}^3 \otimes \mathbb{C}^3$, the span $\mathcal{S}_{P,1}$ of $\mathcal{S}_{P}$ and $\ket{z_1}$ has product index 3 with specified product vectors. 
\end{theorem}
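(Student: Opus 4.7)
I would invoke the consistency conditions for $d=3$ tabulated in \cref{rem3.3}(iii): with $a_0=b_0=1$ fixed (neat form),
\begin{align*}
a_1+b_1 &= 2, & b_2+a_1b_1+a_2 &= 3, & a_1b_2+a_2b_1 &= 2, & a_2b_2 &= 1,
\end{align*}
where at least one of $a_1,b_1,a_2,b_2$ differs from $1$ (and, as remarked, then none equals $1$). The goal is to enumerate all solutions and show that exactly two give product vectors distinct from $\ket{z_1}$.

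The key simplification is the change of variables from \cref{rem3.3}(v): set $\alpha_s=\tfrac12(a_s+b_s)$ and $\beta_s=\tfrac12(a_s-b_s)$, so that $a_sb_t+b_sa_t=2(\alpha_s\alpha_t-\beta_s\beta_t)$ and $a_sb_s=\alpha_s^2-\beta_s^2$. The $n=1$ equation immediately forces $\alpha_1=1$. The $n=2$ equation then becomes $2\alpha_2+(1-\beta_1^2)=3$, i.e.\ $\alpha_2=1+\tfrac12\beta_1^2$. The $n=3$ equation reads $2(\alpha_2-\beta_1\beta_2)=2$, so $\beta_1\beta_2=\alpha_2-1=\tfrac12\beta_1^2$; if $\beta_1\neq 0$ this gives $\beta_2=\tfrac12\beta_1$, and if $\beta_1=0$ we are driven back to $a_1=b_1=1$, which is forbidden.

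Plugging $\alpha_2,\beta_2$ into $a_2b_2=\alpha_2^2-\beta_2^2=1$ yields
\begin{equation*}
\bigl(1+\tfrac12\beta_1^2\bigr)^2-\tfrac14\beta_1^2=1,
\end{equation*}
which simplifies to $\tfrac14\beta_1^2(3+\beta_1^2)=0$. The root $\beta_1=0$ has been discarded, leaving $\beta_1=\pm i\sqrt{3}$. For these two solutions one computes $a_1=1\pm i\sqrt{3}$, $b_1=1\mp i\sqrt{3}$, $a_2=-\tfrac12\pm i\tfrac{\sqrt{3}}{2}$, $b_2=-\tfrac12\mp i\tfrac{\sqrt{3}}{2}$, i.e.\ $a_2=\omega^{\pm 1}$ where $\omega=e^{2\pi i/3}$.

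Finally I would list the three product vectors explicitly: $\ket{z_1}$ itself, together with the two neat product vectors $(\ket{0}+a_1\ket{1}+a_2\ket{2})\otimes(\ket{0}+b_1\ket{1}+b_2\ket{2})$ for the two sign choices above. The minor obstacle is purely bookkeeping — verifying that the derived $(a_1,b_1,a_2,b_2)$ actually satisfy all four consistency equations (which is forced by construction but worth recording) and confirming via \cref{rem3.2}(i)(c) that these are genuinely new product vectors in $\mathcal{S}_{P,1}$ and not scalar multiples of $\ket{z_1}$; this is immediate because $a_1\neq 1$.
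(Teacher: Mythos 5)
Your proposal is correct and follows essentially the same route as the paper: the same neat-form consistency conditions for $d=3$ from \cref{rem3.3}, the same change of variables $\alpha_s,\beta_s$ from \cref{rem3.3}(v), and the same reduced system $\alpha_1=1$, $\alpha_2=1+\tfrac12\beta_1^2$, $\alpha_2=1+\beta_1\beta_2$, $\alpha_2^2-\beta_2^2=1$, whose two nontrivial solutions $\beta_1=\pm i\sqrt{3}$ yield exactly the paper's product vectors $a_1=-2\omega^{\mp1}$, $a_2=\omega^{\pm1}$ (your $1\pm i\sqrt3$, $-\tfrac12\pm\tfrac{\sqrt3}{2}i$). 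You merely spell out the elimination leading to $\tfrac14\beta_1^2(3+\beta_1^2)=0$, which the paper leaves implicit.
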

\begin{proof}
 The consistency conditions in \cref{rem3.3} (v) above take the form $\alpha_1 =1, \, \alpha_2 = 1 + \frac{1}{2} \beta_1^2$ together with,  
\begin{align*}
&\alpha_2 = \beta_1\beta_2 +1, \\
&\alpha_2^2 - \beta_2^2 = 1.  
\end{align*} 
For solutions other than the obvious $\ket{z_1}$, indeed the system has two solutions which are:
\begin{align*}
\beta_1&= \sqrt{3} \imath, & \beta_2 &= \frac{\sqrt{3}}{2} \imath, & \alpha_2&= - \frac{1}{2}, \quad \text{ and }\\
\beta_1&=- \sqrt{3} \imath, & \beta_2 &= -\frac{\sqrt{3}}{2} \imath, & \alpha_2&= - \frac{1}{2}.
\end{align*}
More specifically, in terms of the cube roots $1, \, \omega, \, \omega^2$ of unity, 
\begin{align*}
a_1&= -2 \omega^2, & b_1&= -2 \omega, & a_2 &= \omega, & b_2 &= \omega^2,  \quad \text{ and }\\
a_1&= -2 \omega, & b_1&= -2 \omega^2, & a_2 &= \omega^2, & b_2 &= \omega. 
\end{align*}
This gives two new product vectors 
\begin{eqnarray*}
&& \left(\ket{0} -2 \omega^2 \ket{1} + \omega \ket{2} \right) \otimes \left(\ket{0} -2 \omega \ket{1} + \omega^2 \ket{2} \right),\\
&& \left(\ket{0} -2 \omega \ket{1} + \omega^2 \ket{2} \right) \otimes \left(\ket{0} -2 \omega^2 \ket{1} + \omega \ket{2} \right).
\end{eqnarray*}
in the span $\mathcal{S}_{P,1}$ of $\mathcal{S}_{P}$ and $\ket{z_1}$.  
\end{proof}
\begin{theorem}
For $\mathbb{C}^3 \otimes \mathbb{C}^3$ system,  $0 \neq \lambda \in \mathbb{C}$, the linear span $\mathcal{S}_{P,\lambda}$ of $\mathcal{S}_P$ and  $\ket{z_\lambda}$ has product index 3 with specified product vectors. 
\end{theorem}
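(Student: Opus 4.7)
The plan is to reduce the general $\lambda$ case directly to the $\lambda=1$ case handled in \cref{th3.6} by invoking the scaling bijection from \cref{rem3.2}(ii). Concretely, given any non-zero $\lambda \in \mathbb{C}$, a neat product vector in $\mathcal{S}_{P,\lambda}$ has the form $\bigotimes_{j=1}^{2} \bigl(\sum_t b_{jt}\ket{t}\bigr)$ where the coefficient array $\bm{b} = [b_{jt}]$ satisfies the Consistency Conditions for $\ket{z_\lambda}+\ket{\eta}$ with $\ket{\eta}\in\mathcal{S}_P$. The substitution $b_{jt} = \lambda^t c_{jt}$ translates these conditions into the Consistency Conditions for $\lambda=1$ applied to $\bm{c}=[c_{jt}]$, and this correspondence is a bijection between neat product vectors of $\mathcal{S}_{P,\lambda}$ and those of $\mathcal{S}_{P,1}$.

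Next, I would invoke \cref{th3.6}, which shows that $\mathcal{S}_{P,1}$ has exactly three specified product vectors: $\ket{z_1}$ itself, together with the two vectors given there in terms of the cube roots of unity $1,\omega,\omega^2$. Under the bijection $c_{jt}\mapsto\lambda^t c_{jt}$, the vector $\ket{z_1}$ transforms back to $\ket{z_\lambda}$, and the other two solutions
\[
(a_1,b_1,a_2,b_2) = (-2\omega^2,\,-2\omega,\,\omega,\,\omega^2) \quad\text{and}\quad (-2\omega,\,-2\omega^2,\,\omega^2,\,\omega)
\]
produce the two new neat product vectors
\begin{align*}
&\bigl(\ket{0} - 2\omega^2 \lambda\,\ket{1} + \omega \lambda^2\,\ket{2}\bigr) \otimes \bigl(\ket{0} - 2\omega \lambda\,\ket{1} + \omega^2 \lambda^2\,\ket{2}\bigr),\\
&\bigl(\ket{0} - 2\omega \lambda\,\ket{1} + \omega^2 \lambda^2\,\ket{2}\bigr) \otimes \bigl(\ket{0} - 2\omega^2 \lambda\,\ket{1} + \omega \lambda^2\,\ket{2}\bigr).
\end{align*}

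Finally, since every product vector in $\mathcal{S}_{P,\lambda}$ is a scalar multiple of a neat product vector (\cref{lem3.1}), and both $\ket{z_0}=\ket{\bm{0}}\notin\mathcal{S}_{P,\lambda}$ and $\ket{z_\infty}=\ket{\bm{2}}\notin\mathcal{S}_{P,\lambda}$ by \cref{th:3.2} for $\lambda\neq 0$, the count from the $\lambda=1$ case transfers verbatim: $\mathcal{S}_{P,\lambda}$ has product index exactly $3$, with the three vectors listed above.

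I do not anticipate a major obstacle here; the work has already been done in \cref{th3.6}, and the only delicate point is to confirm that the scaling bijection \cref{rem3.2}(ii) really does preserve both the neat form and the condition ``$\bm{c}\neq(1,\ldots,1)$'' (i.e.\ that the product vector is genuinely new and not already $\ket{z_\lambda}$). Both are immediate from the formula $b_{jt}=\lambda^t c_{jt}$ together with $\lambda\neq 0$, so this is a routine verification rather than a substantive obstruction.
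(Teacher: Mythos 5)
Your proposal is correct and follows exactly the paper's route: the paper's own proof is the one-line observation that the result "follows from \cref{th3.6} in view of \cref{rem3.2} (ii)", i.e.\ the scaling substitution $b_{jt}=\lambda^t c_{jt}$ reduces the consistency conditions for $\ket{z_\lambda}$ to those for $\ket{z_1}$. You simply spell this out in more detail, including the explicit scaled product vectors, which the paper leaves implicit.
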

\begin{proof}
It follows from \cref{th3.6} in view of \cref{rem3.2} (ii). 
\end{proof}
\begin{rem}
We note that a solution for consistency conditions for any particular $d$ can not be used to give a solution for $d+1$ just by trying to suitably solve for the missing terms and vice versa.  For instance, for $d=3$, this leads to $ b_3 +a_3 =2,\, a_1b_3 +a_3b_1 =2,\, a_2 b_3 + a_3 b_2 =2$, and $a_3 b_3=1$, which  gives the values $a_3 =1 =b_3,\, a_1 +b_1 =2,\, a_2 +b_2 =2$, and $a_2 b_2 =1$. This is not true. This can be seen from \cref{fig:3}.  
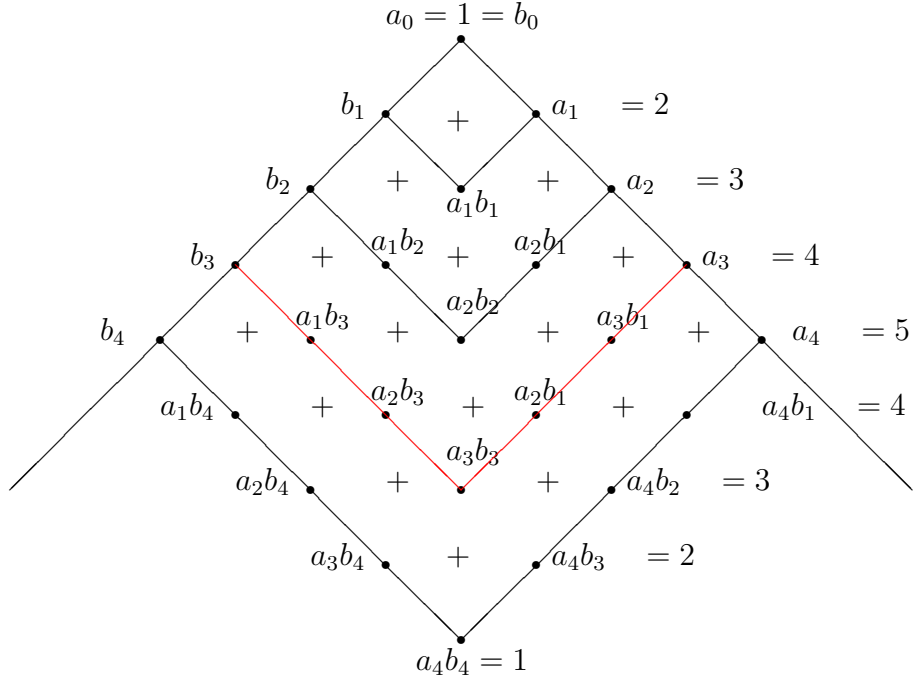
\begin{figure}[!h]
\begin{center}
\setlength{\unitlength}{2mm}
\begin{picture}(44,44)(-22,-22)
\put(0,20){\line(1,-1){30}}
\put(0,20){\line(-1,-1){30}}
\put(-5,21){$a_0=1=b_0$}
\put(0,20){\circle*{0.5}}
\put(0,-20){\circle*{0.5}}

\put(-5,15){\circle*{0.5}}\put(-8,15){$b_1$}
\put(5,15){\circle*{0.5}}\put(6,15){$a_1 \quad =2$}

\multiput(-10,10)(10,0){3}{\circle*{0.5}}	\put(-13,10){$b_2$}	\put(-1,8.5){$a_1b_1$} 
\put(11,10){$a_2 \quad =3$} 	
\multiput(-15,5)(10,0){4}{\circle*{0.5}}\put(-18,5){$b_3$}

\put(16,5){$a_3 \quad =4$}

\put(-6,6){$a_1b_2$}

\put(3.5,6){$a_2b_1$}

\put(-24,0){$b_4$} 
\multiput(-20,0)(10,0){5}{\circle*{0.5}}
\put(22,0){$a_4 \quad =5$} 

\put(-11,1){$a_1b_3$}

\put(-1,2){$a_2b_2$}

\put(9,1){$a_3b_1$}

\multiput(-15,-5)(10,0){4}{\circle*{0.5}}
\put(-20,-5){$a_1b_4$}
\put(-6,-4){$a_2b_3$}
\put(3.5,-4){$a_2b_1$}
\put(20,-5){$a_4b_1 \quad =4$}

\multiput(-10,-10)(10,0){3}{\circle*{0.5}}
\put(-15,-10){$a_2b_4$}
\put(11,-10){$a_4b_2 \quad =3$}
\put(-1,-8){$a_3b_3$}

\put(-5,-15){\circle*{0.5}}\put(-10,-15){$a_3b_4$}
\put(5,-15){\circle*{0.5}}\put(6,-15){$a_4b_3 \quad =2$}

\put(-3,-22){$a_4b_4=1$}
\put(-5,15){\line(1,-1){5}}
\put(5,15){\line(-1,-1){5}}
\put(-10,10){\line(1,-1){10}}
\put(10,10){\line(-1,-1){10}}
{\color{red}\put(-15,5){\line(1,-1){15}}
\put(15,5){\line(-1,-1){15}}}
\put(-20,0){\line(1,-1){20}}
\put(20,0){\line(-1,-1){20}}
\put(-1,14){$+$}
\put(-5,10){$+$}\put(5,10){$+$}
\put(-10,5){$+$} \put(10,5){$+$} \put(-1,5){$+$}
\multiput(-15,0)(10,0){4}{$+$}
\multiput(-10,-5)(10,0){3}{$+$} 
\multiput(-5,-10)(10,0){2}{$+$}
\put(-1,-15){$+$}

\end{picture}
\end{center}
\caption{Non-compatibility of consecutive consistency conditions.}\label{fig:3} 
\end{figure}

\end{rem}

\section{Examples of double perturbations with infinitely many pure product states}

In this section we consider double perturbations of completely entangled subspaces of multipartite systems that have infinitely many  pure product states; in other words, they are not QCES. We only give some instances rather than the complete picture.  We divide the discussion in two parts corresponding to sections,  \ref{sec2} and \ref{sec3} above. 
\subsection{UPB related set-up} We first consider the situation for $3 \otimes 3$ system as in subsections \ref{ssec2.1} and \ref{ssec2.2} above. 
\begin{theorem}
The span $\mathcal{S}_{\mathcal{U},0,1}$ of  $\mathcal{S_U}, \, \ket{\psi_0},$ and $\ket{\psi_1}$ has infinitely many product vectors, in other words, it is not QCES.
\end{theorem}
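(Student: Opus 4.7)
\par The plan is to exhibit an explicit one-parameter family of pairwise non-proportional product vectors in $\mathcal{S}_{\mathcal{U},0,1}$, which forces the product index to be infinite. Because the TILES UPB is orthonormal, the orthogonal complement of $\mathcal{S}_{\mathcal{U},0,1}$ inside $\mathbb{C}^3\otimes\mathbb{C}^3$ is immediately
\[
\mathcal{S}_{\mathcal{U},0,1}^\perp = \operatorname{span}\{\ket{\psi_2},\ket{\psi_3},\ket{\psi_4}\}.
\]
So it suffices to produce infinitely many non-proportional product vectors orthogonal simultaneously to $\ket{\psi_2}$, $\ket{\psi_3}$, and $\ket{\psi_4}$.

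\par To do this I would fix the second tensor factor and let the first vary. The key observation is that the second factors of the product vectors $\ket{\psi_2}=\tfrac{1}{\sqrt{2}}(\ket{0}-\ket{1})\ket{2}$ and $\ket{\psi_3}=\tfrac{1}{\sqrt{2}}(\ket{2}-\ket{1})\ket{0}$ are $\ket{2}$ and $\ket{0}$ respectively, both of which are orthogonal to $\ket{1}$. Hence for any $\ket{a}=a_0\ket{0}+a_1\ket{1}+a_2\ket{2}$, the product vector $\ket{a}\otimes\ket{1}$ is automatically orthogonal to both $\ket{\psi_2}$ and $\ket{\psi_3}$. The remaining orthogonality, against $\ket{\psi_4}=\tfrac{1}{3}(\ket{0}+\ket{1}+\ket{2})(\ket{0}+\ket{1}+\ket{2})$, collapses to the single linear equation $a_0+a_1+a_2=0$.

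\par The vectors $\ket{a}\in\mathbb{C}^3$ satisfying this constraint form a two-dimensional subspace; projectively this is a $\mathbb{P}^1$'s worth of pairwise non-proportional vectors, each yielding a distinct product vector $\ket{a}\otimes\ket{1}\in\mathcal{S}_{\mathcal{U},0,1}$. That produces the required infinite family and completes the proof.

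\par There is essentially no technical obstacle: the whole argument reduces to three orthogonality checks after the ansatz $\ket{b}=\ket{1}$ is made. The real content is conceptual. Single perturbations of $\mathcal{S}_\mathcal{U}$ by any member of $\mathcal{C}$ yield a QCES of product index $6$ (cf.\ \cref{th2.4}), but the simultaneous perturbation by $\ket{\psi_0}$ and $\ket{\psi_1}$---whose second factors $\ket{0}-\ket{1}$ and $\ket{1}-\ket{2}$ share the basis vector $\ket{1}$---releases exactly the degree of freedom in the second tensor slot needed for a whole projective line of new product vectors to appear.
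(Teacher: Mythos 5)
Your proof is correct, and it arrives at exactly the same family of product vectors as the paper, but by a genuinely different route. The paper works from inside the span: using the polynomial representation of \cref{rem2.2}(iii), it takes a general element $G(X,Y)$ of $\mathcal{S}_{\mathcal{U}}$ with $\bm{\gamma}=(1,t,0,0)$, adds suitable multiples of $\ket{\psi_0}$ and $\ket{\psi_1}$, and observes that the result collapses to $\Sigma_t(X,Y)=2\bigl(1-(t+1)X+tX^2\bigr)Y$, an explicitly factored polynomial for each $t\neq 0$. You instead exploit the orthonormality of the TILES vectors to identify $\mathcal{S}_{\mathcal{U},0,1}$ as $\{\ket{\psi_2},\ket{\psi_3},\ket{\psi_4}\}^{\perp}$, and then verify membership of the ansatz $\ket{a}\otimes\ket{1}$ by three inner-product checks, reducing everything to the single condition $a_0+a_1+a_2=0$. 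Note that the paper's vectors are precisely $\ket{a}\otimes\ket{1}$ with $a=(1,-(t+1),t)$, so the two families coincide; yours is marginally more complete, giving the whole projective line rather than an affine slice of it, and it avoids the polynomial bookkeeping entirely. What the paper's construction buys in exchange is uniformity of method: writing perturbed vectors as explicit combinations of $\mathcal{S}_{\mathcal{U}}$ and the adjoined product vectors is the same machinery used throughout \cref{sec2} and, crucially, in the Parthasarathy-space perturbations of later sections, where the spanning set is not orthogonal and your orthogonal-complement shortcut is not available.
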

\begin{proof}
We refer to \cref{rem2.2}, particularly part (iii) above with $\bm{\gamma}$ to be suitably chosen. Let $0\neq t \in \mathbb{C}$. Corresponding to $\bm{\gamma} =(1,t,0,0)$, we have 
\[\Sigma_t(X,Y) =G(X,Y) -(1-Y) -t(X^2Y^2 -X^2Y)\]
is in the linear span $\mathcal{S}_{\mathcal{U},0,1}$ of  $\mathcal{S}_{\mathcal{U}}, \, \ket{\psi_0}$ and $\ket{\psi_1}$.  The above equation takes the form 
\begin{eqnarray*}
\Sigma_t(X,Y)  &=& 2Y +2tX^2Y -2(t+1) XY\\
&=& 2(1-(t+1) X +tX^2)Y,
\end{eqnarray*}
which is a  product vector. Thus $\mathcal{S}_{\mathcal{U},0,1}$ has infinitely many product vectors.
\end{proof}
\begin{theorem}
 The span $\mathcal{S}_{\mathcal{U}, 0,4}$ of $\mathcal{S_U}, \, \ket{\psi_0}$ and $\ket{\psi_4}$ has infinitely many pure product states. In other words, it is not QCES. 
\end{theorem}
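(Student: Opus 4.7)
My plan is to construct an explicit one-parameter family of product vectors inside $\mathcal{S}_{\mathcal{U},0,4}$ by exhibiting two linearly independent product vectors that share the common first tensor factor $1+X+X^2$; any linear combination then automatically factors as a product vector with this same first factor, yielding infinitely many distinct one-dimensional product subspaces.

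The first vector is $\ket{\psi_4}$ itself, whose polynomial form is $F_4(X,Y) = (1+X+X^2)(1+Y+Y^2)$; it lies in $\mathcal{S}_{\mathcal{U},0,4}$ trivially. For the second, I propose the product vector
\[ Q(X,Y) = (1+X+X^2)(Y+Y^2). \]
To place $Q$ in $\mathcal{S}_{\mathcal{U},0,4}$, I will find scalars $\alpha_0, \alpha_4 \in \mathbb{C}$ such that $Q - \alpha_0 F_0 - \alpha_4 F_4$ lies in $\mathcal{S}_\mathcal{U}$. Writing out the nine coefficients $a_{st}$ of this expression and imposing the five consistency conditions listed in \cref{rem2.2}(iii), three of the four ``equality'' conditions become automatic because of the shared factor $1+X+X^2$; only $a_{00}=a_{01}$ and the sum condition $\sum_{s,t} a_{st}=0$ remain, giving a $2\times 2$ linear system which is solved by $\alpha_0 = -\tfrac{1}{2}$ and $\alpha_4 = \tfrac{2}{3}$. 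This shows $Q \in \mathcal{S}_{\mathcal{U},0,4}$.

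With both $F_4$ and $Q$ in $\mathcal{S}_{\mathcal{U},0,4}$, the one-parameter family
\[ \alpha F_4 + \beta Q \;=\; (1+X+X^2)\bigl(\alpha + (\alpha+\beta)(Y+Y^2)\bigr) \]
sits inside $\mathcal{S}_{\mathcal{U},0,4}$ and consists entirely of product vectors. Setting $t = (\alpha+\beta)/\alpha$ and letting $t$ range over $\mathbb{C}$, the vectors $(1+X+X^2)\bigl(1+t(Y+Y^2)\bigr)$ are pairwise non-proportional, so they represent infinitely many distinct one-dimensional subspaces. Hence $\mathcal{S}_{\mathcal{U},0,4}$ contains infinitely many pure product vectors and is not QCES.

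The only real computation is the coefficient-matching for $Q$; the candidate was chosen precisely so that the common factor $1+X+X^2$ makes all linear combinations automatically product vectors, and no genuine obstacle arises. The conceptual point to emphasize is the contrast with \cref{th2.4}, where adding only $\ket{\psi_4}$ to $\mathcal{S}_\mathcal{U}$ still gave a QCES: it is the simultaneous presence of $\ket{\psi_0}$ that supplies just enough freedom to realize the second product vector $Q$ sharing $\ket{\psi_4}$'s first tensor factor.
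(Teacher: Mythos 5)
Your proof is correct: the membership computation checks out (with $\alpha_0=-\tfrac12$, $\alpha_4=\tfrac23$ the polynomial $Q+\tfrac12 F_0-\tfrac23 F_4$ satisfies $a_{00}=a_{01}=-\tfrac16$, the three other equality conditions hold automatically, and the total sum is $6-9\cdot\tfrac23=0$), so $Q=(1+X+X^2)(Y+Y^2)\in\mathcal{S}_{\mathcal{U},0,4}$, and the family $(1+X+X^2)\bigl(1+t(Y+Y^2)\bigr)$ indeed gives infinitely many distinct product rays. The underlying idea is the same as the paper's — exhibit two product vectors in $\mathcal{S}_{\mathcal{U},0,4}$ with a common first tensor factor, so that their two-dimensional span consists entirely of product vectors — but the realization differs. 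The paper works with the common factor $\ket{0}$: it combines $(1-t)(1-Y)$, a multiple of $\ket{\psi_0}$, with $(1+t)(1+Y)$, which is already known from the proof of \cref{th2.4} to lie in $\mathcal{S}_{\mathcal{U},4}$ (it corresponds to $\gamma_0\neq 0$, all other $\gamma_j=0$), and obtains the family $2(1+tY)$, i.e.\ $\ket{0}\otimes(\ket{0}+t\ket{1})$, with no new coefficient matching needed. You instead use the common factor $1+X+X^2$, pairing $\ket{\psi_4}$ with the new product vector $(\ket{0}+\ket{1}+\ket{2})\otimes(\ket{1}+\ket{2})$, whose membership genuinely requires the $\ket{\psi_0}$ direction (consistent with \cref{th2.4}, since it is not among the six product vectors of $\mathcal{S}_{\mathcal{U},4}$), and this costs you the small linear-system verification that the paper's choice avoids. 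Both routes produce an explicit two-dimensional subspace of the form $\ket{u}\otimes W$ with $\dim W=2$, hence infinitely many product vectors; yours has the mild additional merit of making visible exactly how the extra freedom from $\ket{\psi_0}$ enlarges the product structure around $\ket{\psi_4}$.
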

\begin{proof}
 We refer to \cref{th2.4}.  Let $t \in \mathbb{C}$. Consider $(1+t)(1+Y)$. It is in $\mathcal{S}_{\mathcal{U},4}$ by taking $\tilde{\bm{\gamma}}$ with $\gamma_0=1+t, \,\gamma_j=0,\,j=1,\,2,\,3,\,4$. Also $(1-t)(1-Y)$ is a scalar multiple of $\ket{\psi_0}$. So 
\[\Sigma_t(X,Y) = (1-t)(1-Y) + (1+t)(1+Y)\]
is in  $\mathcal{S}_{\mathcal{U},0,4}$. But 
\[\Sigma_t(X,Y) = 2+2tY =2(1+tY)\]
gives a product vector. So $\mathcal{S}_{\mathcal{U},0,4}$ has infinitely many product vectors of the form 
\[\frac{1}{\sqrt{1+|t|^2}}\ket{0} \otimes (\ket{0} + t \ket{1})\] 
for any $t \in \mathbb{C} \backslash \{0\}$.  
\end{proof}
\par We now come to the tripartite qubit UPB as in \cref{ssec2.3}. We state and prove a double perturbation theorem on the linear  span $\mathcal{S}_{\mathcal{V}, 1,2}$ of $\mathcal{S_V}, \,\ket{\phi_1}$ and $\ket{\phi_2}$.
\begin{theorem}
The span $\mathcal{S}_{\mathcal{V},1,2}$ contains infinitely many product vectors.
\end{theorem}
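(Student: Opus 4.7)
The plan is to exhibit an explicit one-parameter family of product vectors inside $\mathcal{S}_{\mathcal{V},1,2}$. The key observation is that among the six product vectors of $\mathcal{S}_{\mathcal{V},1}$ catalogued in \cref{2.2.6}, the vector $\ket{0,+,0}$ shares its second and third qubit factors (namely $\ket{+}$ and $\ket{0}$) with $\ket{\phi_2}=\ket{1,+,0}$. Hence any linear combination of these two vectors will automatically factor as a product, and varying the coefficient will sweep out infinitely many pairwise distinct one-dimensional product subspaces.

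Concretely, I would first recall from the proof of \cref{2.2.6} that $\ket{0,+,0}$, with polynomial representation $1+Y$, lies in $\mathcal{S}_{\mathcal{V},1}$ (it arose as the case $a=c=d=0$, $b=1$, where the general polynomial collapsed to $2(1+Y)$). Since $\ket{\phi_2}$ has polynomial $X(1+Y)$, both vectors lie in $\mathcal{S}_{\mathcal{V},1,2}$, and for any $t\in\mathbb{C}$ the element
\[
\Sigma_t(X,Y,Z) \;=\; (1+Y) + t\,X(1+Y) \;=\; (1+tX)(1+Y)
\]
lies in this span and, being conical in the sense of \cref{th2.9}, corresponds to the product vector $(\ket{0}+t\ket{1})\otimes\ket{+}\otimes\ket{0}$. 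As $t$ varies over $\mathbb{C}$, the first tensor factor changes direction, so these one-dimensional product subspaces are pairwise distinct, and the theorem follows.

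There is essentially no obstacle here: the argument reduces entirely to spotting a shared tensor factor between one of the already-known product vectors of $\mathcal{S}_{\mathcal{V},1}$ and the new perturbing vector $\ket{\phi_2}$. The only delicate point is making sure that the chosen auxiliary product vector $\ket{0,+,0}$ genuinely lies in $\mathcal{S}_{\mathcal{V},1}$ rather than only in the span of $\mathcal{V}$ itself, but this is precisely what the earlier case analysis in \cref{2.2.6} established.
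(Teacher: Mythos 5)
Your argument is correct, and it is a genuinely leaner route than the paper's. The paper re-runs the whole machinery for the doubly perturbed space: it writes a general element $Y(1+Z)+\beta X(1+Y)+G_{a,b,c,d}$, splits it as $U+XV$ along the $\{1\}$ vs $\{2,3\}$ cut, derives consistency conditions, and only then makes the special choice $a=\beta-2$, $b=1$, $c=d=0$ to produce the family $(1+X)\bigl(2(\beta-1)+2Y\bigr)$, i.e.\ $\ket{+}\otimes\bigl((\beta-1)\ket{0}+\ket{1}\bigr)\otimes\ket{0}$, with the second factor varying. You instead recycle the already-catalogued product vector $\ket{0,+,0}$ of $\mathcal{S}_{\mathcal{V},1}$ (indeed $2+2Y=Y(1+Z)+G_{0,1,0,0}$, and $G_{0,1,0,0}=2+Y-YZ\in\mathcal{S}_{\mathcal{V}}$ by \cref{lem2.2.3}, so the claim that it lies in $\mathcal{S}_{\mathcal{V},1}$ and not merely in $\mathcal{V}$'s span is sound), observe that it shares its second and third factors with $\ket{\phi_2}=\ket{1,+,0}$, and conclude that the entire plane $\{(1+tX)(1+Y)\}$ sits inside $\mathcal{S}_{\mathcal{V},1,2}$ and consists of product vectors $(\ket{0}+t\ket{1})\otimes\ket{+}\otimes\ket{0}$, pairwise distinct as rays for distinct $t$. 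What your approach buys is economy and a conceptual explanation (a shared tensor factor forces a whole product plane), at the price of leaning on the case analysis of \cref{2.2.6}; what the paper's consistency-condition computation buys is a self-contained, systematic scheme that does not presuppose the single-perturbation result and is better suited to hunting for \emph{all} product vectors rather than just exhibiting infinitely many. The two one-parameter families are different (yours varies the first factor, the paper's the second), but either suffices for the theorem.
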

\begin{proof}
Let $0 \neq \beta \in \mathbb{C}$ and $\bm{0} \neq (a,b,c,d) \in \mathbb{C}^4$. We refer to \cref{rem2.2}. Let 
\begin{eqnarray*}
\Sigma(X,Y,Z) &=& Y(1+Z) + \beta X(1+Y)+ d + 2(a+b+c)  \\
&&+a (X -XY) + b(Y -YZ) +c (Z -XZ) + d XYZ \\
&=& [d+2(a+b+c) +(1+b) Y + (c+(1-b)Y)Z] \\
&&+ X[\beta + a  + (\beta - a)Y +(-c +dY) Z   ]\\
&=& U + XV \quad \text{~in short, say.}
\end{eqnarray*} 
Now $1+b$ and $1- b$ can not both be zero. So $U\neq 0$. Because $\beta \neq 0, \, \beta +a$ and $\beta -a$ can not both be zero together. So $\beta + a  + (\beta - a)Y \neq 0$ and thus,   $V \neq 0$. Therefore, $\Sigma(X,Y,Z)$ represents a product vector  in the bipartite cut $\{ \{1\}, \{2,3\}\}$ of  $\mathcal{H}$ if and only if there is a $\gamma \neq 0$ for which $U = \gamma V$ so that $\Sigma(X,Y,Z) = (\gamma + X) V(Y,Z)$. Now $V(Y,Z) = q(Y) r(Z)$ for some polynomials $q$ and $r$ if and only if for some $\delta$ 
\[-c +dY = \delta (\beta +a + (\beta -a) Y).\]
In that case $V(Y,Z)= (\beta +a + (\beta -a) Y) (1+ \delta Z).$ 
If all these conditions hold then we may put 
\[p(X) = \gamma +  X,\quad q(Y) = (\beta+ a) +(\beta- a) Y , \quad r(Z) = 1+ \delta Z.\]
We may write the consistancy conditions for a fixed $\beta$. 
For some $0 \neq \gamma \in \mathbb{C}$ and $\delta \in \mathbb{C}$
\begin{eqnarray*}
d +2(a+b+c) &=& \gamma (\beta +a), \\
c &=& - \gamma c,\\
1-b  &=& \gamma d,\\
-c &=& \delta (\beta +a),\\
d &=& \delta (\beta - a).
\end{eqnarray*}
Consider $a = \beta -2, \, b=1, \, c=0=d$. Then for $\beta\neq1$, we have $\gamma =1$ and $\delta =0$. The product vector is given by 
\[(1+X) (2\beta -2 + 2Y).\]
The corresponding normalised product vector is 
\[ \ket{\zeta_\beta} = \frac{1}{\sqrt{ 2 (|\beta -1|^2 +1) }} (\ket{0} + \ket{1}) \otimes ((\beta -1)\ket{0} + \ket{1})\otimes \ket{0}.\] 
Now $\ket{\zeta_\beta}$s are different for different $\beta$s. So $\mathcal{S}_{\mathcal{V},1,2}$ has infinitely many product vectors i.e. it is not QCES. 
\end{proof}

\subsection{Set up related to Parthasarathy space $\mathcal{S_P}$}
We begin with the bipartite case of equal dimension $d \ge 2$.
\begin{theorem} \label{th4.4}
Let $k=2, \,d_1=d_2=d \ge 2$. The span $\mathcal{S}_{\mathcal{P}, 0, \infty}$ of  $\mathcal{S_P},\, \ket{z_0}$ and $\ket{z_\infty}$ has infinitely many product vectors. 
\end{theorem}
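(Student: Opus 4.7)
The plan is to exploit the polynomial representation from Subsection 1.4. A vector $\ket{\xi} = \sum_{s,t=0}^{d-1} a_{st} \ket{st}$ lies in $\mathcal{S}_P$ if and only if $\sum_{s+t=n} a_{st} = 0$ for every $0 \le n \le 2d-2$. Since $\ket{z_0} = \ket{00}$ and $\ket{z_\infty} = \ket{d-1,d-1}$ are the unique basis vectors of weights $0$ and $2d-2$ respectively, adjoining them to $\mathcal{S}_P$ exactly removes the constraints at those two extreme weights. Hence $\mathcal{S}_{\mathcal{P},0,\infty}$ is characterised by the condition $\sum_{s+t=n} a_{st} = 0$ for the intermediate range $1 \le n \le 2d-3$ only.

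Next I would translate the search for product vectors into a one-variable polynomial problem. A product vector $\ket{u} \otimes \ket{v}$ has polynomial representation $p(X) q(Y)$ with $\deg p, \deg q \le d-1$, and the weighted coefficient sum $\sum_{s+t=n} p_s q_t$ is precisely the coefficient of $X^n$ in the single-variable product $p(X) q(X)$. So the task reduces to producing a one-parameter family of polynomials $p, q \in \mathbb{C}[X]$ of degree at most $d-1$ whose product $p(X) q(X)$ has only a constant term and an $X^{2d-2}$ term.

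For this I would use the explicit family $p_\lambda(X) = X^{d-1} - \lambda$ and $q_\lambda(X) = X^{d-1} + \lambda$ indexed by $\lambda \in \mathbb{C}$: by difference of squares, $p_\lambda(X) q_\lambda(X) = X^{2d-2} - \lambda^2$, whose intermediate coefficients are manifestly zero. The resulting product vector is
\[
\ket{\xi_\lambda} = (\ket{d-1} - \lambda\ket{0}) \otimes (\ket{d-1} + \lambda\ket{0}),
\]
and comparing the coefficients of $\ket{d-1,d-1}$, $\ket{00}$ and $\ket{d-1,0}$ in $\ket{\xi_\lambda}$ versus $\ket{\xi_\mu}$ shows that distinct values of $\lambda$ produce linearly independent one-dimensional product subspaces. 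This furnishes uncountably many product vectors in $\mathcal{S}_{\mathcal{P},0,\infty}$.

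I do not foresee any real obstacle. Once the polynomial reformulation is set up and one notices that the two relaxed constraints are exactly those on the constant term and the $X^{2d-2}$ term, two-monomial products such as $X^{2d-2} - \lambda^2$ become admissible, and the rest is the obvious factorisation. In fact, many other one-parameter families exist as well, obtained by splitting the $(2d-2)$-th roots of $\lambda^2$ between $p_\lambda$ and $q_\lambda$ in different ways, so the space is far from being QCES.
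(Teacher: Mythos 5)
Your proposal is correct and is essentially the paper's own argument: after the (sound) polynomial reformulation, your family $(\ket{d-1}-\lambda\ket{0})\otimes(\ket{d-1}+\lambda\ket{0})$ is, up to scalars and the reparametrization $a=1/\lambda$, exactly the paper's exhibited family $(\ket{0}+a\ket{d-1})\otimes(\ket{0}-a\ket{d-1})$, whose cross terms cancel at weight $d-1$ so that it lies in $\mathcal{S}_{\mathcal{P},0,\infty}$. The only cosmetic slip is calling the distinct one-dimensional product subspaces ``linearly independent''; pairwise non-proportionality is what you show and is all that is needed.
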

\begin{proof} Let  $ a \in \mathbb{C}$. Then $\ket{\zeta_a}=(\ket{0}+ a \ket{d-1}) \otimes (\ket{0} - a \ket{d-1})\in \mathcal{S}_{\mathcal{P}, 0, \infty}$. Also it is a product vector.

\end{proof}
\begin{rem}
 For $d$ odd, $d =2p+1$ with $p \ge 1$, say,  we have another proof of \cref{th4.4} above. Let $a \in \mathbb{C}$.
\begin{eqnarray*}
\ket{\zeta_a}&=&\left(\ket{0} + a \ket{p} + \frac{a^2}{2} \ket{2p}\right) \otimes \left(\ket{0} - a \ket{p} + \frac{a^2}{2} \ket{2p}\right)\\
 &=& \ket{00} + \frac{a^4}{4} \ket{2p, 2p} +a \left(-\ket{0,p} + \ket{p,0}\right)
 + \frac{a^2}{2} \left( \ket{2p,0} + \ket{0,2p} -2 \ket{p,p}\right)\\
 && + \frac{a^3}{2} \left(\ket{p,2p} - \ket{2p,p} \right)\\
&=& \ket{z_0} + \frac{a^4}{4} \ket{z_\infty} + \ket{\eta},
\end{eqnarray*} 
where $\ket{\eta} \in \mathcal{S_P}$. Hence there are infinitely many product vectors in  $\mathcal{S}_{\mathcal{P}, 0, \infty}$.
 \end{rem}
 \begin{theorem}
For $\mathbb{C}^3 \otimes \mathbb{C}^3$ system, $0 \neq \lambda \in \mathbb{C}$, the space $\mathcal{S}_{\mathcal{P},\lambda,\infty}$  
 has infinitely many product vectors. 
\end{theorem}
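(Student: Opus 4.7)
The plan is to exhibit an explicit one-parameter family of pairwise projectively distinct product vectors inside $\mathcal{S}_{\mathcal{P},\lambda,\infty}$, which immediately forces its product index to be infinite. The construction parallels the Remark following \cref{th4.4}, but deformed around the van der Monde vector $\ket{z_\lambda}$ rather than around $\ket{z_0}$.

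First I would translate membership in $\mathcal{S}_{\mathcal{P},\lambda,\infty}$ into a condition on ``layer sums''. By \cref{rem3.1}(ii), $\mathcal{S}_P$ is exactly the kernel of the linear map sending $\sum_{s,t}c_{st}\ket{st}$ to its $5$-tuple of layer sums $\bigl(\sum_{s+t=n}c_{st}\bigr)_{n=0}^{4}$. Consequently $\ket{\zeta}\in\mathcal{S}_{\mathcal{P},\lambda,\infty}$ iff its layer sums have the form $\alpha(1,2\lambda,3\lambda^2,2\lambda^3,\lambda^4)+\beta(0,0,0,0,1)$ for some $\alpha,\beta\in\mathbb{C}$.

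Next I would apply this condition to a generic neat bipartite product vector
\[
\ket{\zeta}=(\ket{0}+a_1\ket{1}+a_2\ket{2})\otimes(\ket{0}+b_1\ket{1}+b_2\ket{2}),
\]
whose layer sums are $(1,\ a_1+b_1,\ a_1b_1+a_2+b_2,\ a_1b_2+a_2b_1,\ a_2b_2)$. Layer $0$ forces $\alpha=1$, layer $4$ is absorbed freely into $\beta$, and the middle three layers yield
\[
a_1+b_1=2\lambda,\qquad a_1b_1+a_2+b_2=3\lambda^2,\qquad a_1b_2+a_2b_1=2\lambda^3,
\]
i.e.\ three polynomial equations in four unknowns, so one anticipates a one-parameter family of solutions.

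The key step is to make this family explicit. Substituting $a_1=\lambda+t$, $b_1=\lambda-t$ for a parameter $t\in\mathbb{C}$ solves the first equation, and the remaining two reduce to a $2\times 2$ linear system in $(a_2,b_2)$ of coefficient determinant $a_1-b_1=2t$, giving the unique solution $a_2=\lambda^2+\tfrac{t(t+\lambda)}{2}$, $b_2=\lambda^2+\tfrac{t(t-\lambda)}{2}$ whenever $t\neq 0$. Each $t\in\mathbb{C}\setminus\{0\}$ thus yields a neat product vector $\ket{\zeta_t}\in\mathcal{S}_{\mathcal{P},\lambda,\infty}$, and because neat representatives are unique while the $\ket{1}$-coefficient $\lambda+t$ of the first factor is injective in $t$, the $\ket{\zeta_t}$ are pairwise projectively distinct. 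I do not anticipate any serious obstacle; the only watch-point is the nondegeneracy of the $(a_2,b_2)$-subsystem for $t\neq 0$, which is immediate from its determinant, and the hypothesis $\lambda\neq 0$ enters only to guarantee that $\mathcal{S}_{\mathcal{P},\lambda,\infty}$ is a genuine double perturbation distinct from the one treated in \cref{th4.4}.
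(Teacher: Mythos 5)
Your proposal is correct and follows essentially the same route as the paper: both translate membership in $\mathcal{S}_{\mathcal{P},\lambda,\infty}$ into the layer-sum (consistency) conditions on a neat product vector, with the top layer left free because of $\ket{z_\infty}$, and then exhibit an explicit one-parameter family of solutions. The only cosmetic difference is the parametrization — you perturb the degree-one coefficients to $\lambda\pm t$ and solve the nondegenerate $2\times 2$ linear system for $(a_2,b_2)$ directly for general $\lambda$, whereas the paper first reduces to $\lambda=1$ by scaling and keeps the degree-one coefficients at the van der Monde values, varying the degree-two coefficients subject to $q+\alpha/q=2$.
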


\begin{proof}
In view of \cref{rem3.2} (ii), it is enough to consider the case $\lambda=1$.  Let $0<\alpha <1$ and $ q \neq 0$  be real.  Then
\begin{eqnarray*}
\left(1 + X + q X^2 \right) \left(1 + Y + \frac{\alpha}{q} Y^2 \right) &=& 1 + X + Y  + qX^2 + XY  + \frac{\alpha}{q} Y^2 \\
&&+ \frac{\alpha}{q} XY^2   + q X^2 Y + \alpha X^2 Y^2.
\end{eqnarray*}
The corresponding product vectors $\left(\ket{0}+\ket{1}+q \ket{2}\right) \otimes \left(\ket{0}+ \ket{1}+\frac{\alpha}{q}\ket{2}\right)$ are  in  $\mathcal{S}_{\mathcal{P},1,\infty}$ if 
\begin{eqnarray*}
\frac{\alpha}{q} + q &=& 2, 
\end{eqnarray*}
which gives  solutions  $ q = 1 \pm \sqrt{1- \alpha}$ for $ 0 < \alpha <1$. The corresponding neat product vectors  are 
\begin{eqnarray*}
&&\left( \ket{0} + \ket{1} + (1 - \sqrt{1- \alpha}) \ket{2} \right) \otimes \left( \ket{0} + \ket{1} + (1 + \sqrt{1- \alpha})\ket{2} \right),\\
&&\left( \ket{0} + \ket{1} + (1 + \sqrt{1- \alpha}) \ket{2} \right) \otimes \left( \ket{0} + \ket{1} + (1 - \sqrt{1- \alpha}) \ket{2} \right).
\end{eqnarray*}
with $ 0 < \alpha <1$. Hence there are infinitely many product vectors in $\mathcal{S}_{\mathcal{P},1,\infty}$.
\par For general nonzero lambda the technique is similar to the one indicated by \cref{rem3.2}. 
\end{proof}

We now come to other kind of  double perturbation of $\mathcal{S}_P$ in the general multipartite set-up.
\begin{rem}\label{rem4.2}
\par  Let $\lambda, \mu \in \mathbb{C}$ with $\mu \neq \lambda \neq 0$. Then $\mu = \lambda \alpha$ for a unique $ \alpha \in \mathbb{C},\, \alpha\neq 1.$ We will aim at
 product vectors  in these doubly perturbations of $\mathcal{S}_P$ by the corresponding van der Monde vectors.
\begin{enumerate}[(i)]
\item Now 
\[\ket{z_\lambda} = \sum_{n=0}^{N'} \lambda^n\left( \sum_{\bm{i}: |\bm{i}|=n} \ket{\bm{i}} \right),  \]
and  
\[ \ket{z_\mu} = \sum_{n=0}^{N'} \mu^n\left( \sum_{\bm{i}: |\bm{i}|=n} \ket{\bm{i}} \right) = \sum_{n=0}^{N'} \lambda^n \alpha^n\left( \sum_{\bm{i}: |\bm{i}|=n} \ket{\bm{i}} \right)  \]
\item We proceed to figure out if the span $\mathcal{S}_{P, \lambda, \mu}$ of $\ket{z_\lambda}, \ket{z_\mu}$ and $\mathcal{S}_P$ has infinitely many product vectors. 
We first note that for $ \beta \in \mathbb{C} \backslash\{0,-1\}$,
\[ \ket{z_\lambda} + \beta \ket{z_\mu} = (1+ \beta) \ket{\bm{0}} + \lambda^{N'} (1+ \beta\alpha^{N'}) \ket{\bm{d}'}
+ \sum_{n=1}^{N'-1} \lambda^n (1+ \beta\alpha^n) \sum_{\bm{i}:|\bm{i}| =n} \ket{\bm{i}}.\]
Now 
\[\ket{\zeta_\beta} =\frac{1}{1+\beta}\left( \ket{z_\lambda} + \beta \ket{z_\mu} + \sum_{n=1}^{N'-1} \sum_{|\bm{i}|: |\bm{i}|= n} a_{\bm{i}}' \ket{\bm{i}} \right),\] 
where $\sum_{\bm{i}:|\bm{i}|=n} a_{\bm{i}}' =0$ for $1 \le n \le {N'}-1$, is an element of the linear span of $\ket{z_\lambda}, \, \ket{z_\mu}$ and $\mathcal{S}_P$ to within  a scalar multiple, that is not in the linear span of $\ket{z_\mu}$ and  $\mathcal{S}_P$, or of $\ket{z_\lambda}$ and $\mathcal{S}_P$. 
Let $a_{\bm{i}}' = \lambda^{|\bm{i}|}(1+\beta) a_{\bm{i}}, \, {\bm{i}} \in \mathcal{I}, \, 1 \le |{\bm{i}}| \le N'-1$. Then 
\[ \ket{\zeta_\beta} = \ket{\bm{0}} + \lambda^{N'}   \frac{(1+ \beta\alpha^{N'})}{1+\beta}\ket{\bm{d}'}
+ \sum_{n=1}^{N'-1} \lambda^n \frac{(1+ \beta\alpha^n)}{1+\beta} \sum_{\bm{i}:|\bm{i}| =n} \ket{\bm{i}} +  \sum_{n=1}^{N'-1} \lambda^n \sum_{\bm{i}: |\bm{i}|= n} a_{\bm{i}} \ket{\bm{i}}.\]
\item We now use \cref{lem3.1} freely. It is enough to consider the case $\lambda=1$ and we do so.   As in \cref{rem3.2} above,  $ \ket{\zeta_\beta} = \ket{\xi} $ if and only if 
\begin{eqnarray}
1  &=& \prod_{j=1}^k b_{j0} \label{eq:o}\\
\frac{1+\beta\alpha^{N'}}{1+\beta} &=&  \prod_{j=1}^k b_{j,d_j-1} \label{eq:N}\\
\# \mathcal{I}_n \frac{1+ \beta \alpha^n}{1+\beta} &=&\sum_{\bm{i} \in \mathcal{I}_n} \prod_{j=1}^k b_{j i_j}, \quad  1 \le n \le N'-1. \label{eq:n}
\end{eqnarray}
The problem is equivalent to finding 
\[ \bm{b} = [\bm{b}_{j}]_{j=1}^k =[[b_{j t}]]_{\substack{1 \le j \le k \\ 0 \le t \le d_j -1}},\]
that satisfies Equations \eqref{eq:o} -- \eqref{eq:n}.

One needs to check consistency of the equations under different parameters. 
\item  We are not able to prove them in this generality.  
\item We consider the case that Consistency conditions in (iii) are satisfied for $\beta$.  We take $\ket{\zeta_\beta}$ to be in its neat form. (4.3) with $n=1$ gives that 
\[k\frac{1+\beta \alpha}{1+\beta} = \sum_{j=1}^k b_{j,1}. \]
\item Now consider the case when $\beta, \, \beta'$  are satisfying the requirements in (iii) and part (v) above. Because $\alpha\neq 1, \,\dfrac{1+\beta \alpha}{1+\beta}$ is not equal to $\dfrac{1+\beta' \alpha}{1+\beta'}$. So we have that the corresponding $b_{j,1}$'s or $\beta$  have sum not equal to the one for those for $\beta'$. Hence $\ket{\zeta_\beta}\neq \ket{\zeta_{\beta'}}$. 

\item We now proceed  to work out details for $k=2, \,d_1 =d_2=3$ and $k=3,\, d_j=2$ for $j=1,\,2,\, 3$ in the next  few results. 
\end{enumerate}

\end{rem}
\begin{theorem}
For $\mathbb{C}^3 \otimes \mathbb{C}^3$,  for $\lambda, \mu \in \mathbb{C},\, 0 \neq \lambda \neq \mu $, the space $\mathcal{S}_{P,\lambda,\mu}$ contains infinitely many product vectors. 
\end{theorem}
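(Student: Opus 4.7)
The plan is to implement the framework of \cref{rem4.2} explicitly in the bipartite qutrit setting. By \cref{rem3.2}(ii), I may assume $\lambda = 1$ and write $\mu = \alpha$ for some $\alpha \in \mathbb{C}\setminus\{0,1\}$. For each scalar parameter $\beta \in \mathbb{C}\setminus\{-1\}$, I will look for a neat product vector of the form $(\ket{0} + a_1\ket{1} + a_2\ket{2}) \otimes (\ket{0} + b_1\ket{1} + b_2\ket{2})$ lying in $\mathcal{S}_{P,1,\alpha}$ whose decomposition uses weight $\beta$ on $\ket{z_\alpha}$. Setting $t_n := (1+\beta\alpha^n)/(1+\beta)$ and using $N' = 4$ together with the cardinalities $\#\mathcal{I}_1 = 2,\ \#\mathcal{I}_2 = 3,\ \#\mathcal{I}_3 = 2$, the consistency conditions of \cref{rem4.2}(iii) reduce to the polynomial system
\begin{equation*}
a_1 + b_1 = 2t_1, \qquad a_2 + a_1 b_1 + b_2 = 3 t_2, \qquad a_1 b_2 + a_2 b_1 = 2 t_3, \qquad a_2 b_2 = t_4.
\end{equation*}

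The heart of the argument is to eliminate variables to obtain a single univariate equation. I will set $c := a_1 - t_1$ (absorbing the first equation), which gives $a_2 + b_2 = c^2 + 3t_2 - t_1^2$; combined with $a_2 b_2 = t_4$ this realises $(a_2, b_2)$ as the roots of a quadratic whose discriminant $(b_2 - a_2)^2$ is explicit in $c$. Rearranging the third equation as $c(b_2 - a_2) = 2t_3 - t_1(c^2 + 3t_2 - t_1^2)$ and squaring yields a cubic in $u := c^2$, with leading coefficient $1$ and constant term $-(2t_3 - 3t_1 t_2 + t_1^3)^2$. This cubic has a complex root $u_\beta$ for every $\beta$; I then pick $c = \sqrt{u_\beta}$ with the sign chosen so that the pre-squared equation $c(b_2 - a_2) = 2t_3 - t_1(c^2 + 3t_2 - t_1^2)$ holds, and recover $a_1, b_1, a_2, b_2$. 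This produces the desired neat product vector $\ket{\zeta_\beta} \in \mathcal{S}_{P,1,\alpha}$.

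To conclude, I invoke \cref{rem4.2}(vi): since $\alpha \neq 1$, the map $\beta \mapsto t_1(\beta) = (1+\beta\alpha)/(1+\beta)$ is a non-degenerate M\"obius function, hence injective on $\mathbb{C}\setminus\{-1\}$. Distinct values of $\beta$ therefore yield distinct sums $a_1 + b_1 = 2 t_1$ and consequently distinct neat product vectors $\ket{\zeta_\beta}$. As $\beta$ ranges over $\mathbb{C}\setminus\{-1\}$ I obtain infinitely many product vectors, and by \cref{rem3.2}(ii) the same holds for $\mathcal{S}_{P,\lambda,\mu}$ for arbitrary $0 \neq \lambda \neq \mu$. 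The main obstacle I expect is handling the exceptional $\beta$ at which the reduction degenerates: namely (i) values where the constant term $-(2t_3 - 3t_1 t_2 + t_1^3)^2$ vanishes, forcing $u_\beta = 0$ and requiring a separate compatibility check for $c = 0$, and (ii) spurious solutions potentially introduced by the squaring step, handled by the explicit sign convention on $c$. Since $B(\beta) := 2t_3 - 3t_1 t_2 + t_1^3$ is a non-identically-zero rational function of $\beta$, both defects are confined to a finite subset of $\mathbb{C}$, so infinitely many admissible $\beta$ remain and the theorem follows.
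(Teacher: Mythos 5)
Your proposal is correct, and it follows the paper's overall strategy --- fix the weight $\beta$, impose the level-sum consistency conditions of \cref{rem4.2}(iii) for a neat product vector, and then obtain infinitely many distinct vectors from the injectivity of $\beta \mapsto t_1(\beta)=(1+\beta\alpha)/(1+\beta)$, exactly as in \cref{rem4.2}(vi) --- but your execution of the key existence step is genuinely different and cleaner. The paper solves the same system \eqref{eq:oa}--\eqref{eq:na} by dividing by $b_2$ (so it must arrange $b_2\neq 0$ and restrict to $1+\beta\alpha^s\neq 0$), setting $u=b_1/b_2$, and reducing to a degree-six polynomial $q(x)$ in $x=b_2$, whose roots must then be shown to avoid several degeneracies; the case $\alpha=-1$ is handled separately with explicit product vectors. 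Your symmetric substitution $a_1=t_1+c$, $b_1=t_1-c$, together with the sum/product encoding of $(a_2,b_2)$, reduces everything to a monic cubic in $u=c^2$, which always has a root, and your sign/labelling convention does legitimately undo the squaring: for a nonzero root $u$ one has $(b_2-a_2)^2=B'^2/u$ with $B'=2t_3-t_1(u+3t_2-t_1^2)$, and swapping the two roots of the quadratic for $(a_2,b_2)$ fixes the sign so that $c(b_2-a_2)=B'$. This buys a shorter argument with far fewer excluded parameters, and it parallels the symmetric trick the paper itself uses in \cref{rem3.3}(v) for the single perturbation. Two small points to tighten: first, the assertion that $B(\beta)=2t_3-3t_1t_2+t_1^3$ is not identically zero does require a check, since $B(0)=0$ and $B(\beta)\to 0$ as $\beta\to\infty$; expanding gives $B(\beta)=\beta(\alpha-1)^2\bigl((2\alpha+1)+(\alpha+2)\beta\bigr)/(1+\beta)^3$, which for $\alpha\neq 1$ vanishes at no more than two values of $\beta$ --- and in fact when $B(\beta)=0$ the choice $c=0$ already solves the system, so no exclusion is needed there at all. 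Second, the theorem permits $\mu=0$ (only $\lambda\neq 0$ and $\lambda\neq\mu$ are assumed), so you should not restrict to $\alpha\in\mathbb{C}\setminus\{0,1\}$; fortunately your argument applies verbatim at $\alpha=0$, since $t_1(\beta)=1/(1+\beta)$ is still injective and $B\not\equiv 0$ there as well.
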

\begin{proof}
 In view of \cref{rem3.2}, it is enough to confine our attention to the case $\lambda=1$.
 For $0 \le s \le 2$, we put $b_s =b_{1,s}$ and $c_s=b_{2,s}$. Then  the equations \eqref{eq:o} -- \eqref{eq:n} take the form 
\begin{eqnarray}
b_0  c_0 &=& 1, \label{eq:oa}\\
b_0  c_1 + b_1  c_0 &=& \frac{2(1 + \beta \alpha)}{1+ \beta}, \label{eq:Na}\\
b_0  c_2 + b_1c_1 + b_2  c_0 &=& \frac{3 (1 + \beta \alpha^2)}{1+ \beta}, \label{eq:ia}\\
b_1  c_2 + b_2  c_1 &=& \frac{2 (1 + \beta \alpha^3)}{1+ \beta}, \label{eq:ib}\\
b_2c_2 &=&  \frac{1 + \beta \alpha^4}{1+ \beta}.  \label{eq:na}
\end{eqnarray}
We confine our attention to the case when $1 + \beta \alpha^s \neq 0$, where $s=0,\,1,\,2,\,3,\,4$.  We may take  $b_0=c_0=1$ satisfying equation \eqref{eq:oa}. To begin with, choose $b_2 \neq 0$ arbitrarily, and set $c_2 = \frac{ 1 + \beta \alpha^4}{b_2(1+\beta)}$ using \eqref{eq:na}. Now choose $b_1$ arbitrary and use \eqref{eq:Na} to choose $c_1 = \frac{2(1 + \beta \alpha)}{1+\beta} -b_1$.  From \eqref{eq:ia} we get, 
\begin{equation}\label{eq:ia'}
\frac{ 1 + \beta \alpha^4}{b_2}+ b_2(1+ \beta) +b_1(2(1 + \beta \alpha) -b_1(1+\beta)) = 3 (1 + \beta \alpha^2).
\end{equation}
From \eqref{eq:ib} we get 
\begin{equation}\label{eq:ib'}
\frac{b_1 (1 + \beta \alpha^4)}{b_2}  + b_2 (2(1 + \beta \alpha) -b_1(1+\beta)) = 2 (1 + \beta \alpha^3)
\end{equation}
Let $b_1 = u b_2$. Then  \eqref{eq:ib'} takes the form 
\begin{equation}\label{eq:ib''}
u((1 + \beta \alpha^4) - b_2^2 (1+\beta)) = 2 (1 + \beta \alpha^3) -2 b_2 (1+\beta \alpha).
\end{equation}
Similarly \eqref{eq:ia'} takes the form 
\begin{equation}\label{eq:ia''}
(1 + \beta \alpha^4) +b_2^2(1+ \beta) +ub_2^2(2(1 + \beta \alpha) - ub_2(1+\beta)) =3 b_2 (1 + \beta \alpha^2).
\end{equation}
{\bf Equivalent aim:} In view of \cref{rem4.2}(vi), the proof will be complete if we show that for an infinite set $S$ of $\beta$'s in $\mathbb{C}$, \eqref{eq:ib''} and \eqref{eq:ia''} have a solution $\nu_\beta =(b_2 ,u)$.  We shall show this in a few steps. 
\begin{enumerate}[(i)]
\item By \eqref{eq:ib''} $u = 0$ forces 
\[b_2 = \frac{1+\beta \alpha^3}{1 + \beta \alpha},\]
and substituting this obvious solution of \eqref{eq:ib''} in \eqref{eq:ia''} we must have 

\[ (1 + \beta \alpha^4) + \left(\frac{1+\beta \alpha^3}{1 + \beta \alpha}\right)^2 (1 + \beta) = 3\left( \frac{1+\beta \alpha^3}{1 + \beta \alpha}\right) (1+ \beta \alpha^2).\]
Simplifying we get  a cubic equation in $\beta$ if $\alpha \neq 0$ and a linear equation (viz., $\beta -1 =0$)  in $\beta$ if $\alpha =0$. So it has at most three solutions. We can ignore them and confine our attention to solutions for which $u \neq 0$.
\item The right hand side of \eqref{eq:ib''} is zero if and only if $b_2 = \frac{1+\beta \alpha^3}{1 + \beta \alpha}$. In that case, left hand side of \eqref{eq:ib''} is 
\begin{eqnarray*}
&=& u \left[ (1 + \beta \alpha^4) - \left(\frac{1+\beta \alpha^3}{1 + \beta \alpha}\right)^2 (1 + \beta)\right]\\
&=& \frac{u \beta (1-\alpha)^3 (1 + \alpha) }{(1 + \beta \alpha)^2} [-1 + \beta \alpha^2], \quad \text{ after simplification.}
\end{eqnarray*}
This is zero ( using $u \neq 0,\, \beta \neq 0 \neq 1- \alpha$) if and only if $\alpha =-1$ or $\beta \alpha^2=1$. We can further combine our attention to those $\beta$'s for which $\beta \alpha^2 \neq 1$. 
\item We now come to the case $\alpha = -1$ and $u \neq 0$ is arbitrary. Then $b_2 =1$. We substitute this in  \eqref{eq:ia''} to obtain 
 \[ 1 + \beta +1 + \beta +u ( 2(1 - \beta) - u (1 + \beta )) = 3( 1 + \beta ).\]
 i.e. 
 \[ u^2 ( 1 + \beta ) - 2u (1 - \beta ) + (1 + \beta ) =0.\]
 We ignore the values $\beta =\pm1$. Then the above equation has solutions 
 \[u = \frac{1-\beta}{1+\beta} \pm \frac{2 \sqrt{-\beta}}{1+\beta}\]
 i.e. 
 \[ b_1 = \frac{1- \beta \pm 2 \sqrt{-\beta}}{1 + \beta}.\]
 Here for a non-zero complex number $z =r \exp(\imath \theta),\, -\pi < \theta \le \pi; \,\sqrt{z}$ is taken to be $r^{\frac{1}{2}} \exp\left(\imath \frac{\theta}{2}\right)$.
 So 
 \[c_1 = \frac{2(1- \beta)  - ( 1 -\beta \pm 2 \sqrt{-\beta})}{1+\beta} =\frac{1 - \beta \mp 2 \sqrt{-\beta}}{1+\beta}.\]
 Moreover, $c_2 = 1 $. Thus we have the neat product vectors 
 \[ \left(\ket{0} + \frac{1 -\beta + 2 \sqrt{-\beta}}{1+\beta} \ket{1} + \ket{2}\right) \otimes \left(\ket{0} + \frac{1 -\beta - 2 \sqrt{-\beta}}{1 + \beta}\ket{1} +  \ket{2}\right), \]
 and 
  \[ \left(\ket{0} + \frac{1 -\beta - 2 \sqrt{-\beta}}{1+\beta} \ket{1} + \ket{2}\right) \otimes \left(\ket{0} + \frac{1 -\beta + 2 \sqrt{-\beta}}{1 + \beta}\ket{1} +  \ket{2}\right). \]
  Hence for $0 \neq \lambda \in \mathbb{C}$ there are infinitely many product vectors in $\mathcal{S}_{P, \lambda, -\lambda}$. 

\item Let, for $x \in \mathbb{C}$,
\begin{eqnarray*}
v(x) &=& x^2 - \frac{1+\beta \alpha^4}{1 + \beta }, \quad \text{ and }\\
r(x) &=& 2 \frac{1+\beta \alpha}{1+\beta} \left(x - \frac{1+\beta \alpha^3}{1 + \beta \alpha}\right). 
\end{eqnarray*}
Then \eqref{eq:ib''} can be rewritten as 
\[uv(x) =r(x), \quad \text{ with } x=b_2,\]
which, under the assumptions set out in (i) and (ii), becomes 
\begin{equation}\label{eq4.13}
x^2 \neq \frac{1+\beta \alpha^4}{1 + \beta }, \quad x \neq \frac{1+\beta \alpha^3}{1 + \beta \alpha}, \quad u = \frac{r(x)}{v(x)}.
\end{equation}
\item We combine \eqref{eq4.13} with \eqref{eq:ia''} to replace  \eqref{eq:ia''}  by 
\begin{eqnarray}\label{eq:ia'''}
p(x) &=&(1+ \beta \alpha^4) + x^2 (1+\beta) + \frac{r(x)}{v(x)} x^2 . 2(1 + \beta \alpha) \nonumber\\
&&- \left( \frac{r(x)}{v(x)} \right)^2 x^3 (1+\beta) -3x (1+\beta \alpha^2) \\
&=& 0.\nonumber
\end{eqnarray}
Now $p\left(\frac{1+\beta \alpha^3}{1 + \beta \alpha}\right) =0$ if and only of 
\[1+\beta\alpha^4 +\left(\frac{1+\beta \alpha^3}{1 + \beta \alpha}\right)^2(1+\beta) - 3\frac{1+\beta \alpha^3}{1 + \beta \alpha}(1+\beta\alpha^2) =0,\]
if and only if 
\[(1+\beta\alpha^4) (1+\beta\alpha)^2 +(1+\beta\alpha^3)^2 (1+\beta) - 3 (1+\beta\alpha^3)(1+\beta\alpha^2)(1+\beta\alpha)=0.\]
We discard the three or less values of $\beta$ given by this polynomial in $\beta$ and proceed with the discussion with $p\left(\frac{1+\beta \alpha^3}{1 + \beta \alpha}\right) \neq0$. 
\item Let $q(x) =\frac{1}{1+\beta} (v(x))^2 p(x)$. Then  
\begin{eqnarray*}
q(x) &=& \left(x^2 + \frac{1+\beta \alpha^4}{1 + \beta}\right) \left(x^2 - \frac{1+\beta \alpha^4}{1 + \beta }\right)^2\\
&& + \left( 2 \frac{1+\beta \alpha}{1 + \beta}\right)^2 \left(x - \frac{1+\beta \alpha^3}{1 + \beta\alpha }\right)\left(x^2 - \frac{1+\beta \alpha^4}{1 + \beta}\right)x^2\\
&& - \left( 2 \frac{1+\beta \alpha}{1 + \beta}\right)^2 \left(x - \frac{1+\beta \alpha^3}{1 + \beta\alpha }\right)x^3 - 3x \left( \frac{1+\beta \alpha^2}{1 + \beta}\right)  \left(x^2 - \frac{1+\beta \alpha^4}{1 + \beta}\right)^2 \\
&=& x^6 +  \left(\frac{1+\beta \alpha^4}{1 + \beta}\right)^3 + x s(x), \quad \text{ say, } 
\end{eqnarray*}
where $s(x)$ is a polynomial of degree $\le 4$. So $q$ can be thought of as a function of $x$ varying in $\mathbb{C}$ (after obvious extension to omitted points as in \eqref{eq4.13} in (iv) above). 
Therefore, $q$ has six zeros, say $\delta_j^{(\beta)}, \, 1 \le j \le 6$ counted with multiplicity and 
\[\prod_{j=1}^6 \delta_j^{(\beta)} = \left(\frac{1+\beta \alpha^4}{1 + \beta}\right)^3.\]
As a consequence, $\delta_j^{(\beta)}  \neq 0,\, 1 \le j \le 6$. Now 
\[ q\left(\frac{1+\beta \alpha^3}{1 + \beta\alpha}\right) = v \left(\frac{1+\beta \alpha^3}{1 + \beta\alpha}\right) p\left(\frac{1+\beta \alpha^3}{1 + \beta\alpha}\right) \neq 0.\]
Also, if at all,  ${\delta_j^{(\beta)}}^2 = \frac{1+\beta \alpha^4}{1 + \beta}$ for some $j$, say $j'$, then 
\[0= q\left( \delta_{j'}^{(\beta)}\right) = -\left(\frac{2(1+\beta \alpha)}{1 + \beta}\right)^2 \left( \delta_{j'}^{(\beta)} - \frac{1+\beta \alpha^3}{1 + \beta\alpha}\right)^2{\delta_{j'}^{(\beta)}}^3.\]
So $\delta_{j'}^{(\beta)} = \frac{1+\beta \alpha^3}{1 + \beta\alpha}$ or $0$. But as we have seen earlier $\delta_{j'}^{(\beta)} \neq 0$. Also $\beta$'s satisfying $\left(\frac{1+\beta \alpha^3}{1 + \beta\alpha}\right)^2 = \frac{1+\beta \alpha^4}{1 + \beta}$ has already been taken out from the consideration. Hence any of the $\delta_{j}^{(\beta)}$'s work as $b_2$ to give us a corresponding $\nu_\beta$. This accomplishes the equivalent aim. 

\item To elaborate the last statement in  (vi) above,  we obtain, for each $j$,  with $1 \le j \le 6$, a   neat product vector, say, $\ket{\zeta_{\beta,j}}$, thus we have the  non-empty set $Z_\beta$ of neat product vectors, viz.,  $\{\ket{\zeta_{\beta,j}}: 1 \le j \le 6\}$ of cardinality $\le6$.
 Let $S$ be the set of surviving  $\beta$'s as per (i) to (vi) above. Then $S$ is infinite. Let $\beta \neq \beta'$ be in $S$. 
Then by \cref{rem4.2}(vi), $Z_\beta \cap Z_{\beta'} = \emptyset$.  Hence the set $\bm{Z} = \bigcup_{\beta\in S} \bm{Z}_\beta $ is infinite.  
\end{enumerate}
This completes the proof.  
\end{proof}
\begin{theorem}
For $\mathbb{C}^2 \otimes \mathbb{C}^2 \otimes \mathbb{C}^2$,  for $\lambda, \mu \in \mathbb{C},\, 0 \neq \lambda \neq \mu $, the space $\mathcal{S}_{P,\lambda,\mu}$ contains infinitely many product vectors. 
\end{theorem}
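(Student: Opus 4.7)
The plan is to parallel the three-dimensional analysis of Remark 4.2 and the preceding $\mathbb{C}^3\otimes\mathbb{C}^3$ theorem, exploiting the coincidence in three qubits that $\#\mathcal{I}_n = \binom{3}{n}$ for $0 \le n \le N' = 3$. For $k=3$ and $d_j = 2$, a neat product vector has the form $\ket{\xi} = \bigotimes_{j=1}^{3}(\ket{0} + b_j\ket{1})$, and its coefficient on layer $\mathcal{I}_n$ is exactly the elementary symmetric polynomial $e_n(b_1,b_2,b_3)$. The scaling $b_{j,t}\mapsto \lambda^t c_{j,t}$ of Remark 3.2(ii) extends to the double-perturbation setting verbatim (because $\ket{z_\mu}$ contributes $\lambda^n\alpha^n$ to each layer-$n$ coefficient), so I may and will assume $\lambda = 1$ and write $\mu = \alpha \in \mathbb{C}\setminus\{0,1\}$.

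For each $\beta\in\mathbb{C}\setminus\{-1\}$ I seek an element of $\mathcal{S}_{P,1,\alpha}$ of the form $\ket{\xi} = \frac{1}{1+\beta}(\ket{z_1} + \beta\ket{z_\alpha}) + \ket{\eta}$ with $\ket{\eta}\in\mathcal{S}_P$ which is a neat product vector. Since $\ket{\eta}$ has vanishing layer-sums, matching layer-sums as in Remark 4.2(iii) reduces the consistency conditions to
\[
e_1 = 3\tfrac{1+\beta\alpha}{1+\beta},\qquad e_2 = 3\tfrac{1+\beta\alpha^2}{1+\beta},\qquad e_3 = \tfrac{1+\beta\alpha^3}{1+\beta},
\]
where $e_s = e_s(b_1,b_2,b_3)$. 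By Vieta's formulas, $(b_1,b_2,b_3)$ is precisely the multiset of roots of the cubic $t^3 - e_1 t^2 + e_2 t - e_3 = 0$, and such roots exist in $\mathbb{C}$ unconditionally by the Fundamental Theorem of Algebra. Hence every $\beta\neq -1$ produces at least one product vector $\ket{\xi_\beta}\in\mathcal{S}_{P,\lambda,\mu}$.

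To conclude that the family $\{\ket{\xi_\beta}\}$ contains infinitely many distinct vectors, note that each has $\ket{000}$-coefficient equal to $1$, so two such coincide as one-dimensional subspaces only if the ordered triples $(b_1,b_2,b_3)$ agree, and in particular only if $e_1$ agrees. Since $\alpha\neq 1$, the map $\beta\mapsto \frac{1+\beta\alpha}{1+\beta}$ is a non-degenerate M\"obius transformation, hence injective on $\mathbb{C}\setminus\{-1\}$; thus distinct $\beta$'s yield distinct $e_1$'s and so distinct product vectors, giving an infinite family in $\mathcal{S}_{P,\lambda,\mu}$. The only conceivable obstacle is the legitimacy of the neat-form assumption, i.e., that each tensor factor has non-zero $\ket{0}$-component; but this is automatic because $\ket{\eta}\in\mathcal{S}_P$ contributes $0$ to the $\ket{000}$ coordinate, forcing the full $\ket{000}$ coefficient of $\ket{\xi_\beta}$ to be $1\neq 0$. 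In contrast to the $\mathbb{C}^3\otimes\mathbb{C}^3$ case, where the consistency system was genuinely over-determined and required delicate root-counting, here the three unknowns $b_1,b_2,b_3$ exactly match the three symmetric equations and Vieta solves the system uniformly in $\beta$.
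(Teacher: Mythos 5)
Your proposal is correct and follows essentially the same route as the paper: reduce to $\lambda=1$ via the scaling of Remark 3.2(ii), match layer sums so that the neat-product parameters $(b_1,b_2,b_3)$ satisfy $e_1=3\tfrac{1+\beta\alpha}{1+\beta}$, $e_2=3\tfrac{1+\beta\alpha^2}{1+\beta}$, $e_3=\tfrac{1+\beta\alpha^3}{1+\beta}$, realize them as roots of the corresponding cubic, and use $\alpha\neq 1$ to see that distinct $\beta$ give distinct coefficient tuples and hence infinitely many product vectors. Your added justifications (injectivity of the M\"obius map $\beta\mapsto\tfrac{1+\beta\alpha}{1+\beta}$ and the automatic legitimacy of the neat form via the $\ket{000}$ coefficient) only make explicit what the paper leaves implicit.
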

\begin{proof}
We have to count the number of product states in the span of $\mathcal{S_P}, \ket{z_\lambda}$ and  $\ket{z_\mu}$ using the notation as earlier from \cref{rem4.2} onwards and  in \cref{rem3.2}. We have $N'=3$.   Consider $0 \neq \beta\in\mathbb{C}$ such that $1 + \beta \alpha^s \neq 0, \, s=1,2,3$. 
Furthermore we may take the value $b_{j0} =1,\, j =1,2,3$ to try to get a neat product vector $(\ket{0} + a \ket{1}) \otimes (\ket{0} + b \ket{1}) \otimes (\ket{0} + c \ket{1})$ where $a=b_{11},\, b=b_{21},$ and $c=b_{31}$. The consistency conditions take the following form 
\begin{align*}
3\left(\frac{1+\beta \alpha}{1+\beta}\right) &= a+b +c,\\
3 \left(\frac{1+\beta \alpha^2}{1+\beta} \right)&= ab+bc +ca,\\
\frac{1+\beta \alpha^3}{1+\beta} &= ab c. 
\end{align*}
So $a,\,b,\,c$ are roots of the equation 
\[z^3 - 3\left(\frac{1+\beta \alpha}{1+\beta}\right) z^2 + 3\left(\frac{1+\beta \alpha^2}{1+\beta}\right) z - \left(\frac{1+\beta \alpha^3}{1+\beta}\right) =0.\]
Because $\alpha\neq 1$, for distinct values of permitted $\beta$ the corresponding coefficient-tuples in the above equation are distinct, and therefore, the correspinding sets $\{a,b,c\}$ are distinct. Hence there are infinitely many product vectors in the span of $\ket{z_\lambda}, \, \ket{z_\mu}$, and $\mathcal{S_P}$.

\end{proof}

\section{Conclusion}
\par We have studied the problem of identifying and enumerating  pure product states  in a  perturbation space $T$  of a maximal dimension  completely entangled subspace $\mathcal{S}$ in a  multipartite system $\mathcal{H}$ of dimension  $D$ by a product vector $\ket{\xi}$ in $\mathcal{S}^\perp$. We distinguish them via  triples $(D, t,\tau)$ where $t$ and $\tau$  are the  dimension and product index of $T$ respectively. We find that many related double perturbation spaces have infinitely many pure product states and speculate if all such spaces do have the property.

\section*{Acknowledgement}
RS acknowledges financial support from {\sf DST/ICPS/QuST/Theme-2/2019/}General Project number {\sf Q-90}. The authors thank Tal Mor and  Andreas Winter for useful discussion over email.

\bibliographystyle{acm}
\bibliography{biblio}

\end{document}